\documentclass[a4paper,UKenglish,cleveref, autoref,thm-restate]{lipics-v2021}

\newtheorem{uccp}{Reduction Rule ECGP}
\newtheorem{buccp}{Reduction Rule BECGP}
\usepackage{xcolor}
\DeclareRobustCommand{\legendbox}[1]{%
  \tikz[baseline=-0.6ex]{\node[#1, minimum height=0.3cm, minimum width=0.3cm, inner sep=0pt] {};}%
}
\usepackage{todonotes}

\usepackage{xspace}
\usepackage{amsmath}
\usepackage{amssymb}

\newcommand{\Z}{\ensuremath{\mathbb{Z}}}

\newcommand{\vc}{\mathsf{vc}}
\newcommand{\fvs}{\mathsf{fvs}}
\newcommand{\fes}{\mathsf{fes}}
\newcommand{\tw}{\mathsf{tw}}
\newcommand{\pw}{\mathsf{pw}}
\newcommand{\td}{\mathsf{td}}
\newcommand{\cw}{\mathsf{cw}}
\newcommand{\mw}{\mathsf{mw}}
\newcommand{\nd}{\mathsf{nd}}
\newcommand{\vi}{\mathsf{vi}}
\newcommand{\vdc}{\mathsf{vdc}}
\newcommand{\cvd}{\mathsf{cvd}}
\newcommand{\vdp}{\mathsf{vdp}}
\newcommand{\vds}{\mathsf{vds}}
\newcommand{\mln}{\mathsf{mln}}

\usetikzlibrary{patterns}
\usetikzlibrary{patterns.meta}

\title{Parameterized Complexity of
Edge-Constrained Graph Partitioning} %TODO Please add

\author{Ajinkya Gaikwad}{Czech Technical University in Prague
Prague, Czech Republic}{ajinkya.gaikwad@fit.cvut.cz}{https://orcid.org/0000-0002-7514-0708}{}%TODO mandatory, please use full name; only 1 author per \author macro; first two parameters are mandatory, other parameters can be empty. Please provide at least the name of the affiliation and the country. The full address is optional. Use additional curly braces to indicate the correct name splitting when the last name consists of multiple name parts.

\author{Jan Pokorný}{Czech Technical University in Prague
Prague, Czech Republic}{jan.pokorny@fit.cvut.cz}{https://orcid.org/0000-0003-3164-0791}{}

\author{Tomáš Valla}{Czech Technical University in Prague
Prague, Czech Republic}{Tomas.Valla@fit.cvut.cz}{https://orcid.org/0000-0003-1228-7160}{}

\authorrunning{A. Gaikwad et al.} %TODO mandatory. First: Use abbreviated first/middle names. Second (only in severe cases): Use first author plus 'et al.'

\Copyright{Jane Open Access and Joan R. Public} %TODO mandatory, please use full first names. LIPIcs license is "CC-BY";  http://creativecommons.org/licenses/by/3.0/

\ccsdesc[500]{Theory of computation~Fixed parameter tractability} %TODO mandatory: Please choose ACM 2012 classifications from https://dl.acm.org/ccs/ccs_flat.cfm 

\keywords{Parameterized Complexity, FPT, Treewidth, Graph Partitioning} %TODO mandatory; please add comma-separated list of keywords

\category{} %optional, e.g. invited paper

\relatedversion{} %optional, e.g. full version hosted on arXiv, HAL, or other respository/website
\funding{Research supported by the Czech Science Foundation Grant no.~24-12046S. This work was co-funded by the European Union under the project Robotics and Advanced Industrial Production (reg. no.~CZ.02.01.01/00/22\_008/0004590).}
\renewcommand{\linenumbers}{}
\nolinenumbers
\EventEditors{John Q. Open and Joan R. Access}
\EventNoEds{2}
\EventLongTitle{42nd Conference on Very Important Topics (CVIT 2016)}
\EventShortTitle{CVIT 2016}
\EventAcronym{CVIT}
\EventYear{2016}
\EventDate{December 24--27, 2016}
\EventLocation{Little Whinging, United Kingdom}
\EventLogo{}
\SeriesVolume{42}
\ArticleNo{23}
\begin{document}
\maketitle

%TODO mandatory: add short abstract of the document
\begin{abstract}
We study the \emph{Edge-Constrained Graph Partitioning Problem}
(\textsc{ECGP}), where the
the goal is to partition the vertices of a graph into \(r\) parts such that
every part has the number of edges at least \(\gamma\). We also
consider a balanced variant (\textsc{BECGP}), where all parts are required to
have equal size, and signed variants in which we measure the
difference between the number of positive and negative edges.

We first show that both \textsc{ECGP} and \textsc{BECGP} remain NP-hard for fixed constant values of $\gamma$, and that \textsc{BECGP} remains NP-hard for fixed constant values of $r$.
We therefore study their parameterized complexity with respect to the natural parameters \(r\), \(\gamma\), and several structural graph parameters. For the natural parameterization by \(r+\gamma\), we show that both \textsc{ECGP} and \textsc{BECGP} admit a polynomial kernel. We further consider structural graph parameters. On the positive side, we obtain FPT algorithms for \textsc{ECGP} and \textsc{BECGP} parameterized by maximum leaf number, by deletion distance to a clique, by cluster vertex deletion set plus $\gamma$ and by vertex integrity. We additionally show that \textsc{ECGP} is FPT when parameterized by vertex deletion to stars plus $\gamma$ and by vertex deletion to paths plus $\gamma$.

In contrast, we prove that \textsc{ECGP} and \textsc{BECGP} remain W[1]-hard when parameterized by \(r\) together with several structural parameters. In particular, the reductions for feedback edge set, vertex deletion to stars, vertex deletion to paths, and modular width establish hardness even when the corresponding parameter is \(0\), while we additionally prove W[1]-hardness parameterized by cluster vertex deletion set plus \(r\). Both the problems remain W[1]-hard when parameterized by cliquewidth even when $\gamma=3$. For signed graphs, we show NP-hardness for both variants even when \(r+\gamma=3\) and the input graph is a disjoint union of two cliques. Also we prove that the balanced signed variant is W[1]-hard when parameterized by treedepth plus $r$, even when \(\gamma=0\).
\end{abstract}

\section{Introduction}
Consider the task of dividing players into football teams. Some pairs of players cooperate well together, while others may have
poor compatibility or conflicting play styles. These relations may be captured
in a graph: the vertices represent the players
and an edge between two players indicates that they cooperate well.
One may require that every team individually achieves a minimum level of compatibility,
which corresponds to partitioning the graph into parts such that each part has a given minimum number of edges.
In many practical settings, additional constraints such as balanced team sizes are also important.
Furthermore, relations may be both positive and negative, which we model by assigning a sign to each edge. Such signed graph models have been studied in clustering and social network analysis, where the goal is typically to optimize a global objective, for instance by maximizing agreements or minimizing disagreements.
In this paper we study several versions of this coalition formation problem with
local constraints, mostly from the point of view of the parameterized complexity
theory.

Coalition formation is a fundamental problem in multi-agent systems, with applications spanning social, economic, and political domains, including team formation, collaborative networks, and community detection. In many such settings, individuals operate in groups rather than alone, and their satisfaction depends on the composition of the group to which they belong. 
A well-established framework for modeling such scenarios is that of hedonic games, introduced by Dr\`eze and Greenberg~\cite{69d4f077-72de-3815-b927-32755a2b5480} and further developed in a rich body of work. Early contributions focused on the existence and structure of stable coalition outcomes~\cite{banerjee2001core, BOGOMOLNAIA2002201, DEMANGE199445, 10.2307/1885113}, while subsequent research expanded the model to richer preference classes and provided comprehensive overviews of the area~\cite{Aziz_Savani_2016, 10.1007/978-3-642-35843-2_4}. More recent work has investigated computational and structural aspects of coalition formation, including graph-restricted and network-based settings~\cite{7e72472a-d8cc-392b-a7e5-767d9d2180f5, demange2004group}. In this model, outcomes correspond to partitions of agents into coalitions, where each agent’s utility depends solely on the members of its own coalition. These models provide a natural abstraction for analyzing how local interactions influence global group structures.

More formally, we study graph partitioning problems that enforce \emph{local constraints} on every part.
Given an undirected graph $G=(V,E)$ and integers $r$ and $\gamma$, the \emph{Edge-Constrained Graph Partitioning Problem} (\textsc{ECGP}) asks whether $V$ can be partitioned into $r$ parts such that each part induces at least $\gamma$ edges.
We also consider a balanced variant, the \emph{Balanced Edge-Constrained Graph Partitioning Problem} (\textsc{BECGP}),
in which all parts are required to have equal size, modeling scenarios such as forming equally sized teams or distributing workload evenly.

We further extend these formulations to signed graphs, where each edge is labeled as positive or negative.
%and the utility of a part is defined as the difference between the number of positive and negative edges it induces.
In this setting, each part must achieve a prescribed difference between the number of positive and negative edges it induces,
capturing a trade-off between cooperation and conflict.
These formulations can be viewed as constraint-based counterparts to classical hedonic models,
where instead of optimizing global objectives or enforcing stability, we require each coalition to satisfy a minimum utility, defined as the edge-count of each coalition.
Our goal is to understand the computational complexity of these problems under various structural parameterizations,
as well as with respect to the parameters $r$ and $\gamma$, which denote the number of parts and the required edge-count threshold, respectively.
We first show that both \textsc{ECGP} and \textsc{BECGP} remain NP-hard even for fixed constants \(r\) and \(\gamma\) (see Theorem~\ref{thm:np-hard 1} and Theorem~\ref{thm:np-hard 2}).
We therefore study their parameterized complexity with respect to the combination of natural parameters \(r\), \(\gamma\), and several structural graph parameters.

\noindent\textbf{Our Contributions.}
We provide a systematic study of the computational and parameterized complexity of \textsc{ECGP} and \textsc{BECGP}, along with their signed variants, showing several tractability and hardness results.

First, using the Expansion Lemma (Lemma~\ref{lem:expansion_kernel}) as the main ingredient, we obtain polynomial kernels for both \textsc{ECGP} and \textsc{BECGP} parameterized by $r+\gamma$. More precisely, \textsc{ECGP} admits a kernel with $\mathcal{O}(r\gamma^2)$ vertices and $\mathcal{O}((r\gamma)^2)$ edges (Theorem~\ref{thm:kernel_ru}), while \textsc{BECGP} admits a kernel with $\mathcal{O}(r\gamma^2+r^2)$ vertices (Theorem~\ref{thm:becgp-kernel}).

We then investigate the parameterized complexity of the problems with respect to structural graph parameters. 
On the positive side, we show (Theorem~\ref{thm:uccp-vin}) that both \textsc{ECGP} and \textsc{BECGP} are fixed-parameter tractable when parameterized by vertex integrity.
We also show (Theorems~\ref{thm:uccp-vdc} and \ref{thm:buccp-vdc}) that \textsc{ECGP} and \textsc{BECGP} are FPT when parameterized by vertex deletion to clique.
These tractability results rely on different ILP-based approaches.
For vertex integrity, we formulate the problem as an $N$-fold ILP and apply known FPT algorithms for $N$-fold integer programming. 
For vertex deletion to clique, we reduce the problem to an ILP with a bounded number of variables and apply Lenstra's theorem.
A central difficulty is that counting the edges induced by a part generally leads to quadratic constraints.
In the vertex-deletion-to-a-clique setting, we overcome this difficulty by showing that it suffices to guess certain variables from a bounded set, after which the relevant edge-counting constraints become linear. 
This argument relies crucially on the fact that the graph outside the deletion set is a single clique and does not extend directly to neighborhood diversity. Consequently, the parameterized complexity with respect to neighborhood diversity remains open.
We further obtain several additional tractability results for \textsc{ECGP}.
In particular, we show (Theorem~\ref{thm:vds-plus-u}) that the problem is FPT when parameterized by cluster vertex deletion set plus $\gamma$, maximum leaf number,
and also by  vertex deletion to paths plus $\gamma$ (see Theorem~\ref{thm:vdp-plus-u}).

We give an algorithm that solves \textsc{ECGP} in time $\mathcal{O}(n\cdot r^{k+1}\cdot (\gamma+1)^{2r})$ on a given nice tree decomposition of width $k$ (Theorem~\ref{thm:tw-dp-general}). 
Combined with our kernelization results, this yields algorithms running in $(r^{2r\gamma}(\gamma+1)^{2r})\cdot n^{\mathcal{O}(1)}$ time for \textsc{ECGP} and  an algorithm running in $(r^{2r\gamma}(\gamma+1)^{2r}(r\gamma^2+r^2+1)^{2r})\cdot n^{\mathcal{O}(1)}$ time for \textsc{BECGP} (Corollary~\ref{thm:fpt-ru-via-dp}).

On the negative side, we show that \textsc{ECGP} remains W[1]-hard even when parameterized by $r$ together with several structural parameters,
including vertex deletion to stars or paths, feedback edge set, modular width, and cluster vertex deletion set
(see Corollaries~\ref{cor:vds-hard} and \ref{cor:vdp-hard} and Theorem~\ref{thm:k-cvd-hard}).
Using a parameter-preserving reduction, these hardness results extend to \textsc{BECGP} as well (see Corollary~\ref{thm:fpt-ru-via-dp}). We also show that both the problems remain W[1]-hard when parameterized by cliquewidth even when $\gamma=3$ (see Theorem~\ref{thm:cw-hard-gamma-three}).
Our results leave open the intriguing question of whether \textsc{ECGP} and \textsc{BECGP} admit an FPT algorithm when parameterized by treewidth plus $\gamma$ and neighborhood diversity. Please refer to Figure~\ref{fig:uccp-overview} for overview of the results obtained for \textsc{ECGP} and \textsc{BECGP} as well as open cases.

Finally, we extend our study to signed graphs, where edges may represent both positive and negative  interactions.
In this setting, we show that the \textsc{SECGP} (see Theorem~\ref{thm:NP-hard signed ECGP}) and \textsc{SBECGP}
(see Corollary~\ref{cor:NP-hard signed BECGP}) are NP-hard even when $r+\gamma=3$ and input graph is disjoint union of two cliques.
We strengthen hardness results for the \textsc{SBECGP}, by showing W[1]-hardness when parameterized by $\mathsf{td}+r+\gamma$, even when $\gamma=0$ (see Corollary~\ref{cor:signed-buccp-td}).

\tikzset{
cell/.style={
    rectangle,
    draw,
    align=center,
    minimum height=6mm,
    minimum width=9mm,
    inner sep=1.5pt
},
fpt/.style={cell, draw=green!60!black, fill=green!20, pattern=crosshatch, pattern color=green!40},
whard/.style={cell, draw=yellow!80!black, fill=yellow!20, pattern=north east lines, pattern color=yellow!60},
nphard/.style={cell, draw=red!70!black, fill=red!20, pattern=dots, pattern color=red!40},
open/.style={cell, draw=black, fill=gray!10}, 
xpwhard/.style={cell, draw=orange!80!black, fill=orange!25, pattern=grid, pattern color=orange!40},
single/.style={
    rectangle,
    draw,
    align=center,
    minimum height=6mm,
    minimum width=13mm,
    inner sep=1.5pt
},
result/.style={line width=1.2pt}
}

\begin{figure}[h]
\centering
\begin{tikzpicture}[>=stealth, scale=1.2]   

\node[fpt,result] (ml1) at (-5,-3) {$\mln$};

\node[fpt] (vc1) at (-1,-3) {$\vc$};

\node[open] (nd1) at (1.5,-1) {$\nd$};

\node[nphard,result]  (mw1a) at (-0.3,3) {$\mw$};
\node[whard,result, right=0pt of mw1a] (mw1b) {$+r$};
\node[open, right=0pt of mw1b] (mw1c) {$+\gamma$};

\node[nphard]  (cw1a) at (-1.5,5) {$\cw$};
\node[whard, right=0pt of cw1a] (cw1b) {$+r$};
\node[whard, result, right=0pt of cw1b] (cw1c) {$+\gamma$};

\node[nphard]  (pw1a) at (-1.5,2) {$\pw$};
\node[xpwhard, right=0pt of pw1a] (pw1b) {$+r$};
\node[open, right=0pt of pw1b]    (pw1c) {$+\gamma$};

\node[nphard]  (fvs1a) at (-4.7,1) {$\fvs$};
\node[xpwhard, right=0pt of fvs1a] (fvs1b) {$+r$};
\node[open, right=0pt of fvs1b]    (fvs1c) {$+\gamma$};

\node[nphard]  (tw1a) at (-4.1,3.5) {$\tw$};
\node[xpwhard,result, right=0pt of tw1a] (tw1b) {$+r$};
\node[open, right=0pt of tw1b]    (tw1c) {$+\gamma$};

\node[whard]    (cvd1a) at (2,1.5) {$\cvd$};
\node[whard,result, right=0pt of cvd1a] (cvd1b) {$+r$};
\node[fpt,result, right=0pt of cvd1b]     (cvd1c) {$+\gamma$};

\node[nphard]    (td1a) at (-2.1,1) {$\td$};
\node[xpwhard, right=0pt of td1a] (td1b) {$+r$};
\node[open, right=0pt of td1b]    (td1c)  {$+\gamma$};

\node[nphard,result]  (vdp1a) at (-4.4,-1.5) {$\vdp$};
\node[xpwhard,result, right=0pt of vdp1a] (vdp1b) {$+r$};
\node[fpt, result, right=0pt of vdp1b]    (vdp1c) {$+\gamma^{*}$};

\node[nphard,result]  (fes1a) at (-5.9,-0.5) {$\fes$};
\node[xpwhard,result, right=0pt of fes1a] (fes1b) {$+r$};
\node[open,right=0pt of fes1b]    (fes1c) {$+\gamma$};

\node[nphard,result]  (vds1a) at (-2.9,-0.5) {$\vds$};
\node[xpwhard,result, right=0pt of vds1a] (vds1b) {$+r$};
\node[fpt,result, right=0pt of vds1b]     (vds1c) {$+\gamma^{*}$};

\node[fpt,result] (vi1) at (0,-0.5) {$\vi$};

\node[fpt,result] (vdc1) at (2,-3) {$\vdc$};

\draw[->] (ml1) -- (fes1b);
\draw[->] (ml1) -- (vdp1b);
\draw[->] (fes1b) -- (fvs1b);
\draw[->] (vi1) -- (td1b);
\draw[->] (vc1) -- (vi1);
\draw[->] (nd1) -- (mw1b);
\draw[->] (mw1b) -- (cw1b);
\draw[->] (fvs1b) -- (tw1b);
\draw[->] (pw1b) -- (tw1b);
\draw[->] (tw1b) -- (cw1b);

\draw[->, bend right=25] (cvd1b) to (cw1c);
\draw[->, bend right=5] (vc1) to (cvd1a);
\draw[->] (td1b) -- (pw1b);

\draw[->] (vc1) -- (nd1);
\draw[->] (vc1) -- (vds1b);
\draw[->] (vc1) -- (vdp1b);
\draw[->] (vdc1) -- (nd1);
\draw[->] (vdc1) -- (cvd1b);
\draw[->] (vdp1b) -- (fvs1b);
\draw[->] (vds1b) -- (td1b);
\draw[->] (vds1b) -- (fvs1b);

\end{tikzpicture}

\caption{All results shown apply to both \textsc{ECGP} and \textsc{BECGP}, except the cell marked $^{*}$, which is currently known only for \textsc{ECGP}. For selected structural parameters, the three adjacent boxes indicate the parameter alone, the parameter together with $r$, and the parameter together with $\gamma$, respectively. 
For the full name of the parameters please refer to \Cref{def:params}.
\legendbox{fpt} indicates FPT, \legendbox{whard} indicates W[1]-hardness, \legendbox{xpwhard} indicates W[1]-hardness together with XP membership, \legendbox{nphard} indicates NP-hardness, and \legendbox{open} indicates open cases. Cells with bold frame indicate a theorem result.}
\label{fig:uccp-overview}
\end{figure}

\noindent \textbf{Related Work.}\\
\noindent\textbf{Hedonic Games and Coalition Formation.}
Coalition formation has been extensively studied in the framework of hedonic games, where each agent's utility depends solely on the members of its coalition \cite{69d4f077-72de-3815-b927-32755a2b5480, banerjee2001core, BOGOMOLNAIA2002201}. In this model, agents have preferences over coalitions, and the goal is to compute partitions that are stable under deviations. A central line of work focuses on stability notions such as core, Nash, and individual stability \cite{Aziz_Savani_2016,10.1007/978-3-642-35843-2_4,CECHLAROVA2004333, SUNG2007155}, capturing both group and unilateral deviations.

From a computational perspective, numerous works study the complexity of deciding and computing stable partitions, revealing both polynomial-time algorithms and hardness results depending on the preference domain and structural restrictions \cite{BALLESTER20041,FANELLI2025104394,10.1007/978-3-642-16170-4_16, AZIZ2013316,Peters_2016}.

Our work differs fundamentally from this line of research. Rather than studying stability or preference satisfaction, we consider \emph{feasibility-based} coalition formation, where each coalition must satisfy a prescribed local utility constraint. This shifts the focus from equilibrium concepts to structural and combinatorial properties of feasible partitions.

\smallskip
\noindent\textbf{Graph-Restricted Coalition Formation.}
In many applications, feasible coalitions are constrained by an underlying network structure.
This setting was formalized in graph-restricted games \cite{7e72472a-d8cc-392b-a7e5-767d9d2180f5}, where only connected subsets of agents can form coalitions.
Subsequent work has explored coalition formation under additional structural restrictions,
including networks, hierarchical organizations, and preference-based constraints \cite{demange2004group, DEMANGE199445, jackson2001network, igarashi2016hedonicgamesgraphrestrictedcommunication}.
These models capture limitations on coalition formation arising from communication, coordination, or interaction constraints among agents.
In contrast, our model does not restrict coalitions based on connectivity, but instead enforces \emph{edge constraints} on each coalition, measured via induced subgraphs.
This leads to a different set of computational challenges.

\smallskip
\noindent\textbf{Related Clustering and Graph Partitioning Problems.}
Graph partitioning and clustering problems have been widely studied, including classical formulations that partition vertices into $k$ parts under structural or optimization constraints \cite{GDOWNEY2003209, Demaine2003FixedParameterAF}.
A rich body of work considers clustering under various objectives and parameterizations \cite{10.1007/978-3-319-08783-2_24, pmlr-v162-ganian22a, Ganian_Kanj_Ordyniak_Szeider_2020, SCHAEFFER200727}.
In particular, correlation clustering and related models study clustering in signed graphs, where the goal is to maximize agreements or minimize disagreements between vertices \cite{bansal2004correlation, CHARIKAR2005360,SHAMIR2004173, 10.1145/1411509.1411513}.
These approaches typically optimize a \emph{global objective function}.

\smallskip
\noindent\textbf{Parameterized Complexity of Graph Partitioning.}
Parameterized complexity has been successfully applied to a wide range of graph partitioning and clustering problems,
yielding both fixed-parameter tractability and hardness results under structural parameters such as treewidth, vertex cover, and deletion distance \cite{marekcygan, Demaine2003FixedParameterAF, Ganian_Kanj_Ordyniak_Szeider_2020,
10.1007/978-3-030-67731-2_23,10.1007/978-3-030-64843-5_6,10.1007/978-3-642-11269-0_10,GAIKWAD2026103820,gaikwad2026hardnesstractabilityth1freeedge,doi:10.1137/110855247,10.1609/aaai.v37i5.25771,blazej_et_al:LIPIcs.MFCS.2024.29}.
These approaches often exploit structural decompositions or parameterized frameworks to identify tractable cases, while also establishing tight lower bounds for more general settings.
Our work builds on this line of research by studying partitioning under local constraints.

\section{Preliminaries}
The set of integers from $1$ to $n$ is denoted as $[n]$ and $[n]_0 = [n] \cup \{0\}$. 
We consider simple undirected graphs $G=(V,E)$ with $n=|V|$ vertices and $m=|E|$ edges.
For a set $S \subseteq V$, we denote by $G[S]$ the subgraph induced by $S$, and by $E(G[S])$ its edge set.
We write $|E(G[S])|$ for the number of edges induced by $S$. 
For two disjoint vertex sets $A,B\subseteq V(G)$, we denote by  $E_G(A,B):=\{uv\in E(G):u\in A \text{ and } v\in B\}$  the set of edges having one endpoint in $A$ and the other in $B$. When the underlying graph is clear from the context, we write $E(A,B)$ instead of $E_G(A,B)$
A partition of $V$ into $r$ parts is a collection of pairwise disjoint sets $V_1,\dots,V_r$ such that $\bigcup_{i=1}^r V_i = V$.

\medskip

\noindent
\fbox{%
\parbox{0.97\linewidth}{%
\textsc{Edge-Constrained Graph Partitioning Problem (ECGP)}\\
\textbf{Input:} An undirected graph $G=(V,E)$ and integers $r,\gamma$.\\
\textbf{Question:} Does there exist a partition of $V$ into $r$ sets $V_1,\dots,V_r$ such that
$|E(G[V_i])| \ge \gamma$ for all $i \in [r]$?
}%
}

\medskip

\noindent
\textbf{Balanced variant.}
In the \textsc{Balanced Edge-Constrained Graph Partitioning Problem} (\textsc{BECGP}), we are additionally given that $r$ divides $n$, and require that each part has equal size, i.e., $|V_i| = n/r$ for all $i \in [r]$.

\medskip
A signed graph is a graph $G=(V,E)$ together with a labeling $\sigma : E \to \{+,-\}$ that assigns each edge to be either positive (green) or negative (red).
For a set $S \subseteq V$, let $E^+(S)$ and $E^-(S)$ denote the sets of positive and negative edges induced by $S$, respectively.

\medskip

\noindent
\fbox{%
\parbox{0.97\linewidth}{%
\textsc{Signed Edge-Constrained Graph Partitioning Problem (Signed ECGP)}\\
\textbf{Input:} A signed graph $G=(V,E,\sigma)$ and integers $r,\gamma$.\\
\textbf{Question:} Does there exist a partition of $V$ into $r$ sets $V_1,\dots,V_r$ such that
$|E^+(V_i)| - |E^-(V_i)| \ge \gamma$ for all $i \in [r]$?
}%
}

\medskip

\noindent
\textbf{Balanced signed variant.}
The balanced version (\textsc{Signed BECGP}) additionally requires that $|V_i| = n/r$ for all $i \in [r]$.

\noindent \textbf{Structural Parameters.}
We now introduce the structural graph parameters.

\begin{definition}\label{defvc}
A set $S \subseteq V(G)$ is a \emph{vertex cover} of $G$ if every edge in $E(G)$ has at least one endpoint in $S$. The size of a smallest vertex cover of $G$ is called the \emph{vertex cover number}.
\end{definition}

\begin{definition}
A \emph{feedback vertex set} (resp.\ \emph{feedback edge set}) of a graph $G$ is a set of vertices (resp.\ edges) whose removal results in a forest. The minimum size of such a set is called the \emph{feedback vertex set number} (resp.\ \emph{feedback edge set number}).
\end{definition}

A rooted forest is a disjoint union of rooted trees. Given a rooted forest $Y$, its \emph{closure} is the graph $H$ with $V(H)=V(Y)$, where two distinct vertices are adjacent if and only if one is an ancestor of the other in $Y$.

\begin{definition}\label{deftd}
The \emph{treedepth} of a graph $G$ is the minimum height of a rooted forest $Y$ whose closure contains $G$ as a subgraph. It is denoted by $td(G)$.
\end{definition} 

We now recall the notion of tree decompositions.

\begin{definition}
A \emph{tree decomposition} of a graph $G=(V,E)$ is a tree $T$ together with a family of subsets $(X_t)_{t \in V(T)}$ of $V$ (called \emph{bags}) such that $\bigcup_{t \in V(T)} X_t = V$ and the following conditions hold:
(1) for every edge $uv \in E(G)$, there exists $t \in V(T)$ with $\{u,v\} \subseteq X_t$, and
(2) for every $v \in V$, the set of nodes $\{t \in V(T) : v \in X_t\}$ induces a connected subtree of $T$.
\end{definition}

\begin{definition}\label{deftw}
The \emph{width} of a tree decomposition is $\max_{t \in V(T)} |X_t| - 1$. The \emph{treewidth} of $G$, denoted $tw(G)$, is the minimum width over all tree decompositions of $G$.
\end{definition}

\begin{definition}\label{defpw}\rm
    If the tree $T$ of a tree decomposition is a path, then we say that the tree decomposition 
    is a {\it path decomposition}. The \emph{pathwidth}  ${\pw}(G)$ of a graph $G$  is the  minimum width among all possible path decompositions of $G$.
\end{definition}

\begin{definition}\label{cvddef}
The \emph{cluster vertex deletion number} of a graph $G$ is the minimum number of vertices whose removal results in a disjoint union of complete graphs.
\end{definition}

\begin{definition}
    The clique-width of a graph $G$, denoted by ${\cw}(G)$, is the minimum number of labels needed 
to construct $G$ 
using the following four operations:
\begin{enumerate}
    \item Create a new graph with a single vertex $v$ with label $i$.
    \item Take the disjoint union of two labelled graphs $G_1$ and $G_2$.
    \item Add an edge between every vertex with label $i$ and every vertex with label~$j$,
$i\neq j$.
\item Relabel every vertex with label $i$ to have label $j$.
\end{enumerate}
\end{definition}

\begin{definition}
    The \emph{vertex integrity number} of a graph $G$ is the minimal integer $\vi$ such that there exist a set of vertices $X$ of size at most $\vi$ that if removed, each connected component of $G\setminus X$ is of size at most $\vi$. 
\end{definition}

\begin{definition}
    The \emph{maximum leaf number} of a graph $G$ is the maximum number of leafs in a spanning tree of $G$. It is denoted as $\mln(G)$.
\end{definition}

\begin{definition}\label{def:params}
Let $\mathcal{G}$ be a graph class. The \emph{vertex deletion distance} of a graph $G$ to $\mathcal{G}$ is the minimum number of vertices whose removal transforms $G$ into a graph in $\mathcal{G}$.
Throughout the paper, we use the following abbreviations for graph parameters:
$\vc$ (vertex cover),
$\fvs$ (feedback vertex set),
$\fes$ (feedback edges set),
$\tw$ (treewidth),
$\pw$ (pathwidth),
$\td$ (treedepth),
$\cw$ (clique-width),
$\mw$ (modular width),
$\nd$ (neighborhood diversity),
$\vi$ (vertex integrity),
$\vdc$ (vertex deletion distance to a clique),
$\cvd$ (cluster vertex deletion),
$\vdp$ (vertex deletion distance to disjoint unions of paths), 
$\mln$ (maximum leaf number) and
$\vds$ (vertex deletion distance to disjoint unions of stars).
\end{definition}

\section{Basic Complexity}
We begin with simple observations on the classical complexity of both \textsc{BECGP} and \textsc{ECGP}.

\begin{theorem}
For $r=1$, both \textsc{BECGP} and \textsc{ECGP} can be solved in polynomial time.
\end{theorem}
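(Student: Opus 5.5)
The plan is to observe that with $r=1$ the only admissible partition is the trivial one, $V_1=V$. For \textsc{ECGP} a partition of $V$ into exactly one set must be $\{V\}$. For \textsc{BECGP} the balance condition $|V_1|=n/r=n$ holds automatically, and $r=1$ always divides $n$. So in both problems the question reduces to whether the single part $V$ satisfies the edge constraint.

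The algorithm therefore does the following.

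\begin{enumerate}
\item Compute $|E(G[V])|=|E|=m$.
\item Answer \emph{yes} exactly when $m\ge \gamma$.
\end{enumerate}

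Correctness follows in both directions. If $m\ge\gamma$, the partition $\{V\}$ is a valid solution for both problems. Conversely, any solution must equal $\{V\}$, which forces $m\ge\gamma$.

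The running time is $\mathcal{O}(n+m)$, since counting edges is linear, or even constant if $m$ is given with the input. The same reasoning works for the signed variants: there one checks $|E^+(V)|-|E^-(V)|\ge\gamma$.

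There is no real obstacle here. The only care needed is a degenerate convention: depending on whether empty parts are permitted, one may want to note that for $r=1$ the single part is $V$ itself, so emptiness is never an issue unless $V=\emptyset$. That case is handled by the same check, since then $m=0$.
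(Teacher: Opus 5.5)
Your proposal is correct and matches the paper's argument. For $r=1$ the only partition is $\{V\}$, and the balance condition is automatically satisfied, so the instance is a YES-instance if and only if $|E(G)|\ge\gamma$, which can be checked in linear time.
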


\begin{proof}
If $r=1$, then the unique part is $V(G)$ itself. Hence the instance is a YES-instance if and only if
$|E(G)| \ge \gamma$,
which can be checked in polynomial time.
\end{proof}

\begin{theorem}\label{thm:np-hard 1}
\textsc{BECGP} is NP-hard for $r=2$.
\end{theorem}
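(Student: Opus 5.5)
The plan is a reduction from a classical NP-hard bisection problem. For $r=2$, BECGP asks for a partition into two halves $V_1,V_2$ of size $n/2$ with $|E(G[V_1])|\ge\gamma$ and $|E(G[V_2])|\ge\gamma$. Since $|E(G[V_1])|+|E(G[V_2])| = m - |E(V_1,V_2)|$, small cuts are related to many internal edges, but the constraint is on the minimum of the two sides. I will therefore reduce from \textsc{Minimum Bisection} restricted to \emph{regular} graphs, which is known to be NP-hard (Bui et al.; also NP-hard for cubic graphs). Given a $d$-regular graph $G$ on $n$ vertices (with $n$ even) and a bound $k$, the question is whether there is a bisection with at most $k$ cut edges.

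The key identity is the following. For any $S\subseteq V$ with $|S|=n/2$ in a $d$-regular graph, we have $2|E(G[S])| + |E(S,\bar S)| = d\cdot n/2$. Hence $|E(G[S])| = (dn/2 - |E(S,\bar S)|)/2$, and the same expression holds for $\bar S$. In a bisection of a regular graph, both sides therefore induce exactly the same number of edges, namely $(dn/2-c)/2$ where $c$ is the cut size. Consequently, $G$ has a bisection with cut at most $k$ if and only if $(G,r=2,\gamma)$ is a YES-instance of BECGP, where $\gamma=\lceil (dn/2-k)/2\rceil$. I must check the rounding: $c\le k$ iff $(dn/2-c)/2\ge (dn/2-k)/2$, and since the left-hand side is an integer, this holds iff it is at least the ceiling. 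Both directions follow immediately, and the reduction is clearly polynomial.

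The main obstacle is securing a citable NP-hardness result for Minimum Bisection on regular graphs. If I prefer to avoid relying on it, I would instead start from general Minimum Bisection and regularize the graph. One option is to add a large clique-like gadget; a cleaner option is to take the complement-style padding: pad each vertex $v$ up to degree $\Delta$ by attaching it to private structures, which however breaks balance. A more robust alternative is to add a disjoint copy of $\bar G$ joined appropriately so that degrees become equal. Given this complication, I would cite the regular-graph hardness directly (Bui, Chaudhuri, Leighton, Sipser show that bisection on $d$-regular graphs is NP-hard), and I would keep the padding argument only as a fallback.
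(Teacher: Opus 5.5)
Your proposal is correct and is essentially the paper's proof: the paper also reduces from \textsc{Minimum Bisection} on (cubic) regular graphs, uses the identity $2|E(G[A])|+|E(A,B)|=m$ for a bisection, and sets $\gamma=\lceil (m-k)/2\rceil$, which coincides with your choice since $m=dn/2$. Allowing general $d$-regular graphs changes nothing. Your padding fallback is sketchy, but it is unnecessary because hardness on cubic graphs can be cited directly.
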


\begin{proof}
We reduce from \textsc{Minimum Bisection} on cubic graphs. The input is a cubic graph $G=(V,E)$ with $|V|=n$ even and an integer $k$, and the question is whether $V$ can be partitioned into two sets $A,B$ such that $|A|=|B|=n/2$ and $|E(A,B)| \le k$.
Let $m:=|E|$. We construct the \textsc{BECGP} instance $(G,2,\gamma)$ with
$\gamma := \left\lceil \frac{m-k}{2} \right\rceil$.

\medskip
\noindent
\emph{Forward direction.}
Let $(A,B)$ be a bisection with $|E(A,B)|=c \le k$. Since $G$ is cubic and $|A|=|B|=n/2$, we have
$2|E(G[A])| + c = m$, and hence $|E(G[A])| = \frac{m-c}{2} \ge \frac{m-k}{2}$. Similarly, $|E(G[B])| \ge \frac{m-k}{2}$. Since these values are integers, both are at least $\gamma$. Thus $(A,B)$ is a valid solution.

\medskip
\noindent
\emph{Backward direction.}
Let $(A,B)$ be a valid solution to \textsc{BECGP}. Then $|A|=|B|=n/2$ and $|E(G[A])| \ge \gamma$. Let $c:=|E(A,B)|$. Again, $2|E(G[A])| + c = m$, so
$\frac{m-c}{2} \ge u \ge \frac{m-k}{2}$, which implies $c \le k$. Hence $(A,B)$ is a valid bisection.
This completes the reduction.
\end{proof}

\begin{theorem}
For $\gamma=1$, both \textsc{BECGP} and \textsc{ECGP} can be solved in polynomial time.
\end{theorem}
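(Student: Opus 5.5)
For $\gamma=1$ each part only needs to induce at least one edge. The plan is to reduce both problems to maximum matching, which is solvable in polynomial time (Edmonds' algorithm).

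\emph{ECGP.} I claim that $(G,r,1)$ is a YES-instance if and only if $G$ has a matching of size at least $r$.

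For the forward direction, let $V_1,\dots,V_r$ be a valid partition. Each $V_i$ induces some edge $e_i$. These edges lie in disjoint parts, so they are pairwise vertex-disjoint and form a matching of size $r$.

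For the backward direction, take a matching $e_1,\dots,e_r$. Set $V_i$ to be the two endpoints of $e_i$ for $i<r$. Put into $V_r$ the endpoints of $e_r$ together with all vertices not covered by $e_1,\dots,e_{r-1}$. Every part induces an edge, so the partition is valid.

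Hence the algorithm computes a maximum matching and compares its size with $r$. The case $r\le 0$ or $\gamma\le 0$ is trivial.

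\emph{BECGP.} Let $s=n/r$ be the required part size. If $s=1$, no part can contain an edge, so the answer is NO. For $s\ge 2$, I again claim that the instance is YES if and only if $G$ has a matching of size $r$.

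The forward direction is the same argument as for ECGP. For the backward direction, start with parts $V_i=e_i$, each of size $2\le s$. Then distribute the remaining $n-2r$ vertices arbitrarily, adding $s-2$ of them to each part. The counts work out exactly because $rs=n$. Adding vertices never destroys an induced edge, so each part still induces at least one edge and has size $s$.

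The only point needing care is the balanced case, where the parts must be padded to size exactly $n/r$. The degenerate case $n/r=1$ has to be handled separately, and it is. Beyond that, the argument is a direct equivalence with the existence of a matching of size $r$.
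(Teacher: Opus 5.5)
Your proposal is correct and matches the paper's proof: both directions go through the existence of a matching of size $r$, and in the balanced case the matched pairs are padded up to size $n/r$. Your separate treatment of $n/r=1$ is harmless but not needed, since a matching of size $r$ already forces $n\ge 2r$, so the matching test answers NO in that case automatically.
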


\begin{proof}
Observe that a partition of $V(G)$ into $r$ parts satisfies $|E(G[V_i])| \ge 1$ for every $i \in [r]$ if and only if each part contains at least one edge.

\medskip
\noindent
\emph{Forward direction.}
Let $(V_1,\dots,V_r)$ be a feasible solution. For each $i \in [r]$, since $|E(G[V_i])| \ge 1$, there exists an edge $e_i \in E(G[V_i])$. As the parts are pairwise disjoint, the edges $e_1,\dots,e_r$ are pairwise vertex-disjoint. Hence they form a matching of size $r$.

\medskip
\noindent
\emph{Backward direction.}
Let $M=\{e_1,\dots,e_r\}$ be a matching of size $r$, where $e_i=\{u_i,v_i\}$ for each $i \in [r]$. We construct a feasible partition.

For \textsc{ECGP}, initialize $r$ parts by setting $V_i := \{u_i,v_i\}$ for all $i \in [r]$. Then assign every remaining vertex of $V(G) \setminus \bigcup_{i=1}^r V_i$ arbitrarily to any part. Since each $V_i$ already contains the edge $e_i$, we have $|E(G[V_i])| \ge 1$ for all $i$.

For \textsc{BECGP}, note that $|V(G)|=n$ is divisible by $r$. After initializing $V_i := \{u_i,v_i\}$, we distribute the remaining $n-2r$ vertices arbitrarily among the $r$ parts so that each part ends up with exactly $n/r$ vertices. This is always possible since $n \ge 2r$ and the remaining vertices can be assigned to satisfy the size constraints. As before, each part contains the edge $e_i$, and hence induces at least one edge.

\medskip

Thus, there exists a feasible solution if and only if $G$ contains a matching of size at least $r$. Since a maximum matching can be computed in polynomial time, the claim follows.
\end{proof}

\begin{theorem}\label{thm:np-hard 2}
Both \textsc{BECGP} and \textsc{ECGP} are NP-hard for $\gamma=2$.
\end{theorem}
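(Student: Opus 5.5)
A natural plan is to reduce from a partition-into-small-subgraphs problem that is NP-hard. With $\gamma=2$, a part needs two edges, and the minimal such subgraphs are a path $P_3$ on three vertices or two disjoint edges. Disjoint edges are the nuisance, so we want graphs where parts are forced to contain a $P_3$.

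Fix a graph in which two disjoint edges are never worth using. For \textsc{ECGP} I would reduce from \textsc{$P_3$-Packing} (equivalently \textsc{Partition into $P_3$}), which is NP-complete even on bipartite graphs of maximum degree $3$. The instance is $G$ with $n=3q$, asking for $q$ vertex-disjoint copies of $P_3$ covering $V$. Output $(G,r=q,\gamma=2)$; for \textsc{BECGP} the same instance also satisfies $r \mid n$, with part size $3$.

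\emph{BECGP direction.} Parts have size exactly $3$. A $3$-vertex set induces at least $2$ edges iff it contains a $P_3$. Hence balanced solutions correspond exactly to $P_3$-partitions, and this case is immediate.

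\emph{ECGP direction.} A $P_3$-partition is trivially a solution. Conversely, suppose an \textsc{ECGP} solution has $q$ parts, each with at least $2$ edges, and $n=3q$. Some part may have fewer than $3$ vertices. But a part with $2$ vertices has at most one edge, so every part has at least $3$ vertices. Since $\sum|V_i|=3q$, every part has exactly $3$ vertices, and we are back in the balanced argument.

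So a single reduction from \textsc{Partition into $P_3$} ($n=3q$, $r=q$, $\gamma=2$) works for both problems. The only real obstacle is citing the correct NP-hardness source: \textsc{Partition into Triangles}/\textsc{$P_3$}-partition (Kirkpatrick--Hell, the $P_3$ case of the $H$-partition dichotomy for connected $H$ with at least $3$ vertices, Garey--Johnson GT12). Otherwise the argument is the counting above.
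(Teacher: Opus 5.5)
Your proof is correct and follows the paper's route: reduce from \textsc{$P_3$-Partition} on bipartite graphs of maximum degree $3$, keep the graph unchanged, and set $\gamma=2$. Your version is a bit more complete than the paper's. The paper never fixes $r=|V|/3$ and never argues that every \textsc{ECGP} part has exactly three vertices; your counting argument (at least $3$ vertices per part and $\sum_i |V_i| = 3q$) supplies both.
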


\begin{proof}
We reduce from the \emph{$P_3$-Partition} problem, which is NP-complete even on bipartite graphs of maximum degree~3~\cite{Monnot2007ThePP}.
Given a graph $G=(V,E)$ with $|V|$ divisible by $3$, the task is to partition $V$ into triples each inducing a path on three vertices.

We use the same graph and set $\gamma=2$. Since the graph is bipartite, it contains no triangles, and hence any three vertices induce at most two edges. Therefore, a set of three vertices induces at least two edges if and only if it induces a $P_3$.

Thus, a valid partition into parts each inducing at least two edges corresponds exactly to a $P_3$-partition of the graph.
This proves NP-hardness for both \textsc{BECGP} and \textsc{ECGP}.
\end{proof}

\section{Algorithmic Results}

\subsection{Kernelization Algorithm}

We begin with \textsc{ECGP}, where the vertex set is to be partitioned into exactly $r$ parts, each inducing at least $\gamma$ edges.

\begin{theorem}\label{thm:kernel_ru}
\textsc{ECGP} admits a polynomial kernel with $\mathcal{O}(r\gamma^2)$ vertices and $\mathcal{O}((r\gamma)^{2})$  edges when parameterized by $r+\gamma$.
\end{theorem}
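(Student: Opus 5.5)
The plan is to design a few reduction rules that shrink $G$ until it has $\mathcal{O}(r\gamma^2)$ vertices and $\mathcal{O}((r\gamma)^2)$ edges, with the Expansion Lemma (Lemma~\ref{lem:expansion_kernel}) used to bound the number of low-degree vertices attached to a small set of hubs.

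\textbf{Preliminary rules.} First, assume $\gamma\ge 1$; for $\gamma\le 0$ the instance is trivial whenever $n\ge r$. Isolated vertices do not affect any $|E(G[V_i])|$, so we delete them. We must keep at least $r$ vertices, but any YES-instance with $\gamma\ge1$ already has at least $2r$ non-isolated vertices, so this is harmless. If $|E(G)|< r\gamma$, we answer NO.

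\textbf{High-degree rule.} Next we handle a vertex $v$ of degree at least $r\gamma$ (or a suitable $\mathcal{O}(r\gamma)$ threshold). I would argue that many such vertices immediately give a YES-instance. Suppose there are at least $r$ vertices of degree at least $2r\gamma$. Greedily give each of $r$ of them its own part, together with $\gamma$ private neighbours not yet used and not among the chosen hubs. There are at most $r(\gamma+1)$ used vertices, so a fresh neighbour always exists. Put the remaining vertices anywhere.

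Otherwise we get a set $H$ of fewer than $r$ high-degree vertices, and the rest of the graph has bounded degree. Let $L=V\setminus H$. We want to bound $|E(G[L])|$. If $G[L]$ contains a set of $r$ vertex-disjoint "clusters" each with at least $\gamma$ edges, we answer YES. Finding such clusters can be done greedily in a bounded-degree graph: take a connected piece of $\gamma$ edges spanning at most $\gamma+1$ vertices, then remove its closed neighbourhood, which has size $\mathcal{O}(\gamma\cdot r\gamma)$. Hence either we get YES, or $G[L]$ has $\mathcal{O}(r\gamma\cdot r\gamma)$ edges after discarding isolated parts. This gives the edge bound $\mathcal{O}((r\gamma)^2)$, and the non-isolated vertices of $G[L]$ number $\mathcal{O}((r\gamma)^2)$.

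To improve the vertex count to $\mathcal{O}(r\gamma^2)$, I would instead use maximal packings with a tighter charging argument. I expect this tuning to be fiddly.

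\textbf{Vertices of $L$ adjacent only to $H$.} These form an independent set $I$ whose neighbourhood lies in $H$. When $|I|\ge \gamma|H|$, the Expansion Lemma with $q=\gamma$ (Lemma~\ref{lem:expansion_kernel}) gives $\emptyset\ne A\subseteq H$ and $B\subseteq I$ with a $\gamma$-expansion of $A$ into $B$ and $N(B)\subseteq A$. Then every $a\in A$ can form its own part with its $\gamma$ private leaves. The reduction rule deletes $A\cup B$, decreases $r$ by $|A|$, and keeps $\gamma$ unchanged.

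Safeness needs care in the converse direction. Given a solution of the original instance, at most $|A|$ parts meet $A$. Parts that do not meet $A$ lose only vertices of $B$, which are isolated once $A$ is removed, so they keep all their edges. The obstacle is that parts meeting $A$ might be fewer than $|A|$ while the parts not meeting $A$ have become too many or too few. I expect this to be the main difficulty. I would try to resolve it by showing that, after deleting $A\cup B$, we may merge and split parts: in ECGP merging parts never hurts, so the reduced instance only needs \emph{at least} $r-|A|$ parts with $\ge\gamma$ edges each. This monotonicity makes the rule safe.

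After applying all rules exhaustively, $|I|<\gamma r$, and combining with the bounds on $H$ and $L$ yields the claimed kernel size.
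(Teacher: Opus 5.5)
There is a genuine gap, and it is more than vertex-count tuning.

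\textbf{The bound on $E(G[L])$ fails.} Your greedy extracts \emph{connected} pieces with $\gamma$ edges and concludes that otherwise $G[L]$ has $\mathcal{O}((r\gamma)^2)$ edges. That conclusion does not follow. Parts in \textsc{ECGP} need not be connected, and the components with fewer than $\gamma$ edges can be arbitrarily numerous and can be combined into parts. Take $G$ to be a perfect matching on $2N$ vertices with $\gamma=2$. There are no hubs and no connected piece with two edges, and $I=\emptyset$; also $|E(G)|=N\ge r\gamma$. So none of your rules fires and the instance stays unbounded, even though it is a YES-instance for $N\ge 2r$. The rule you are missing is the paper's matching rule. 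A matching of size $r\gamma$, split into $r$ groups of $\gamma$ edges, directly gives $r$ disjoint parts with $\gamma$ edges each. (Deleting whole closed neighbourhoods is also unnecessary; as written it would only give $\mathcal{O}(r^3\gamma^3)$ edges.)

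\textbf{The $\mathcal{O}(r\gamma^2)$ vertex bound needs a different rule, not a tighter charging.} Your expansion step only compresses vertices whose entire neighbourhood lies in the fewer than $r$ hubs, so low-degree vertices attached to non-hubs are never compressed. For example, take $r-1$ disjoint stars with $2r\gamma-1$ leaves each. None of your rules fires: the centres are below the degree threshold, so no leaf lies in $I$. Yet the graph has $\Theta(r^2\gamma)$ vertices, which exceeds $\mathcal{O}(r\gamma^2)$ when $r\gg\gamma$.

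The paper's key step is this. After the matching rule, the endpoints of a maximal matching form a vertex cover $C$ with $|C|<2r\gamma$. Then \emph{every} vertex of $I:=V\setminus C$ (isolated vertices having been deleted) has all its neighbours in $C$. The Expansion Lemma is applied to the whole bipartite graph between $C$ and $I$. Your rule and safeness argument for $(H,I)$ carry over verbatim, plus answering YES when $|X|\ge r$. Exhaustive application then gives $|I|<\gamma|C|<2r\gamma^2$.

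\textbf{Your hub rule is worth keeping.} The paper's edge estimate $|E(C,I)|\le|C|\cdot|I|<\gamma|C|^2$ is only $\mathcal{O}(r^2\gamma^3)$. For instance, $K_{s,(\gamma+1)s-1}$ with $s=r\gamma-1$ is reduced under its rules. Add your rule with threshold $r(\gamma+1)$, so that fewer than $r$ hubs remain. The edges at hubs then number fewer than $r|V|=\mathcal{O}(r^2\gamma^2)$. Every other edge has a non-hub endpoint in $C$, giving fewer than $|C|\cdot r(\gamma+1)=\mathcal{O}(r^2\gamma^2)$ more. This yields the claimed $\mathcal{O}((r\gamma)^2)$ edges.
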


Let $(G=(V,E),r,\gamma)$ be an instance of \textsc{ECGP}.
Note that an isolated vertex contributes no edges to any part and can be added to any part without affecting feasibility.

\begin{uccp}\label{isolated vertices}
  If $v \in V$ is isolated, delete $v$.  
\end{uccp}

\noindent Next, we provide a reduction rule for graph with large matching.

\begin{uccp}\label{large matching}
If $G$ contains a matching of size at least $r\gamma$, return a YES-instance.
\end{uccp}

\begin{lemma}\label{lem:large matching}
Reduction Rule~\ref{large matching} is correct.
\end{lemma}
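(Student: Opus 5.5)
We need to show that if $G$ has a matching $M$ with $|M|\ge r\gamma$, then $(G,r,\gamma)$ is a YES-instance of \textsc{ECGP}; this is exactly the statement that returning YES is safe. The plan is a direct construction that generalizes the backward direction of the proof for $\gamma=1$. Choose any $r\gamma$ edges of $M$ and split them into $r$ groups $M_1,\dots,M_r$ of exactly $\gamma$ edges each.

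First, for each $i\in[r]$, let $V_i$ be the set of $2\gamma$ endpoints of the edges in $M_i$. Since $M$ is a matching, these sets are pairwise disjoint. Second, assign every vertex not yet covered, that is, every vertex of $V\setminus\bigcup_i V_i$, to an arbitrary part, say $V_1$. This gives a partition of $V$ into exactly $r$ sets, and each set is nonempty because $\gamma\ge 1$.

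Third, we verify the edge constraint. Each $V_i$ contains both endpoints of the $\gamma$ distinct edges of $M_i$, so $|E(G[V_i])|\ge\gamma$. Enlarging $V_1$ only adds vertices, and the induced edge count is monotone under adding vertices, so the bound still holds for $V_1$ after the extra vertices are assigned. Hence the partition is a valid solution.

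The only step needing care is the degenerate parameter range. If $\gamma\le 0$, the constraint is trivial, so any partition of $V$ into $r$ parts works as long as $n\ge r$; I would either treat this case separately or assume, as the paper implicitly does, that $\gamma\ge 1$ in this section. For $\gamma\ge 1$, the matching has at least $r$ edges, so $n\ge 2r$ and the parts are automatically nonempty. Beyond this, the argument is purely constructive, and I do not expect any real obstacle. For completeness, the rule can be applied in polynomial time: compute a maximum matching and compare its size with $r\gamma$.
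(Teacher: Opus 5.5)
Your proposal is correct and is essentially the paper's proof: take $r\gamma$ matching edges, split them into $r$ groups of $\gamma$, use the endpoint sets as the parts, and place the leftover vertices into existing parts, using that induced edge counts cannot decrease when vertices are added. Your remark on the case $\gamma\le 0$ is a harmless extra that the paper does not need, since its definition of a partition does not require the parts to be nonempty.
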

\begin{proof}
Suppose that $G$ contains a matching $M$ of size at least $r\gamma$. Choose any submatching $M' \subseteq M$ of size exactly $r\gamma$. Since the edges of $M'$ are pairwise disjoint, they cover exactly $2r\gamma$ distinct vertices.
Partition the edge set $M'$ into $r$ groups $M_1,\dots,M_r$, each containing exactly $\gamma$ edges. For every $j \in [r]$, let $V_j$ be the set of endpoints of the edges in $M_j$. Then the sets $V_1,\dots,V_r$ are pairwise disjoint, because the edges of $M'$ are pairwise disjoint.
Moreover, for every $j \in [r]$, the graph $G[V_j]$ contains all $\gamma$ edges of $M_j$. Hence $|E(G[V_j])| \ge \gamma$.

Now assign every vertex of $V(G) \setminus \bigcup_{j=1}^r V_j$ arbitrarily to one of the sets $V_1,\dots,V_r$. This yields a partition of $V(G)$ into exactly $r$ parts. Since adding vertices to a part cannot delete edges already present inside that part, each resulting part still induces at least the $\gamma$ edges coming from its corresponding group $M_j$.
Therefore, the resulting partition is a valid solution to \textsc{ECGP}. It follows that every instance satisfying the premise of Reduction Rule~\ref{large matching} is a YES-instance, and hence the rule is correct.
\end{proof}

\noindent Hence, in the reduced instance, every matching has size less than $r\gamma$.
Let $M$ be a maximal matching. Then $|M| < r\gamma$, and the endpoints of $M$ form a vertex cover $C$ with $|C| \le 2r\gamma$. Let $I := V \setminus C$.
Since isolated vertices have been removed, every vertex in $I$ has at least one neighbor in $C$. Thus, the graph induced by edges between $C$ and $I$ is bipartite with no isolated vertices in $I$. Next, we use the following lemma to construct the next reduction rule.

\begin{definition}
Let $G$ be a bipartite graph with vertex bipartition $(A,B)$. 
  For a positive integer $q$, a
set of edges $M\subseteq E(G)$ is called by a $q$-expansion of $A$ into $B$ if
every vertex of $A$ is incident to exactly $q$ edges of $M$;
$M$ saturates exactly $q|A|$ vertices in $B$.
\end{definition}

\begin{lemma}[Expansion Lemma~\cite{10.1145/1721837.1721848}]\label{expansion lemma}
\label{lem:expansion_kernel}
Let $q \ge 1$ and let $G$ be a bipartite graph with bipartition $(A,B)$ such that $|B| \ge q|A|$ and there are no isolated vertices in $B$. Then there exist nonempty sets $X \subseteq A$ and $Y \subseteq B$ such that:
\begin{itemize}
    \item there is a $q$-expansion from $X$ into $Y$,
    \item and $N(Y) \subseteq X$.
\end{itemize}
Furthermore, such sets can be found in polynomial time.
\end{lemma}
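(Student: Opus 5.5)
Rather than treating Lemma~\ref{lem:expansion_kernel} purely as a black box, I would prove it by induction on $|A|$, with a single matching computation in an auxiliary graph at each step. Let $H$ be the bipartite graph obtained from $G$ by replacing every vertex $a\in A$ with $q$ copies $a^1,\dots,a^q$, each adjacent to exactly $N_G(a)$. Let $A'$ be the set of all copies, so $|A'|=q|A|$. A matching in $H$ that saturates all copies of the vertices of a set $X\subseteq A$ is precisely a $q$-expansion of $X$. Indeed, each $a\in X$ then receives exactly $q$ edges, and the $q|X|$ matched vertices of $B$ are distinct.

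\emph{Case 1: $H$ has a matching saturating $A'$.} Set $X:=A$ and $Y:=B$. The matching gives a $q$-expansion of $A$ into $B$, and $N(Y)\subseteq A=X$ holds trivially. Both sets are nonempty because $B$ has no isolated vertices and $|B|\ge q|A|\ge q\ge 1$.

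\emph{Case 2: no such matching exists.} By K\H{o}nig's theorem, $H$ has a vertex cover $K$ with $|K|<q|A|$, and $K$ can be computed in polynomial time. Let $Z:=A'\setminus K$ be the uncovered copies and let $A_0\subseteq A$ be their originals. Since $|K\cap A'|\le |K|<|A'|$, the set $Z$ is nonempty, and so is $A_0$. Every edge at a vertex of $Z$ must be covered by its $B$-endpoint, so
\[
N_G(A_0)=N_H(Z)\subseteq K\cap B .
\]
Counting gives $|K\cap B| = |K|-|K\cap A'| < q|A|-|K\cap A'| = |Z|\le q|A_0|$. Hence $|N_G(A_0)|<q|A_0|$, so $A_0$ is a deficient set.

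Now delete $A_0\cup N_G(A_0)$ to obtain $G'$ with sides $A\setminus A_0$ and $B\setminus N_G(A_0)$. Three conditions must hold before we can recurse.
\begin{itemize}
\item The size condition holds: $|B\setminus N_G(A_0)|>q|A|-q|A_0|=q|A\setminus A_0|$.
\item No vertex of $B\setminus N_G(A_0)$ is isolated in $G'$. Such a vertex had a neighbor in $A$, and that neighbor cannot lie in $A_0$, so it survives.
\item The side $A\setminus A_0$ is nonempty. Otherwise every vertex of $B$ would lie in $N_G(A_0)$, giving $q|A|\le|B|=|N_G(A_0)|<q|A_0|=q|A|$, a contradiction.
\end{itemize}
Since $|A\setminus A_0|<|A|$, induction applies to $G'$ and returns nonempty $X\subseteq A\setminus A_0$ and $Y\subseteq B\setminus N_G(A_0)$ with a $q$-expansion of $X$ into $Y$ and $N_{G'}(Y)\subseteq X$. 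Because $Y$ avoids $N_G(A_0)$, no vertex of $Y$ has a neighbor in $A_0$. Hence $N_G(Y)=N_{G'}(Y)\subseteq X$, and the expansion edges are also edges of $G$, so $(X,Y)$ works for $G$.

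Each step removes at least one vertex of $A$ and costs one maximum-matching computation plus one K\H{o}nig cover, so the whole procedure runs in polynomial time. The step I expect to be the main obstacle is extracting a deficient set that consists of \emph{whole} vertices of $A$ rather than individual copies; this is resolved by the observation that the uncovered copies force their entire common neighborhood into $K$. The remaining care is in checking the three conditions above so that the recursion stays within the hypotheses of the lemma.
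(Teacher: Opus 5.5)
The paper does not prove this lemma; it imports it as a black box from Thomass\'e~\cite{10.1145/1721837.1721848}, so there is no in-paper argument to compare with. Your proof is correct and self-contained, and it is essentially the standard argument from the kernelization literature.

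Replacing each $a\in A$ by $q$ twins turns $q$-expansions into matchings. K\H{o}nig's theorem (equivalently, Hall's theorem) then yields a deficient set. Passing to the originals of the uncovered copies gives a set $A_0$ of whole vertices with $N_G(A_0)\subseteq K\cap B$ and $|N_G(A_0)|<q|A_0|$. Deleting $A_0\cup N_G(A_0)$ preserves all hypotheses. The lifted solution is valid because $Y$ avoids $N_G(A_0)$, so $N_G(Y)=N_{G'}(Y)$. Your third check also shows that Case~2 is impossible when $|A|=1$, so the strong induction needs no separate base case.

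Compared with the bare citation, your route gives an explicit algorithm: at most $|A|$ rounds, each consisting of one maximum-matching computation and one K\H{o}nig-cover computation. The citation buys only brevity.

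Two small points to tidy up. First, the statement as written is false for the empty graph, since $A=B=\emptyset$ satisfies both hypotheses. Your nonemptiness sentence in Case~1 silently uses $|A|\ge 1$ to get $|B|\ge q$. Make $A\neq\emptyset$ a standing assumption; this is harmless because $A=\emptyset$ forces $B=\emptyset$ by the no-isolated-vertices condition. Second, for the running time, note that $H$ has polynomial size because $q|A|\le|B|$, so the blow-up is harmless even if $q$ is encoded in binary.
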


\begin{uccp}\label{u-expansion}
    Suppose there exist nonempty sets $X \subseteq C$ and $Y \subseteq I$ such that:
    \begin{itemize}
        \item there is a $\gamma$-expansion from $X$ into $Y$,
        \item and $N(Y) \subseteq X$.
    \end{itemize} 
Then:
\begin{itemize}
    \item if $|X| \ge r$, return a YES-instance;
    \item otherwise replace $(G,r,\gamma)$ by $(G-(X \cup Y),\, r-|X|,\, \gamma)$.
\end{itemize}
\end{uccp}

\begin{lemma}
    Reduction Rule~\ref{u-expansion} is correct.
\end{lemma}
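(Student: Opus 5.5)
We prove the two outcomes of Reduction Rule~\ref{u-expansion} separately, and in each case we exploit the $\gamma$-expansion $M$ from $X$ into $Y$. For each $x\in X$, let $Y_x\subseteq Y$ be the set of $\gamma$ vertices matched to $x$ by $M$. By the definition of a $\gamma$-expansion, the sets $Y_x$ are pairwise disjoint, and each star $\{x\}\cup Y_x$ induces at least $\gamma$ edges, namely the $\gamma$ expansion edges at $x$.

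\emph{Case $|X|\ge r$.} Pick $r$ distinct vertices $x_1,\dots,x_r\in X$ and set $V_j:=\{x_j\}\cup Y_{x_j}$. These sets are pairwise disjoint and each induces at least $\gamma$ edges. Assign all remaining vertices arbitrarily to the parts, exactly as in the proof of Lemma~\ref{lem:large matching}. Adding vertices to a part never destroys induced edges, so this gives a valid solution and the instance is YES.

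\emph{Case $|X|<r$, forward direction.} Suppose $(G,r,\gamma)$ is a YES-instance. We want to show that $(G',r-|X|,\gamma)$ with $G'=G-(X\cup Y)$ is YES.

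The key observation is the following. Since $Y\subseteq I$ is independent and $N(Y)\subseteq X$, every edge incident to $Y$ has its other endpoint in $X$. Hence, for any part $P$, the edges of $G[P]$ that are lost when deleting $X\cup Y$ are exactly those incident to $P\cap X$.

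Take a solution $V_1,\dots,V_r$ of $G$ and call a part \emph{hit} if it meets $X$. At most $|X|$ parts are hit. The parts that are not hit lose no edges when we pass to $G'$, because every edge they lose would have to touch $X\cup Y$, and any edge touching $Y$ also touches $X$. So the non-hit parts still induce at least $\gamma$ edges in $G'$, and there are at least $r-|X|\ge 1$ of them.

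We then merge the remnants of the hit parts into one of the non-hit parts. Merging only adds edges, so that part still induces at least $\gamma$ edges. If more than $r-|X|$ parts are non-hit, we merge the surplus non-hit parts together until exactly $r-|X|$ parts remain; again merging only adds edges. The result is a partition of $V(G')$ into $r-|X|$ parts, each inducing at least $\gamma$ edges.

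\emph{Case $|X|<r$, backward direction.} Suppose $W_1,\dots,W_{r-|X|}$ is a solution for $G'$. Add the $|X|$ stars $\{x\}\cup Y_x$ as new parts, each inducing at least $\gamma$ edges. The vertices of $Y$ not covered by any $Y_x$ are added to any of these star parts. This gives a partition of $V(G)$ into exactly $r$ parts: the remaining parts are untouched induced subgraphs of $G'\subseteq G$, and each star part has at least $\gamma$ edges.

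\emph{Main obstacle.} The delicate point is the forward direction, specifically the claim that deleting $X\cup Y$ can only damage parts that intersect $X$. This is exactly where both $N(Y)\subseteq X$ and the independence of $I$ are needed. A secondary detail is keeping the part count exactly $r-|X|$ rather than at least $r-|X|$; this is handled by merging surplus parts, which is harmless because merging never reduces the number of induced edges.
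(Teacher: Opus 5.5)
Your proof is correct and follows essentially the same route as the paper's. You use $r$ of the disjoint stars $\{x\}\cup Y_x$ when $|X|\ge r$. In the forward direction you keep the at least $r-|X|\ge 1$ parts avoiding $X$, absorb the remnants of the other parts, and merge any surplus. In the backward direction you add the $|X|$ stars as new parts. The only cosmetic difference is that the paper first relocates vertices of $Y$ lying in parts that avoid $X$, whereas you delete them directly and use $N(Y)\subseteq X$ to see that no edges are lost, which is the same observation.
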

\begin{proof}
We prove both directions.

\medskip
\noindent
\emph{Forward direction.}
Assume that $(G,r,\gamma)$ is a YES-instance, and let $\mathcal{P}$ be a partition of $V(G)$ into $r$ parts, each inducing at least $\gamma$ edges.

Let $B_1,\dots,B_p$ be the parts that intersect $X$, and let $A_1,\dots,A_{r-p}$ be the remaining parts. Since each $B_j$ contains at least one vertex of $X$, we have $p \le |X|$.
We may assume without loss of generality that all vertices of $Y$ are assigned to parts intersecting $X$. Indeed, since $N(Y) \subseteq X$, any vertex $y \in Y$ has all its neighbors inside $X$, and hence contributes edges only to parts containing vertices of $X$. Therefore, moving vertices of $Y$ into parts intersecting $X$ does not decrease the number of edges induced by any parts.

Using the $\gamma$-expansion, we construct $|X|$ parts entirely from $X \cup Y$ as follows. For each $x \in X$, select a set $U_x \subseteq Y$ of $\gamma$ distinct neighbors, disjoint over all $x$. Then each set $V_x := \{x\} \cup U_x$ induces at least $\gamma$ edges and forms a valid part.
Let $R := (\bigcup_{j=1}^p B_j) \setminus (X \cup Y)$ be the set of remaining vertices originally placed in parts intersecting $X$.
If $|X| \ge r$, then the sets $V_x$ for any $r$ choices of $x \in X$ already give a valid solution, and the remaining vertices can be assigned arbitrarily. Thus $(G,r,\gamma)$ is a YES-instance, and the rule correctly returns YES.
Otherwise, $|X| < r$. Then $r-p > 0$, and hence there exists at least one part among $A_1,\dots,A_{r-p}$. Distribute the vertices of $R$ arbitrarily among these parts. Since each of these parts already induces at least $\gamma$ edges, their validity is preserved.
Thus, we obtain a partition of $V(G) \setminus (X \cup Y)$ into $r-p$ parts, each inducing at least $\gamma$ edges. Since $r-p \ge r-|X|$, we can merge parts if necessary to obtain exactly $r-|X|$ parts, each still inducing at least $\gamma$ edges.
Therefore, $(G-(X \cup Y), r-|X|, \gamma)$ is a YES-instance.

\medskip
\noindent
\emph{Reverse direction.}
Assume that $(G-(X \cup Y), r-|X|, \gamma)$ is a YES-instance. Then there exists a partition of $V(G) \setminus (X \cup Y)$ into $r-|X|$ parts, each inducing at least $\gamma$ edges.
Since there is a $\gamma$-expansion from $X$ into $Y$, for every $x \in X$ we can select a set $U_x \subseteq Y$ of exactly $\gamma$ distinct neighbors of $x$, such that the sets $U_x$ are pairwise disjoint over all $x \in X$.
For each $x \in X$, consider the set $V_x := \{x\} \cup U_x$. Since every vertex in $U_x$ is adjacent to $x$, the induced subgraph $G[V_x]$ contains at least $\gamma$ edges, and hence $V_x$ forms a valid part.
Assign the remaining vertices of $Y$ arbitrarily to these $|X|$ parts. Since adding vertices cannot remove edges, each such part still induces at least $\gamma$ edges.
Together with the $r-|X|$ parts from $G-(X \cup Y)$, this yields a partition of $V(G)$ into exactly $r$ parts, each inducing at least $\gamma$ edges. Hence $(G,r,\gamma)$ is a YES-instance.
This completes the proof.
\end{proof}

\begin{proof}[Proof of Theorem~\ref{thm:kernel_ru}]
Given an instance $(G=(V,E),r,\gamma)$ of \textsc{ECGP}, we apply Reduction Rules~\ref{isolated vertices}, \ref{large matching}, and~\ref{u-expansion} exhaustively. By the correctness of these rules, every application preserves equivalence. Hence the final reduced instance is equivalent to the original one. If at some point Reduction Rule~\ref{large matching} applies, then we immediately return a YES-instance. Likewise, if Reduction Rule~\ref{u-expansion} applies with $|X| \ge r$, then we again return a YES-instance. Thus, we may assume that neither of these situations occurs.

Assume that Reduction Rule~\ref{isolated vertices}, \ref{large matching} and \ref{u-expansion} are applied exhaustively. Let $M$ be a maximal matching in the reduced graph. Since Reduction Rule~\ref{large matching} is no longer applicable, we have $|M| < r\gamma$. Let $C$ be the set of endpoints of the edges of $M$, and let $I := V(G)\setminus C$. Since $M$ is maximal, $C$ is a vertex cover of $G$, and hence $I$ is an independent set. Moreover, $|C| \le 2|M| < 2r\gamma$.

As Reduction Rule~\ref{isolated vertices} has been applied exhaustively, every vertex of $I$ has at least one neighbor in $C$. Consider the bipartite graph induced by edges between $C$ and $I$.
If $|I| \ge \gamma|C|$, then by Lemma~\ref{expansion lemma} there exist nonempty sets $X \subseteq C$ and $Y \subseteq I$ such that there is a $\gamma$-expansion from $X$ into $Y$ and $N(Y)\subseteq X$. This would make Reduction Rule~\ref{u-expansion} applicable, contradicting that the instance is reduced. Hence $|I| < \gamma|C|$.
Therefore, $|I| < \gamma|C| < 2r\gamma^2$, and thus
$|V(G)| = |C| + |I| < 2r\gamma + 2ru^2 = \mathcal{O}(r\gamma^2)$.

We now bound the number of edges. Since $I$ is an independent set, every edge of $G$ has at least one endpoint in $C$. Thus,
$|E(G)| \le |E(G[C])| + |E(C,I)|$.
We have $|E(G[C])| \le \binom{|C|}{2} = \mathcal{O}((r\gamma)^2)$. Moreover, since every vertex in $I$ has all its neighbors in $C$, we have
$|E(C,I)| \le |C|\cdot|I| < |C| \cdot \gamma|C| = \mathcal{O}((r\gamma)^2)$.
Therefore, $|E(G)| = \mathcal{O}((r\gamma)^2)$
\end{proof}

\begin{theorem} \label{thm:becgp-kernel} 
\textsc{BECGP} parameterized by $r+\gamma$ admits a polynomial kernel with $\mathcal{O}(r\gamma^2+r^2)$ vertices. 
\end{theorem}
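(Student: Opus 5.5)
We adapt the kernel of Theorem~\ref{thm:kernel_ru}, while keeping track of part sizes. Isolated vertices cannot simply be deleted, since they matter for balance. Also, removing $X\cup Y$ by the expansion rule changes $n$ and hence the target size $n/r$. The plan is to keep the information about vertices that contribute no edges as a \emph{count}, and re-materialize it as isolated vertices at the end.

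\textbf{Step 1 (large matching).} Suppose $G$ has a matching of size at least $r\gamma$. As in Lemma~\ref{lem:large matching}, group its edges into $r$ disjoint sets of $2\gamma$ vertices, each inducing at least $\gamma$ edges. If $2\gamma \le n/r$, pad each set arbitrarily with the remaining vertices to size exactly $n/r$ and answer YES. If $2\gamma > n/r$, then $n < 2r\gamma$ and the instance is already a kernel. Hence we may assume a maximal matching $M$ with $|M|<r\gamma$, a vertex cover $C$ of size less than $2r\gamma$, and $I=V\setminus C$ independent.

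\textbf{Step 2 (bounding $I$).} Split $I$ into the isolated vertices $I_0$ and the non-isolated ones $I_1$. If $|I_1|\ge \gamma|C|$, the Expansion Lemma~\ref{lem:expansion_kernel} yields $X\subseteq C$ and $Y\subseteq I_1$ with a $\gamma$-expansion and $N(Y)\subseteq X$.

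\begin{itemize}
\item If $|X|\ge r$, the stars $\{x\}\cup U_x$ for $r$ choices of $x$ give $r$ parts of size $\gamma+1$, each inducing at least $\gamma$ edges. When $\gamma+1\le n/r$ we pad them and answer YES. Otherwise $n<r(\gamma+1)$, which is already small.
\item If $|X|<r$, we do not delete $X\cup Y$. Instead, we follow the forward direction of the ECGP proof: in some solution, the $|X|$ stars can be made into whole parts, and the vertices displaced from them are redistributed. Doing this under equal sizes is the main obstacle.
\end{itemize}

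Because of that obstacle, I would try a simpler route first: bound the number of \emph{degree-relevant} vertices and replace the rest by isolated vertices. Concretely, I would keep an expansion-style bound. For each $c\in C$, mark up to $\gamma+r$ of its neighbours in $I_1$, greedily avoiding reuse, a standard marking argument. This keeps $O(|C|(\gamma+r))$ vertices. Every unmarked vertex of $I_1$ is turned into an isolated vertex by deleting its edges.

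To justify this, take a solution and look at the unmarked vertices $u$ whose edges were used. Each such $u$ is adjacent to some $c$ that has $\gamma+r$ marked neighbours. Those neighbours are either unused or can be swapped with $u$ within the same size constraint, since swapping one vertex for another preserves part sizes. This gives a kernel with $O(r\gamma^2+r^2)$ non-isolated vertices after charging more carefully: use the expansion lemma with $q=\gamma$ to bound the unmarked surplus, and add $r$ extra marked vertices per relevant part.

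\textbf{Step 3 (isolated vertices).} Let $k$ be the number of non-isolated vertices, which is bounded by Step 2. Isolated vertices only pad parts. The instance is YES iff some partition of the non-isolated vertices into $r$ parts has all parts good and each part of size at most $n/r$; the isolated vertices then fill the gaps exactly, since the total size is $n$.

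If $n/r\ge k$, the size constraint on the non-isolated vertices is vacuous. We can then reduce the number of isolated vertices to $rk-k$, which keeps the target part size at $k$ and is still vacuous. This gives $O(rk)$ vertices. To reach $O(r\gamma^2+r^2)$, I instead use a target part size of $\max(2\gamma+\ldots)$: since good parts need only $O(\gamma)$ vertices from the cover side plus their private neighbours, a cap of size about $k/r+\gamma$ suffices. Otherwise $n<rk$ already.

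The delicate step is the swapping argument in Step 2 under equal part sizes. That is where I expect the $r^2$ term to arise, from reserving $r$ spare neighbours per cover vertex, or from keeping $r$ additional vertices per part.
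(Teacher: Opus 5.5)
\textbf{There is a genuine gap in Step~2.} Step~2 is where your bound on the non-isolated vertices has to come from. You never show that the rule ``mark up to $\gamma+r$ neighbours in $I_1$ of each $c\in C$ and delete all edges of unmarked vertices'' is safe, and the swap argument you sketch fails for two reasons.

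\begin{itemize}
\item A marked neighbour $w$ of $c$ may itself be the endpoint of a needed edge $wc'$ in some part. Up to $r\gamma$ vertices of $I$ can be needed at the same time, which is more than $\gamma+r$. So no free marked neighbour need exist, and exchanging $u$ with $w$ can destroy $wc'$.
\item $u$ may supply several needed edges to its part through different cover vertices. One exchange restores only one of them.
\end{itemize}

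\emph{Concrete failure.} Take $\gamma=4$ and $r=2$. Let $ab$ be an edge with $N(b)=\{a,x_1,\dots,x_4\}$ and $N(a)=\{b,x_1,\dots,x_4,y_1,\dots,y_4\}$, and add six isolated vertices, so $n/r=8\ge 2\gamma$. Then $\{ab\}$ is a maximal matching, so $C=\{a,b\}$. The instance is YES via $\{a,y_1,\dots,y_4\}$ and $\{b,x_1,\dots,x_4\}$, each padded with three isolated vertices. Suppose $a$ is processed first and marks $x_1,\dots,x_4,y_1,y_2$. Then $y_3$ and $y_4$ lose their edges, and the instance becomes NO:
\begin{itemize}
\item every edge meets $\{a,b\}$, so $a$ and $b$ must lie in different parts;
\item the part of $b$ must then contain all four $x_i$;
\item so the part of $a$ induces at most two edges.
\end{itemize}

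\emph{Counting problems.} Even if the rule were safe, $|C|(\gamma+r)$ is only bounded by $O(r\gamma^2+r^2\gamma)$, and ``charging more carefully'' is not an argument. Marking more than $r\gamma$ neighbours per cover vertex would make a one-edge-at-a-time exchange argument work, but that gives only $O(r^2\gamma^2)$. Likewise, your fallback ``otherwise $n<rk$'' in Step~3 gives $O(rk)$ vertices, not the claimed bound.

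\textbf{The missing idea is already half-present in your Step~3.} Once $B:=n/r\ge 2\gamma$, the balance constraint is irrelevant for \emph{all} vertices, not just isolated ones. So your threshold $n/r\ge k$ can be lowered to $n/r\ge 2\gamma$. Given any \textsc{ECGP} solution, fix $\gamma$ induced edges in each part. Their endpoint sets $W_i$ are pairwise disjoint with $|W_i|\le 2\gamma\le B$. Every other vertex is filler that can bring each part to size exactly $B$. Hence for $B\ge 2\gamma$ the \textsc{BECGP} instance is equivalent to the \textsc{ECGP} instance on the same graph, and for $B<2\gamma$ we already have $n<2r\gamma$.

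This removes your ``main obstacle''. Run the \textsc{ECGP} kernel unchanged, including the deletion of $X\cup Y$ and $r\mapsto r-|X|$, to obtain $(G',r',\gamma)$. Then add isolated vertices until the part size is $\max\{2\gamma,\lceil |V(G')|/r'\rceil\}$. This yields $O(r\gamma^2)$ vertices.

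The paper takes a different route. It keeps $X\cup Y$ in the graph, marks expansion blocks iteratively, and deletes batches of $r$ pseudo-isolated vertices, lowering $B$ by one each time. Its correctness argument is the same witness-and-pad argument, used under $B\ge 2\gamma+1$. Its $r^2$ term comes from fewer than $r$ marking rounds, each leaving fewer than $r$ pseudo-isolated vertices.
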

\begin{proof} Let $(G,r,\gamma)$ be an instance of \textsc{BECGP}, where $G=(V,E)$ and $n:=|V|$. Recall that in \textsc{BECGP} we ask whether $V$ can be partitioned into exactly $r$ parts $V_1,\ldots,V_r$ such that every part has size exactly $n/r$ and induces at least $\gamma$ edges. \\

\noindent \textbf{Step 1: Preliminary reduction rules and a bounded vertex cover.} We begin with an elementary reduction rule. 

\begin{buccp}\label{BUCCP large matching}
 If $G$ contains a matching of size at least $r\gamma$, then return a trivial YES-instance.   
\end{buccp}

\begin{lemma}
    Reduction Rule \ref{BUCCP large matching} is correct.
\end{lemma}
\begin{proof}
    We now prove the correctness of this rule. 
    Suppose that $G$ contains a matching $M$ of size at least $ru$.
    Choose a submatching $M'\subseteq M$ of size exactly $r\gamma$.
    Since $M'$ is a matching, its edges are pairwise vertex-disjoint.
    Partition the edge set $M'$ into $r$ groups  $M_1,\ldots,M_r$  such that each $M_i$ contains exactly $u$ edges. 
    For every $i\in[r]$, let $S_i$ be the set of endpoints of the edges in $M_i$. 
    Then  $|S_i| = 2\gamma$  and $|E(G[S_i])| \ge \gamma$. Therefore each set $S_i$ can be extended to a set $V_i$ of size exactly $B$ by adding arbitrary unused vertices. Since $rB=n$, all remaining vertices can be distributed among $V_1,\ldots,V_r$ so that every final set has size exactly $B$. Adding vertices to a part cannot destroy edges already induced by that part. Thus each resulting part $V_i$ still satisfies $|E(G[V_i])| \ge \gamma$. Hence we obtain a balanced partition of $V(G)$ into $r$ parts, each of size $B$ and each inducing at least $\gamma$ edges. Therefore $(G,r,\gamma)$ is a YES-instance, and Reduction Rule \ref{BUCCP large matching} is correct.
\end{proof}

After applying the Reduction Rule~\ref{BUCCP large matching}, we may assume that $G$ contains no matching of size $r\gamma$. Let $M$ be a maximal matching in $G$, and let $C$ be the set of endpoints of the edges of $M$. Since $G$ has no matching of size $r\gamma$, we have $|M| < r\gamma$,  and therefore  $|C| = 2|M| < 2r\gamma$.  Moreover, since $M$ is maximal, the set $C$ is a vertex cover of $G$. Hence  $I := V\setminus C$ is an independent set. This completes the first step of the kernelization.

\medskip \noindent \textbf{Step 2: Iterative expansion marking.} We now work with the vertex cover $C$ and the independent set $I:=V\setminus C$ obtained in Step 1, where $|C|<2r\gamma$. Recall that all vertices of $I$ have all their neighbors in $C$. We maintain three sets $A\subseteq C$, $Y^\star\subseteq I$, and $Z^\star\subseteq I$. The set $A$ stores marked vertices of the vertex cover, the set $Y^\star$ stores the independent-set vertices used as expansion witnesses, and $Z^\star$ stores vertices of $I$ that become isolated after the marked vertices are ignored. Initially, $A=Y^\star=Z^\star=\emptyset$. At any point, define the active sets $C_{\mathrm{act}}:=C\setminus A$ and $I_{\mathrm{act}}:=I\setminus (Y^\star\cup Z^\star)$. We consider the bipartite graph induced by the edges between $C_{\mathrm{act}}$ and $I_{\mathrm{act}}$. If there exist nonempty sets $X\subseteq C_{\mathrm{act}}$ and $Y\subseteq I_{\mathrm{act}}$ such that there is a $\gamma$-expansion from $X$ into $Y$ and $N(Y)\cap C_{\mathrm{act}}\subseteq X$, then we mark this expansion block. More precisely, for every $x\in X$, we fix a set $Y_x\subseteq Y$ of exactly $\gamma$ private neighbors of $x$, such that the sets $Y_x$ are pairwise disjoint over all $x\in X$. This is possible by the definition of a $\gamma$-expansion. We may assume, by replacing $Y$ with $\bigcup_{x\in X}Y_x$, that $|Y|=u|X|$. Indeed, the reduced set still witnesses a $\gamma$-expansion from $X$, and all vertices of the reduced set have their active neighborhood contained in $X$. We then update $A:=A\cup X$ and $Y^\star:=Y^\star\cup Y$. After marking $X\cup Y$, we additionally mark the active independent-set vertices whose entire neighborhood is now contained in the marked cover set $A$. That is, we define \[ Z_X := \{v\in I_{\mathrm{act}}\setminus Y : N(v)\subseteq A\}, \] and update $Z^\star:=Z^\star\cup Z_X$. We call the vertices in $Z_X$ pseudo-isolated vertices created by the expansion block $X\cup Y$. We repeat this procedure as long as such a $\gamma$-expansion exists and $|A|<r$. 

\medskip 
\noindent \textbf{Immediate YES case.} If at any point we obtain $|A|\ge r$, then the instance is a YES-instance, unless the whole instance is already bounded by a function of $r+\gamma$. Indeed, suppose first that $B=n/r\ge \gamma+1$. Choose any $r$ vertices $x_1,\ldots,x_r\in A$. For each chosen vertex $x_j$, let $Y_{x_j}$ be its reserved set of $\gamma$ private neighbors, and define $S_j:=\{x_j\}\cup Y_{x_j}$. Then $|S_j|=\gamma+1$ and, since every vertex of $Y_{x_j}$ is adjacent to $x_j$, we have $|E(G[S_j])|\ge \gamma$. Moreover, the sets $S_1,\ldots,S_r$ are pairwise disjoint, because the private neighborhoods chosen for the expansion blocks are pairwise disjoint, and every vertex of $A$ is marked only once. Since $B\ge \gamma+1$, each set $S_j$ can be extended to a set $V_j$ of size exactly $B$ by adding arbitrary unused vertices. As $rB=n$, all remaining vertices can be distributed among $V_1,\ldots,V_r$ so that every set has size exactly $B$. Adding vertices to a part cannot destroy already induced edges. Hence each final part still induces at least $\gamma$ edges. Therefore we obtain a feasible balanced partition, and the instance is a YES-instance. On the other hand, if $B<\gamma+1$, then $n=rB<r(\gamma+1)$, and hence the instance already has fewer than $r(\gamma+1)$ vertices. In this case the instance is already bounded by a polynomial in $r+\gamma$. Thus, from now on, we may assume that the iterative marking procedure stops with $|A|<r$. Since each expansion block contributes exactly $\gamma$ vertices of $I$ for every marked vertex of $C$, we have $|Y^\star|\le u|A|<r\gamma$. In particular, the total number of marked expansion vertices is bounded by $|A|+|Y^\star|<r+r\gamma$. It remains to control the number of pseudo-isolated vertices in $Z^\star$ and the size of the active remainder. This will be done in the next step.

\medskip \noindent \textbf{Step 3: Compressing pseudo-isolated vertices.} We now reduce the number of pseudo-isolated vertices marked during the expansion procedure. Recall that every pseudo-isolated set $Z_X$ created in Step 2 satisfies $Z_X\subseteq I$ and $N(Z_X)\subseteq A$, where $A$ is the set of marked vertices of the vertex cover at the moment $Z_X$ is created. In particular, vertices of $Z_X$ have no neighbors outside the already marked cover vertices. We first observe that if $B\le 2\gamma$, then $n=rB\le 2r\gamma$, and the instance already has at most $2r\gamma$ vertices. Hence, in this case, the instance is already bounded by a polynomial in $r+\gamma$. Thus, for the remainder of this step, assume that $B\ge 2\gamma+1$. 

\begin{buccp}\label{bound pseudo isolated vertices}
 Let $Z_X$ be one of the pseudo-isolated sets created during Step 2. If $|Z_X|\ge r$, then delete arbitrary $r$ vertices from $Z_X$.   
\end{buccp}

\begin{lemma} Reduction Rule~\ref{bound pseudo isolated vertices} is correct. \end{lemma} \begin{proof} We assume $\gamma\ge 1$, since for $\gamma=0$ every instance with $r\mid n$ is trivially a YES-instance. Let $R\subseteq Z_X$ be the set of $r$ vertices deleted by the rule, and let $G':=G-R$. Since $|R|=r$ and $r\mid n$, we have $|V(G')|=n-r$. Thus the new balanced part size is $B':=(n-r)/r=B-1$. Since the rule is applied only when $B\ge 2\gamma+1$, we have $B'\ge 2\gamma$. We prove that $(G,r,\gamma)$ is a YES-instance if and only if $(G',r,\gamma)$ is a YES-instance. 

\medskip \noindent \emph{Reverse direction.} Suppose that $(G',r,\gamma)$ is a YES-instance. Then there is a partition $V(G')=V'_1 \cup \cdots \cup V'_r$ such that $|V'_i|=B'$ and $|E(G'[V'_i])|\ge \gamma$ for every $i\in[r]$. Since $|R|=r$, add one distinct vertex of $R$ to each part $V'_i$. Each resulting part has size $B'+1=B$. Moreover, adding vertices to a part cannot decrease the number of edges induced by that part. Hence every resulting part still induces at least $\gamma$ edges. Therefore $(G,r,\gamma)$ is a YES-instance.

\medskip \noindent \emph{Forward direction.} Suppose that $(G,r,\gamma)$ is a YES-instance, and let $V_1,\ldots,V_r$ be a feasible balanced partition of $G$. Thus $|V_i|=B$ and $|E(G[V_i])|\ge \gamma$ for every $i\in[r]$. We will construct a feasible balanced partition of $G'$. The first goal is to construct pairwise disjoint sets $W_1,\ldots,W_r\subseteq V(G')$ such that $|W_i|\le 2\gamma$ and $|E(G'[W_i])|\ge \gamma$ for every $i\in[r]$. Recall that $R\subseteq Z_X$, and by the definition of pseudo-isolated vertices we have $N(R)\subseteq A$. Also, for every marked cover vertex $x\in A$, the expansion marking step fixed a set $Y_x$ of exactly $\gamma$ private neighbors of $x$. These sets $Y_x$ are pairwise disjoint over all marked vertices $x\in A$, and they are disjoint from every pseudo-isolated set, in particular from $Z_X$. Let \[ \mathcal I_A:=\{i\in[r] : V_i\cap A\neq\emptyset\}. \] For every $i\in\mathcal I_A$, choose one vertex $x_i\in V_i\cap A$ and define $W_i:=\{x_i\}\cup Y_{x_i}$.  Since $|Y_{x_i}|=\gamma$, we have $|W_i|=u+1\le 2\gamma$. Moreover, every vertex of $Y_{x_i}$ is adjacent to $x_i$, so $G[W_i]$ contains at least the $u$ edges between $x_i$ and the vertices of $Y_{x_i}$. Hence $|E(G[W_i])|\ge \gamma$. We also have $W_i\cap R=\emptyset$. Indeed, $x_i\in A\subseteq C$, while $R\subseteq Z_X\subseteq I$, so $x_i\notin R$. Furthermore, $Y_{x_i}\subseteq Y^\star$, and the reserved expansion-witness sets are disjoint from the pseudo-isolated sets; hence $Y_{x_i}\cap R=\emptyset$. Therefore $W_i\subseteq V(G')$ and $|E(G'[W_i])|\ge \gamma$. The sets $W_i$ for $i\in\mathcal I_A$ are pairwise disjoint. The vertices $x_i$ lie in distinct original parts and hence are distinct, and the sets $Y_{x_i}$ are pairwise disjoint by construction. Now let \[ Q:=\bigcup_{i\in\mathcal I_A} W_i. \] We next define $W_j$ for every $j\notin\mathcal I_A$. Fix such an index $j$. Then $V_j\cap A=\emptyset$. We claim that every vertex of $(R\cup Q)\cap V_j$ is isolated inside $G[V_j]$. First, if $v\in R\cap V_j$, then $N(v)\subseteq A$, while $V_j\cap A=\emptyset$. Hence $v$ has no neighbor in $V_j$. Second, if $v\in Q\cap V_j$, then $v$ cannot be one of the vertices $x_i$, since all such vertices lie in $A$ and $V_j\cap A=\emptyset$. Thus $v$ belongs to some reserved set $Y_{x_i}$. By construction of the expansion marking step, every vertex in such a reserved set has all its neighbors in the marked cover set $A$. Again, since $V_j\cap A=\emptyset$, the vertex $v$ has no neighbor in $V_j$. This proves the claim. Since $G[V_j]$ induces at least $\gamma$ edges and the vertices of $(R\cup Q)\cap V_j$ are isolated inside $G[V_j]$, none of these isolated vertices is needed as an endpoint of an induced edge in $G[V_j]$. Therefore we may choose $u$ induced edges in $G[V_j]$ whose endpoints avoid $R\cup Q$. 

Let $W_j$ be the set of endpoints of these $\gamma$ edges. Then $W_j\subseteq V(G')$, $W_j\cap Q=\emptyset$, $|W_j|\le 2\gamma$, and $|E(G'[W_j])|\ge \gamma$. Doing this for every $j\notin\mathcal I_A$, we obtain sets $W_1,\ldots,W_r\subseteq V(G')$. They are pairwise disjoint: the sets for indices in $\mathcal I_A$ are pairwise disjoint by the private-neighbor construction; the sets for indices outside $\mathcal I_A$ are chosen inside distinct original parts; and, by construction, every such set avoids $Q$, the union of the already chosen witness sets. Thus, for every $i\in[r]$, we have $|W_i|\le 2\gamma$ and $|E(G'[W_i])|\ge \gamma$. Since $B'\ge 2\gamma$, each $W_i$ can be extended to a set of size exactly $B'$ by adding arbitrary unused vertices of $G'$. Finally, $|V(G')|=rB'$. Hence all remaining vertices of $G'$ can be distributed among the sets $W_1,\ldots,W_r$ so that each resulting part has size exactly $B'$. Adding vertices cannot destroy the already witnessed $\gamma$ induced edges. Therefore each final part induces at least $\gamma$ edges, and we obtain a feasible balanced partition of $G'$. Hence $(G',r,\gamma)$ is a YES-instance. This proves the correctness of Reduction Rule~\ref{bound pseudo isolated vertices}. 
\end{proof}

We apply Reduction Rule 4 exhaustively to every pseudo-isolated set created during Step 2. After this exhaustive application, every such set has size less than $r$. Since the expansion marking procedure stops with $|A|<r$, there are fewer than $r$ expansion rounds. Hence the total number of pseudo-isolated vertices that remain after applying Reduction Rule 4 is less than $r\cdot r=r^2$. That is, $|Z^\star|<r^2$. The remaining task is to bound the size of the active remainder. This will be done in the next step using the Expansion Lemma.

\medskip \noindent \textbf{Step 4: Bounding the active remainder.} It remains to bound the number of active vertices, that is, the vertices that were neither used in expansion blocks nor marked as pseudo-isolated. Recall that after Step 2 we have sets $A\subseteq C$, $Y^\star\subseteq I$, and $Z^\star\subseteq I$, where $A$ is the set of marked cover vertices, $Y^\star$ is the set of independent-set vertices reserved as expansion witnesses, and $Z^\star$ is the set of pseudo-isolated vertices. Define $C_{\mathrm{act}}:=C\setminus A$ and $I_{\mathrm{act}}:=I\setminus (Y^\star\cup Z^\star)$. We claim that $|I_{\mathrm{act}}|<u|C_{\mathrm{act}}|$. Suppose, for contradiction, that $|I_{\mathrm{act}}|\ge u|C_{\mathrm{act}}|$. Consider the bipartite graph induced by the edges between $C_{\mathrm{act}}$ and $I_{\mathrm{act}}$. By the definition of $Z^\star$, no vertex of $I_{\mathrm{act}}$ has all its neighbors contained in the already marked cover set $A$. Since $C$ is a vertex cover, every vertex of $I$ has all its neighbors in $C$. Hence every vertex of $I_{\mathrm{act}}$ has at least one neighbor in $C_{\mathrm{act}}$. Therefore, the bipartite graph between $C_{\mathrm{act}}$ and $I_{\mathrm{act}}$ has no isolated vertices on the $I_{\mathrm{act}}$ side. Since $|I_{\mathrm{act}}|\ge \gamma|C_{\mathrm{act}}|$, the Expansion Lemma yields nonempty sets $X\subseteq C_{\mathrm{act}}$ and $Y\subseteq I_{\mathrm{act}}$ such that there is a $\gamma$-expansion from $X$ into $Y$ and $N(Y)\cap C_{\mathrm{act}}\subseteq X$. But this means that the iterative expansion marking procedure from Step 2 could have continued, contradicting the fact that it stopped. Thus $|I_{\mathrm{act}}|<\gamma|C_{\mathrm{act}}|$. Since $C_{\mathrm{act}}\subseteq C$ and $|C|<2r\gamma$, we get $|I_{\mathrm{act}}|<\gamma|C_{\mathrm{act}}|\le \gamma|C|<2r\gamma^2$. 

\medskip \noindent \textbf{Step 5: Kernel-size bound.} We now bound the total number of vertices remaining in the instance. The vertex set is partitioned as $V(G)=C\cup I_{\mathrm{act}}\cup Y^\star\cup Z^\star$. We bound each term separately. First, from Step 1, $|C|<2r\gamma$. Second, by Step 4, $|I_{\mathrm{act}}|<2r\gamma^2$. Third, since the expansion marking procedure stopped with $|A|<r$, and each marked cover vertex reserves exactly $\gamma$ private expansion neighbors, we have $|Y^\star|\le \gamma|A|<r\gamma$. Fourth, by exhaustive application of Reduction Rule 4, every pseudo-isolated set has size less than $r$. Since the expansion marking procedure has fewer than $r$ rounds, we have $|Z^\star|<r^2$. Combining the above bounds, we obtain \[ \begin{aligned} 
|V(G)| &\le |C| + |I_{\mathrm{act}}| + |Y^\star| + |Z^\star| \\ &< 2r\gamma + 2r\gamma^2 + r\gamma + r^2 \\ &= 2r\gamma^2 + 3r\gamma + r^2. \end{aligned} \] 
Hence the reduced instance has $\mathcal{O}(r\gamma^2+r^2)$ vertices. All reduction rules are polynomial-time computable and preserve equivalence, except when they correctly return a trivial YES- or NO-instance. Hence this gives a polynomial kernel for \textsc{BECGP} parameterized by $r+\gamma$ with $\mathcal{O}(r\gamma^2+\gamma^2)$ vertices.
This completes the proof of Theorem~\ref{thm:becgp-kernel}.
\end{proof}

\subsection{ILP-Based FPT Algorithms}

Our algorithmic results in this subsection are based on formulating the problem as an ILP with a bounded number of variables or as $N$-fold. 
We start with designing ILPs for this classical result.

\begin{proposition}[\cite{Lenstra83}, \cite{Kannan87}, \cite{FrankTardos87}]
    \label{pro:pilp}
    There is an algorithm that solves an input \textup{ILP} instance $\mathcal{I}$ with $p$ variables in time
    $p^{\mathcal{O}(p)}\cdot |\mathcal{I}|$.
\end{proposition}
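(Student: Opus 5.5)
This is a classical result, and the paper invokes it as a black box from \cite{Lenstra83,Kannan87,FrankTardos87}. If I had to reconstruct it, I would argue as follows. Reduce optimisation to feasibility first. With an objective, binary-search over its value, adding it as one more linear constraint. The optimum is bounded by $2^{\mathrm{poly}(|\mathcal{I}|)}$ (standard bounds on vertex solutions), so this costs a polynomial factor in $|\mathcal{I}|$. The Frank--Tardos step below is what removes that factor from the final bound. So the core task is: given a polyhedron $P=\{x\in\mathbb{R}^p : Ax\le b\}$, decide whether $P\cap\mathbb{Z}^p\neq\emptyset$ in time $p^{\mathcal{O}(p)}\cdot\mathrm{poly}(|\mathcal{I}|)$.

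\textbf{Recursive lattice algorithm.} I would follow Lenstra's scheme as refined by Kannan.
\begin{enumerate}
\item By standard LP arguments, reduce to a bounded, full-dimensional $P$. If $P$ is not full-dimensional, find its affine hull and eliminate a variable via an integral Hermite-normal-form change of coordinates. Boundedness follows from a priori bounds on a minimal integer solution.
\item Compute an ellipsoid $E$ with $E\subseteq P\subseteq (p+1)E$, e.g.\ approximately with the ellipsoid method.
\item Apply a linear map sending $E$ to a ball, and run lattice basis reduction (LLL, or Kannan's HKZ-type reduction) on the image of $\mathbb{Z}^p$.
\item Use the flatness theorem. Either the ball contains a lattice point, which we output, or there is a primitive integer direction $c$ with $\max_{x\in P}c^\top x-\min_{x\in P}c^\top x\le f(p)$.
\item In the flat case, branch on the $f(p)+1$ integer values of $c^\top x$. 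Each branch is an integer feasibility problem in dimension $p-1$.
\end{enumerate}
With $f(p)=2^{\mathcal{O}(p^2)}$ (Lenstra) the recursion gives $2^{\mathcal{O}(p^3)}$ branches. Kannan's sharper reduction branches on a lattice hyperplane family of width $p^{\mathcal{O}(1)}$ and recurses on lower-dimensional sublattices, giving $p^{\mathcal{O}(p)}$ leaves overall.

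\textbf{Linear dependence on $|\mathcal{I}|$.} The recursion above incurs a $\mathrm{poly}(|\mathcal{I}|)$ overhead from arithmetic on large numbers. To make it linear, I would apply Frank--Tardos simultaneous Diophantine approximation to the rows of $(A,b)$ and to the objective. This replaces them by equivalent data whose entries have bit-length polynomial in $p$ and the number of constraints only, preserving the sign of every relevant inequality over the bounded integer box. Reading the input and performing this preprocessing takes time linear in $|\mathcal{I}|$ times a factor polynomial in the dimension. Everything afterwards runs in $p^{\mathcal{O}(p)}$.

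\textbf{Main obstacle.} The genuinely hard step is the flatness bound $f(p)$ together with the accounting that yields $p^{\mathcal{O}(p)}$ rather than $2^{\mathcal{O}(p^2)}$ or worse. I would take this bound directly from Kannan's analysis rather than reprove it. Since the paper only uses the proposition as a tool, citing the three works as done is the appropriate treatment.
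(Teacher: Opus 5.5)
Your bottom line matches the paper: the proposition is imported without proof from Lenstra, Kannan and Frank--Tardos, and citing it is the right treatment. Your outline of the Lenstra--Kannan recursion is a fair summary. It covers the rounding ellipsoid, lattice reduction, flatness, and branching on $p^{\mathcal{O}(1)}$ parallel hyperplanes per level, which gives $p^{\mathcal{O}(p)}$ leaves. This already yields $p^{\mathcal{O}(p)}\cdot|\mathcal{I}|^{\mathcal{O}(1)}$. That is all the paper's two uses of the proposition need, namely the pure feasibility ILPs for $\mathsf{vdc}$ and $\mathsf{mln}$.

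The paragraph on linear dependence on $|\mathcal{I}|$, however, contains a step that would fail.
\begin{itemize}
\item Apply Frank--Tardos to a row $w\in\mathbb{Q}^{p+1}$ of $(A,b)$, with respect to integer vectors of $\ell_1$-norm below $N$. It returns an integral $\bar w$ with $\|\bar w\|_\infty\le 2^{4(p+1)^3}N^{(p+1)(p+3)}$.
\item Here $N$ must be of the order of the a priori bound on a minimal integer solution, so $\log N=\mathcal{O}(p\log(p\Delta))$. Here $\Delta$ is the largest absolute entry of $(A,b)$ after clearing denominators.
\item The rounded entries therefore have bit-length $\mathcal{O}(p^3\log(p\Delta))$. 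They are not bounded in terms of $p$ and the number $m$ of constraints alone.
\item Afterwards, every rounding or LP step still processes all $m$ constraints. Its number of ellipsoid iterations depends on the encoding length.
\end{itemize}
So your claim that everything after the preprocessing runs in $p^{\mathcal{O}(p)}$ is false. The preprocessing itself is strongly polynomial in arithmetic operations, not linear-time in the bit model.

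What the references actually give is $\mathcal{O}(p^{2.5p+o(p)}\cdot L)$ \emph{arithmetic operations}, using space polynomial in the input length $L$. Frank--Tardos is used there to keep bit-lengths polynomially bounded throughout the recursion. It does not make the remaining work independent of the input.

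For the same reason, Frank--Tardos does not remove the binary-search overhead in your optimisation-to-feasibility reduction. After rounding the objective, the logarithm of the search range is still $\mathrm{poly}(p)\cdot\log M_x$, where $M_x$ bounds the variables. The standard optimisation bound indeed carries an extra factor of order $\log(M_xM_c)$, with $M_c$ the largest absolute objective coefficient.

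Since both ILPs in the paper are feasibility problems, the fix is simple. Drop the optimisation reduction and the Frank--Tardos paragraph. Then either rely on the $p^{\mathcal{O}(p)}\cdot|\mathcal{I}|^{\mathcal{O}(1)}$ bound, which suffices for FPT, or quote the arithmetic-operation bound exactly as the references state it.
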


\noindent  We apply it in the parameterization by vertex deletion to clique and in the parameterization by maximum leaf number.

\begin{theorem}
\label{thm:uccp-vdc}
\textsc{ECGP} is fixed-parameter tractable when parameterized by $\mathsf{vdc}$.
\end{theorem}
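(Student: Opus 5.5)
Let $S$ with $|S|=k$ be a set whose deletion leaves a clique $K:=V\setminus S$; such a set can be found in FPT time (e.g.\ via the complement being a graph with vertex cover $k$). The plan is to guess the restriction of the solution to $S$ and then write an ILP with few variables, invoking Proposition~\ref{pro:pilp}.

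\emph{Preprocessing on $r$.} If $r$ is large, parts avoiding $S$ dominate. A part $P$ with $P\cap S=\emptyset$ is a subclique, so it needs at least $s_0$ vertices, where $s_0$ is the least integer with $\binom{s_0}{2}\ge\gamma$. At most $k$ parts meet $S$, so at least $r-k$ parts are pure clique parts. We therefore guess the number $p\le k$ of parts meeting $S$ and a partition of $S$ into $p$ labelled nonempty blocks $S_1,\dots,S_p$; there are at most $k^{k}$ such guesses. All $r-p$ remaining parts are subsets of $K$, and we may assume each has exactly $s_0$ vertices, since any surplus clique vertex can be moved into a part that meets $S$ without decreasing that part's edge count. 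Hence we only need to decide how to distribute clique vertices among the $p$ parts meeting $S$, while keeping $(r-p)s_0$ vertices aside.

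\emph{Types and ILP.} Partition $K$ into at most $2^k$ types according to their neighbourhood in $S$. Introduce an integer variable $x_{i,t}\ge0$ for the number of type-$t$ vertices placed in part $i\in[p]$. Add the constraints
\[
\sum_i x_{i,t}\le n_t, \qquad \sum_t\Bigl(n_t-\sum_i x_{i,t}\Bigr)=(r-p)s_0 .
\]
The edge count of part $i$ is
\[
|E(G[S_i])|+\sum_t x_{i,t}\,|N(t)\cap S_i|+\binom{y_i}{2},\qquad y_i:=\sum_t x_{i,t},
\]
and the quadratic term $\binom{y_i}{2}$ is the main obstacle. The paper hints at the remedy: guessing from a bounded set. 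If $y_i\ge s_0$, the part is already satisfied by its clique edges alone. So for each $i$ we guess either that $y_i=c$ for a concrete $c<s_0$, or that $y_i\ge s_0$. In the first case we add the linear constraint $y_i=c$ together with $|E(G[S_i])|+\sum_t x_{i,t}|N(t)\cap S_i|+\binom{c}{2}\ge\gamma$. In the second case we add only $y_i\ge s_0$.

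\emph{Parameter dependence.} This is still only FPT if $s_0$ is bounded, but $s_0\approx\sqrt{2\gamma}$ is unbounded. Fix this by noting that the cross edges contribute at most $k\,y_i$. If $y_i\ge 2k+1$, then raising $y_i$ only helps, and the condition can be rewritten as a threshold on $y_i$ depending only on the contribution from $S$. I would instead guess, for each $i$, whether $y_i\le 2k+2$ (exact value, $O(k)$ choices) or $y_i>2k+2$. In the latter regime the function $\binom{y}{2}+(\text{linear part})$ is monotone, but it is still nonlinear. Here the plan is to argue an exchange lemma: among the large parts, at most one needs to be larger than the minimum required, so all but one large part can be fixed at its minimal feasible size. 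That size is computed from the guessed type composition once the counts are small, which I expect to require another guessing/bounding step on the $x_{i,t}$. This step is the crux.

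The total number of guesses is $f(k)$. Each resulting ILP has $O(k2^k)$ variables and linear constraints, so Proposition~\ref{pro:pilp} solves it in FPT time.
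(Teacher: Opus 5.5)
Your framework is the right one and coincides with the paper's. You guess the blocks $S_1,\dots,S_p$ of the parts meeting $S$, classify clique vertices by their neighbourhood in $S$, use variables $x_{i,t}$ with availability constraints, let the pure clique parts take $s_0$ vertices each, and linearize $\binom{y_i}{2}$ by guessing $y_i$. But the proposal stops exactly at the step that makes this FPT, so as written it is not a proof.

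Guessing an exact value $c<s_0$ gives about $\sqrt{2\gamma}$ options per part, and your proposed repair does not work as stated:
\begin{itemize}
\item The remark that raising $y_i$ ``only helps'' holds for every $y_i$ and bounds nothing.
\item The threshold into which you would rewrite the constraint depends on the unknown type composition $x_{i,t}$. So for $2k+2<y_i<s_0$ the constraint $|E(G[S_i])|+\sum_t x_{i,t}|t\cap S_i|+\binom{y_i}{2}\ge\gamma$ stays quadratic.
\item The exchange lemma among ``large'' parts is neither proved nor shown to yield linear constraints; you defer it to a further, unspecified guessing step.
\end{itemize}

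The missing idea is an elementary lower bound. The $i$-th part has $|S_i|+y_i$ vertices, hence at most $\binom{|S_i|+y_i}{2}$ induced edges. If $|S_i|+y_i\le s_0-1$, this is at most $\binom{s_0-1}{2}<\gamma$ by minimality of $s_0$, so the part is infeasible. Therefore every solution has $y_i\ge s_0-|S_i|$. Combine this with your own observation that $y_i\ge s_0$ already makes the part feasible through clique edges alone. Then the only values that must be guessed exactly are $y_i\in\{\max(0,s_0-|S_i|),\dots,s_0-1\}$, at most $|S_i|\le k$ of them, plus the single option $y_i\ge s_0$. This gives at most $(k+1)^k$ guesses per partition of $S$. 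Every edge constraint becomes linear, and Proposition~\ref{pro:pilp} applies to an ILP with at most $k2^k$ variables.

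This is exactly the paper's claim. It restricts the number $m_j$ of clique vertices in the $j$-th part meeting $X$ to $\{\max(0,s-|A_j|),\dots,s\}$, capping at $s$ and dumping the surplus instead of keeping your open option $y_i\ge s_0$. It then checks $|C|-\sum_j m_j\ge(r-p)s$ for the pure parts. With this bound, the threshold $2k+2$ and the exchange lemma are unnecessary. (A minor point: your equality constraint for the pure parts relies on $p\ge 1$; when $S=\emptyset$ the condition is simply $n\ge rs_0$.)
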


\begin{proof}
Let $(G=(V,E),r,\gamma)$ be an instance of \textsc{ECGP}. Let $X\subseteq V$ be a vertex deletion set to a clique with $|X|=k=\mathsf{vdc}(G)$, and let $C:=V\setminus X$, so that $C$ induces a clique.
Define
$s:=\min\left\{\ell\ge 0:\binom{\ell}{2}\ge \gamma\right\}$.
Observe that every subset of $C$ of size at least $s$ induces at least $\gamma$ edges.

\medskip
\noindent
\emph{Guessing the parts intersecting $X$.}
Let $p\le \min\{r,k\}$ denote the number of parts intersecting $X$. We first guess an ordered partition
$A_1,\ldots,A_p$
of $X$, where $A_j$ is the subset of $X$ contained in the $j$-th part.
Next, partition the clique vertices according to their neighborhood in $X$. For every
$T\subseteq X$, let
$C_T:=\{v\in C:N(v)\cap X=T\}$,
and let $n_T:=|C_T|$. Since $|X|=k$, there are at most $2^k$ such types.

\medskip
\noindent
\emph{A natural ILP formulation.}
For every $j\in[p]$ and every type $T\subseteq X$, introduce a variable
$x_{j,T}\in\mathbb Z_{\ge0}$,
denoting the number of vertices of type $T$ assigned to the part containing $A_j$.
Let
$m_j:=\sum_{T\subseteq X}x_{j,T}$
be the number of clique vertices assigned to the $j$-th part.
The variables satisfy the availability constraints
\[
\sum_{j=1}^{p}x_{j,T}\le n_T
\qquad
\text{for every }T\subseteq X.
\]
Moreover, the $j$-th part induces
\[|E(G[A_j])|
+\binom{m_j}{2}
+\sum_{T\subseteq X}|A_j\cap T|x_{j,T}\]
edges. Thus a natural edge constraint is
\[
|E(G[A_j])|
+\binom{m_j}{2}
+\sum_{T\subseteq X}|A_j\cap T|x_{j,T}
\ge \gamma.
\]
Unfortunately, this is not an ILP, since the term
$\binom{m_j}{2}$ is quadratic in the variables.
\begin{claim}
For every $j\in[p]$, it is sufficient to consider
$m_j\in
\{\max(0,s-|A_j|),\ldots,s\}$.
Consequently, each value $m_j$ can be guessed from at most
$|A_j|+1\le k+1$
possibilities.
\end{claim}
\begin{proof}
Suppose first that
$m_j+|A_j|<s$.
Then, even if every possible edge among the vertices of the part were present,
\[
|E(G[A_j])|
+|A_j|m_j
+\binom{m_j}{2}
\le
\binom{|A_j|+m_j}{2}
<\gamma.
\]
Hence such a part cannot satisfy the edge constraint.
Conversely, suppose a feasible solution contains a part intersecting $X$ with
$m_j>s$.
Since $C$ is a clique, any $s$ of these clique vertices already induce at least $\gamma$ edges. Therefore we may retain any $s$ clique vertices in this part and move the remaining clique vertices to arbitrary other parts. Since adding vertices to a part cannot decrease the number of induced edges, feasibility is preserved. Thus every feasible solution can be transformed into one satisfying
$m_j\le s$.
\end{proof}
We now guess the value of $m_j$ for every $j\in[p]$. After this guess,
$\binom{m_j}{2}$ becomes a constant, and the edge constraints become linear.
Finally, let
$L:=|C|-\sum_{j=1}^{p}m_j$
be the number of clique vertices not assigned to parts intersecting $X$.
The remaining $r-p$ parts lie entirely inside the clique $C$. Such a part satisfies the edge constraint if and only if it contains at least $s$ vertices. Hence these parts exist if and only if
$L\ge (r-p)s$.
Any remaining clique vertices can then be distributed arbitrarily among the existing parts.

\medskip
\noindent
\emph{Running time.}
%The number of ordered partitions of $X$ depends only on $k$. By the claim, each value $m_j$ has at most $k+1$ possibilities, and hence the total number of guesses is bounded by a function of $k$.
For every fixed guess, the resulting ILP contains at most $p2^k\le k2^k$
variables and only linear constraints. Therefore, by using \Cref{pro:pilp}, each ILP can be solved in FPT time. Therefore, we have an FPT algorithm.
\end{proof}

\begin{theorem}
\label{thm:buccp-vdc}
\textsc{BECGP} is fixed-parameter tractable when parameterized by $\mathsf{vdc}$.
\end{theorem}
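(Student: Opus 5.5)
We adapt the ILP approach of Theorem~\ref{thm:uccp-vdc} to the balanced setting. Let $X$ be a clique-deletion set with $|X|=k$, let $C:=V\setminus X$ be the clique, and set $B:=n/r$.

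\textbf{The easy case $r\le k$.} We first guess the partition of $X$ into the parts $A_1,\ldots,A_p$ intersecting $X$, where $p\le\min\{r,k\}$, and group clique vertices into at most $2^k$ types $C_T$ as before. Every part has exactly $m_j=B-|A_j|$ clique vertices, so the quadratic term $\binom{m_j}{2}$ is automatically a constant. The edge constraint for part $j$ is then linear in the variables $x_{j,T}$:
\[
|E(G[A_j])|+\binom{B-|A_j|}{2}+\sum_{T}|A_j\cap T|\,x_{j,T}\ge\gamma .
\]
We add the constraints $\sum_T x_{j,T}=B-|A_j|$ and the availability constraints $\sum_j x_{j,T}\le n_T$. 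Balance does not let us discard parts, so the $r-p$ parts lying entirely inside $C$ each consist of exactly $B$ clique vertices. Such a part induces $\binom{B}{2}$ edges, so these parts are feasible if and only if $\binom{B}{2}\ge\gamma$ (or $r=p$). Their existence is guaranteed by counting, since all remaining vertices are exactly $(r-p)B$ clique vertices. This gives at most $k2^k$ variables, and \Cref{pro:pilp} solves each guess in FPT time.

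\textbf{The main obstacle: $p$ is not bounded by $k$.} The difficulty is that $r$ may be unbounded, while we would like only $p\le k$ parts to carry variables. Fortunately, only parts meeting $X$ need variables, and there are at most $k$ of them regardless of $r$; all other parts are uniform clique chunks of size $B$. So the formulation above already has at most $k2^k$ variables for any $r$. We also check that no guessing of $m_j$ is needed, since $m_j$ is determined by $B$. The number of guesses is bounded by the number of ordered partitions of $X$ into at most $k$ blocks, which is $k^{\mathcal O(k)}$. One must additionally require $|A_j|\le B$.

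\textbf{Correctness argument.} The proof has two directions.
\begin{itemize}
\item Given a solution, its restriction determines a guess $(A_j)$ and values $x_{j,T}$ satisfying the ILP, and the condition on the pure-clique parts holds.
\item Conversely, an ILP solution assigns concrete vertices of each type to the parts, respecting the availability constraints. The leftover clique vertices number exactly $n-\sum_j B=(r-p)B$ and are split arbitrarily into $r-p$ groups of size $B$. Each of these groups is a clique on $B$ vertices and thus satisfies the edge constraint.
\end{itemize}
The only point needing care is the edge count within part $j$. Since $C$ is a clique, this count equals the constant term plus the linear cross term, exactly as in Theorem~\ref{thm:uccp-vdc}.
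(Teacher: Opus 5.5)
Your proposal is correct and takes essentially the same route as the paper: guess the partition $A_1,\ldots,A_p$ of $X$, note that $m_j=B-|A_j|$ is forced so $\binom{m_j}{2}$ is a constant, solve a Lenstra ILP in the variables $x_{j,T}$ with size, availability and (now linear) edge constraints, and treat the remaining $r-p$ parts as pure clique blocks of size $B$, which are feasible exactly when $\binom{B}{2}\ge\gamma$. The split into an ``easy case $r\le k$'' is unnecessary, as you note yourself, because for every $r$ at most $k$ parts meet $X$.
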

\begin{proof}
Let $(G=(V,E),r,\gamma)$ be an instance, and let $X \subseteq V$ be a minimum vertex deletion set to a clique, so $|X|=\mathsf{vdc}(G)$ and $C:=V \setminus X$ induces a clique.

\medskip

\noindent
\emph{Step 1: Guess parts intersecting $X$.}
Let $p \le |X|$ be the number of parts intersecting $X$. We guess an ordered partition $A_1,\dots,A_p$ of $X$, where each $A_j$ corresponds to one part.

\medskip

\noindent
\emph{Step 2: Types of clique vertices.}
For every subset $T \subseteq X$, define
\[
C_T := \{ v \in C : N(v)\cap X = T \}, \quad n_T := |C_T|.
\]
There are at most $2^{|X|}$ types.

\medskip

\noindent
\emph{Step 3: Part sizes.}
In \textsc{BECGP}, each part has size exactly $B := |V|/r$ (assume w.l.o.g.\ that $r \mid |V|$).
For each part $A_j$, we have $|A_j| = B$, and hence the number of clique vertices assigned to it is
$m_j := B - |A_j|$.
Note that $m_j$ is completely determined.

\medskip

\noindent
\emph{Step 4: ILP variables.}
For every $j \in [p]$ and every type $T \subseteq X$, introduce a variable
$x_{j,T} \in \mathbb{Z}_{\ge 0}$,
denoting how many vertices of type $T$ are assigned to part $A_j$.

\medskip

\noindent
\emph{Step 5: Constraints.}

\begin{itemize}

\item \textbf{Part size constraints:}
\[
\sum_{T \subseteq X} x_{j,T} = m_j
\quad \text{for all } j \in [p].
\]

\item \textbf{Availability constraints:}
\[
\sum_{j=1}^{p} x_{j,T} \le n_T 
\quad \text{for all } T \subseteq X.
\]

\item \textbf{Edge constraints for each part $A_j$:}

Since $C$ induces a clique and $m_j$ is fixed, the number of edges inside part $j$ must be at least $\gamma$. Therefore, we require:
\[
|E(G[A_j])| + \binom{m_j}{2} + \sum_{T \subseteq X} |A_j \cap T| \cdot x_{j,T} \ge \gamma.
\]
\end{itemize}

\medskip

\noindent
\emph{Step 6: Remaining parts.}
Let
$L := |C| - \sum_{j=1}^p m_j$
be the number of clique vertices not assigned to $X$-parts.
Since each remaining part has size exactly $B$, the remaining $r-p$ parts can be formed if
$L = (r-p)\cdot B$. Also we need ${B \choose 2}\geq \gamma$.
\medskip

\noindent
\emph{Running time.}
The number of variables is at most $|X| \cdot 2^{|X|}$, depending only on $\mathsf{vdc}$. We guess a partition of $X$, which depends only on $|X|$. The resulting ILP has a bounded number of variables and only linear constraints. By Lenstra's algorithm, each instance can be solved in FPT time parameterized by $\mathsf{vdc}$.
\end{proof}

\begin{theorem}
    Both \textsc{ECGP} and \textsc{BECGP} are FPT parameterized by $\mln$. 
\end{theorem}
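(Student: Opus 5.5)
We plan to use the standard structure of graphs with bounded maximum leaf number. If $\mln(G)=\ell$, then $G$ is a subdivision of a graph $H$ with $\mathcal{O}(\ell)$ vertices and edges (Kleitman--West / Bodlaender-type result). Concretely, after isolated vertices are handled separately (for \textsc{ECGP} they may be deleted; for \textsc{BECGP} we keep their count as a number), the set $D$ of vertices of degree different from $2$ has size $\mathcal{O}(\ell)$. $G-D$ is a disjoint union of $\mathcal{O}(\ell)$ paths, each attached at its ends to vertices of $D$, plus possibly a few cycle components (their number is bounded by $\ell$, since every component contributes leaves). First, if $r>|E(G)|/\gamma$ the answer is NO, and if $\gamma=0$ everything is trivial. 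The key observation is that the instance is described by $\mathcal{O}(\ell)$ "long" paths/cycles whose lengths are the only unbounded quantities. So we guess the combinatorial shape of a solution and solve the rest with an ILP in the manner of \Cref{pro:pilp}.

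\textbf{Step 1: guessing the interface.} We guess a partition of $D$ into at most $|D|$ labelled groups, each indicating which parts contain which vertices of $D$. In an optimal solution, each long path $P$ is cut into consecutive segments. An interior segment contains only path vertices and induces (segment length $-1$) edges. A segment adjacent to an endpoint may merge with the part of that endpoint in $D$. Consider a part that uses several pieces from different paths. Its contribution is linear in the lengths: a path segment of $t$ vertices gives $t-1$ edges, plus $1$ for each attachment to a $D$-vertex in the same part. Hence we may normalize as follows. Each path $P$ has at most two end-segments that touch $D$-parts. Everything strictly inside consists of maximal runs, which can be regrouped into parts lying entirely inside $P$. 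We guess, for each of the $\mathcal{O}(\ell)$ paths, whether its end-segments join the parts of its endpoints and whether the whole path lies in one $D$-part. This gives $2^{\mathcal{O}(\ell)}$ choices in total, times the $\ell^{\mathcal{O}(\ell)}$ partitions of $D$.

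\textbf{Step 2: the ILP.} For each path $P$ we introduce integer variables: $a_P$ and $b_P$ for the lengths of its two end-segments, $q_P$ for the number of internal parts cut out of $P$, and $w_P$ for the number of vertices in these internal parts. Each $D$-part $j$ gets the linear constraint
$|E(G[A_j])| + \sum (a_P - 1 + [\text{attached}]) \ge \gamma$,
where the sum runs over the segments guessed to join part $j$. For internal parts we use that $t$ consecutive path vertices yield $t-1$ edges. Since segments are consecutive, $q_P$ parts need at least $q_P(\gamma+1)$ vertices, and every remaining vertex can be absorbed. This gives $w_P\ge q_P(\gamma+1)$ and $a_P+b_P+w_P=|P|$. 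For cycle components the analogous constraint applies, with $q$ parts needing $q(\gamma+1)$ vertices unless $q=1$, in which case the cycle gives $|C|$ edges. Finally, we impose that the number of $D$-parts plus $\sum_P q_P$ equals $r$; surplus parts can always be merged, so "at least $r$" suffices for \textsc{ECGP}. The ILP has $\mathcal{O}(\ell)$ variables, so \Cref{pro:pilp} gives FPT time.

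\textbf{Balanced variant.} For \textsc{BECGP}, part sizes must all equal $B=n/r$. Each internal part on a path must then be a run of exactly $B$ vertices, or a run of fewer path vertices padded with isolated vertices or with leftover vertices from other paths. Since padding vertices contribute no edges, we treat them as a free pool. Every part then needs a "core" of at least $\gamma+1$ consecutive path vertices (or a $D$-structure), and the size constraints become linear equalities involving the pool counts. The main obstacle is justifying the normalization in Step 1 for the balanced case. A part may be forced to gather several short fragments from different paths to reach size $B$. We would show that, by exchange, every part can be assumed to have at most one edge-carrying fragment outside $D$-attachments, with the remaining vertices serving only as filler. Fragments of total size $t$ split into $k$ runs give only $t-k$ edges, so merging runs never hurts. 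Once this exchange argument is in place, all constraints remain linear in $\mathcal{O}(\ell)$ variables, and the same Lenstra-based argument concludes.
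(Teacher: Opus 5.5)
\textbf{There is a genuine gap: the normalization in Step~1 is false, and with it the completeness of your ILP.} You argue from the linearity of edge counts in segment lengths that every part can be assumed to lie inside a single path, apart from end-segments that join the part of their $D$-endpoint. But every part has its own threshold $\gamma$, and runs on different paths cannot be merged into one run. So a part containing no $D$-vertex may genuinely need edge-carrying fragments from several paths, and your ILP has no variables for such parts.

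Here is a concrete counterexample. Let $G$ be a vertex $u$ with three subdivided loops whose interiors $P_1,P_2,P_3$ have $8,8,9$ vertices. Then $\mln(G)=6$ and $D=\{u\}$. Take $r=2$ and $\gamma=10$.
\begin{itemize}
\item The partition into $\{u\}\cup V(P_3)$ (a $10$-cycle) and $V(P_1)\cup V(P_2)$ (two paths with $14$ edges in total) is feasible.
\item In every feasible partition, the part avoiding $u$ must use at least two paths, since a single path interior has at most $8$ edges.
\item Your ILP has exactly one $D$-part. It therefore needs $q_P\ge 1$ for some $P$, hence $w_P\ge 11>|V(P)|$. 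It is infeasible for every guess, so your algorithm answers NO on a YES-instance.
\end{itemize}
The same graph with part size $n/r=13$ refutes your balanced exchange claim. A feasible solution is: $\{u\}\cup V(P_3)$ together with the three vertices of $P_1$ nearest to $u$ ($13$ edges), and the remaining five vertices of $P_1$ together with $V(P_2)$ ($11$ edges). The $u$-free part can never contain a single fragment of $\gamma+1=11$ consecutive path vertices, so ``merging runs never hurts'' does not apply.

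\textbf{What is missing is an exchange argument that bounds, rather than eliminates, the parts meeting several paths.} The paper normalizes a solution so that each part meets every branch in one contiguous subpath, attached to an endpoint whenever that endpoint lies in the part. It also ensures that no two parts both meet the same two branches. Hence at most $\binom{|E(H)|}{2}$ parts meet two or more branches.

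These parts, together with the parts containing vertices of $H$, are treated as \emph{special} parts, and the paper proceeds as follows.
\begin{itemize}
\item For each special part it guesses the set $F_i$ of branches the part meets, and the number $\partial_{i,e}\in\{0,1,2\}$ of edges to branch endpoints.
\item The ILP gets one variable $y_{i,e}$ per pair of special part and branch. In particular, parts containing vertices of $H$ may also take fragments of non-incident branches, which your model also excludes.
\item Only the remaining parts are assumed to be runs of $\gamma+1$ consecutive vertices inside one branch, counted by one variable per branch.
\item For \textsc{BECGP}, an upper bound $|W_i|+\sum_e y_{i,e}\le n/r$ on each special part suffices, since leftover vertices can serve as filler.
\end{itemize}
The number of variables stays bounded by a function of $\mln$, so Lenstra's algorithm still applies.
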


We use the following characterization of the maximum leaf number. 
\begin{proposition}[\cite{bouland_2011, KleitmanW91, FellowsLMMRS09}]
    Suppose $G$ is a graph with maximum leaf number $\mln$. Then $G$ is a subdivision of a graph $H$ with at most $4\mln-6$ vertices.
\end{proposition}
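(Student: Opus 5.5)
We take $G$ to be connected (the maximum leaf number is defined via spanning trees) and build $H$ by suppressing degree-$2$ vertices. Call a vertex of $G$ a \emph{branch vertex} if its degree is not $2$, and let $H$ be the multigraph on the branch vertices in which every maximal path of $G$ whose internal vertices all have degree $2$ becomes one edge. Then $G$ is a subdivision of $H$.

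\textbf{Step 1: degenerate cases.} If $G$ has no branch vertex, then $G$ is a cycle. It is a subdivision of a triangle, which is simple and has $3$ vertices. Since $\mln=2$ here, the bound $4\mln-6=2$ fails, so the cycle has to be treated as an exception. We also use the usual conventions for paths and tiny graphs, where $H$ is a single edge and the bound is checked directly. From now on $H$ has only vertices of degree $1$ and vertices of degree at least $3$.

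\textbf{Step 2: make $H$ simple.} $H$ may contain loops and parallel edges. For each loop, and for each parallel edge beyond the first between a pair of vertices, we un-suppress one or two degree-$2$ vertices of $G$ on that thread. These exist because $G$ is simple. This turns $H$ into a simple graph $H'$ of which $G$ is still a subdivision. Each such gadget forces an extra leaf in suitable spanning trees, so we will charge the added vertices to leaves. The key claim for this step is that $\mln(G)\ge\mln(H)$ for the underlying structure.

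To prove the claim, take a spanning tree $T_H$ of $H$. Replace each tree edge by its thread in $G$. For each non-tree edge, cover its thread by the path minus one edge. This yields a spanning tree of $G$ in which every leaf of $T_H$ remains a leaf (or is replaced by a new leaf), and each non-tree thread contributes additional leaves.

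\textbf{Step 3: the counting bound, which is the main obstacle.} We must show that a connected graph $H$ with $n_1$ vertices of degree $1$ and $n_3$ vertices of degree at least $3$ has a spanning tree with at least $(n_1+n_3)/4+3/2$ leaves. This gives $|V(H)|\le 4\mln-6$.

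I would follow the Kleitman–West expansion argument. Grow a tree $T$ greedily by expansion operations, starting from a vertex of degree at least $3$ with all its neighbours. Repeatedly expand a leaf that has at least two outside neighbours, or a leaf with one outside neighbour $w$ such that $w$ has at least two further outside neighbours. Track a potential $\Phi = 3\cdot(\text{leaves}) + (\text{dead leaves}) - (\text{vertices})$, and check that every operation does not decrease $\Phi$.

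Degree-$1$ vertices of $H$ only help: when they are absorbed they become dead leaves immediately. Unlike in Kleitman–West, minimum degree $3$ is not required for them, and in $G$ they are always leaves of every spanning tree. The delicate part is the analysis of the stuck configuration, where every leaf has at most one outside neighbour and that neighbour has degree $2$ in the remainder. In $H$ this can only happen through the parallel-edge and loop gadgets of Step 2, so it is precisely where those gadgets must be charged. I would first try to show each gadget raises $\Phi$ by at least its vertex cost.

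Finally, the initial value of $\Phi$ together with the final accounting yields $4\cdot\text{leaves}\ge |V(H)|+6$. This gives $|V(H)|\le 4\mln(H)-6\le 4\mln(G)-6$.
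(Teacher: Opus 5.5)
The paper does not prove this proposition; it quotes it from Kleitman--West and Fellows et al. Your route (suppress degree-$2$ vertices, then a Kleitman--West expansion bound) is the standard one behind those citations. You are also right that cycles are a genuine exception to the statement as written.

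However, Step~3 carries all the content, and it is only a plan. As set up, it is also aimed at the wrong graph. A bound of the form ``at least $(n_1+n_3)/4+3/2$ leaves'' is false for the multigraph $H$. Let $G$ be a cycle $v_1\dots v_n$ ($n\ge 3$) with an extra vertex $w_i$ adjacent to $v_i$ and $v_{i+1}$ for every $i$. Then $H$ is $C_n$ with every edge doubled, so all its vertices have degree $4$, yet every spanning tree of $H$ has exactly $2$ leaves. So your concluding inequality $|V(H)|\le 4\mln(H)-6$ fails. Routing the argument through $\mln(H)\le\mln(G)$ throws away exactly the leaves that the degree-$2$ vertices of $G$ supply.

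Conversely, the simple graph of which $G$ must be a subdivision is $H'$. Its vertex count includes the un-suppressed gadget vertices, which $n_1+n_3$ does not count. So ``each gadget raises $\Phi$ by at least its vertex cost'' is the missing step, not a detail. (Separately, in Step~2 a non-tree thread between two leaves of $T_H$ yields no net new leaf. All you need there is that subdividing an edge never decreases the maximum leaf number.)

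What closes the gap is to run the expansion on $H'$ itself and count all of its vertices.
\begin{itemize}
\item If you un-suppress minimally (one vertex on each extra parallel thread, two on each loop), every degree-$2$ vertex of $H'$ has two adjacent neighbours.
\item Then the process cannot get stuck. Suppose a leaf $x$ has a unique outside neighbour $w$ of degree $2$. The other neighbour $z$ of $w$ is adjacent to $x$. Either $z$ is outside, and $x$ has two outside neighbours, or $z$ is a leaf, and $w$ sees two leaves.
\item A degree-$2$ vertex is already dead when it enters the tree.
\end{itemize}
With this, the usual operations (plus absorbing a pendant vertex) keep $\Phi$ non-decreasing. This gives $4\mln(H')\ge |V(H')|+\Phi_0$ directly, with no separate charging.

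Even then, the constant is not automatic. Starting at a degree-$3$ vertex with no dead neighbour gives $\Phi_0=3\cdot 3-4=5$, i.e.\ only $|V(H')|\le 4\mln-5$. You need one extra unit. For instance:
\begin{itemize}
\item start at a vertex of degree at least $4$; or
\item start at a neighbour of a pendant or gadget vertex, so that $D_0\ge 1$; or
\item when neither exists, $H'$ is cubic, and the last operation gains at least $1$.
\end{itemize}
Combined with $\mln(G)\ge\mln(H')$, this yields the bound for every connected $G$ other than a cycle, with paths and trivial graphs checked directly.
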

In this proof, we define $k := 4\mln-6$ as the parameter.
Given an edge $e = \{a,b\} \in E(H)$, we denote the corresponding $ab$-path in $G$ minus its endpoints $a$ and $b$ as the \emph{open branch} $B_e$. 
Thus, $V(B_e)$ consists purely of the internal vertices created by subdivision.

We further define a \emph{nice solution} to be a solution in which 
\begin{itemize}
    \item For every part $V_i$ and branch $e \in E(H)$, $V_i \cap V(B_e)$ forms a single contiguous subpath of $B_e$.
    \item If $V_i \cap V(H) \neq \emptyset$, then for any branch $e = \{u, v\}$ where $u \in V_i \cap V(H)$ and $v \not \in V_i \cap V(H)$, the subpath $V_i \cap V(B_e)$ is contiguous to $u$.
    \item No two parts are present in the same two or more branches of distinct edges $e_1, e_2 \in E(H)$.
\end{itemize}

%\todoC[inline]{formalize a bit more as well}
\begin{lemma}\label{lem:mln_nice_solution}
    If \textsc{ECGP} instance admits a solution $V_1, V_2, \dots V_r$, it also admits a nice solution $V'_1, V'_2, \dots V'_r$ with $|V_i| = |V'_i|$.
\end{lemma}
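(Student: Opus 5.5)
The plan is to start from an arbitrary solution $V_1,\dots,V_r$ and apply local exchange operations. These operations only permute which part owns which subdivision vertex inside a branch, or swap vertices between two parts, and they never change $|V_i|$ or $V_i\cap V(H)$. Each operation must not decrease $|E(G[V_i])|$ for any $i$. To guarantee termination I will use the potential $\Phi=\bigl(\sum_i |E(G[V_i])|,\ -\#\{(i,e): V_i\cap V(B_e)\neq\emptyset\}\bigr)$, ordered lexicographically, and show that every operation strictly increases $\Phi$. Since $\Phi$ takes finitely many values, the process stops, and I will argue that a solution admitting no operation is nice.

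The basic counting fact is that every vertex of $B_e$ has degree at most $2$ in $G$, and all its neighbours lie on the path $u\,B_e\,v$. Hence the edges of $G[V_i]$ that use a vertex of $B_e$ number exactly $|V_i\cap V(B_e)|-\rho_{i,e}+\alpha_{i,e}$. Here $\rho_{i,e}$ is the number of maximal runs of $V_i$ in $B_e$, and $\alpha_{i,e}\in\{0,1,2\}$ counts the ends of $B_e$ where a run of $V_i$ meets an endpoint ($u$ or $v$) that also belongs to $V_i$. Edges of $H$-vertices to each other are unaffected by anything we do inside a branch. So, for a fixed branch $e=\{u,v\}$, I will \emph{rearrange} $B_e$ while keeping the multiset of part sizes on $B_e$ fixed. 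Each part present on $B_e$ gets one contiguous block. The block of the part containing $u$ (if that part meets $B_e$) is placed at the $u$-end, the block of the part containing $v$ at the $v$-end, and the other blocks go in between.

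I will check case by case that after this rearrangement each part has $\rho=1$ and $\alpha$ at least its old value. The cases are: $u,v$ in different parts, $u,v$ in the same part (put that block at the $u$-end; if the part covers all of $B_e$ it gets $\alpha=2$ as before), and neither endpoint in a part meeting $B_e$. In each case no part loses edges, and if the branch was not already in the required form some part strictly gains or stays equal. This establishes the first two nice conditions branch by branch. Because the operation is local to $B_e$, it does not disturb other branches.

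For the third condition, suppose parts $V_i,V_j$ both meet two branches $e_1,e_2$. After the previous step their pieces are single blocks with sizes $a_1,b_1$ in $e_1$ and $a_2,b_2$ in $e_2$. I will redistribute the $a_1+a_2+b_1+b_2$ vertices so that one of the two parts, say $V_i$, disappears from one branch. If $a_1+a_2\le a_1+b_1$, then $V_i$ takes $a_1+a_2$ consecutive positions of its $e_1$-region and $V_j$ gets the rest, and symmetric choices cover the other case. Then $\rho$ drops for $V_i$ while $V_j$ still has at most two runs, so the edge sum does not decrease and the incidence count drops. The main obstacle I expect is the attachment term $\alpha$: if $V_i$ or $V_j$ owns an endpoint of $e_1$ or $e_2$, the swap could break an edge to that endpoint or violate the second condition. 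My first attempt is to orient the redistribution so that whichever part owns an endpoint of a branch keeps its block adjacent to that endpoint. I will do a short case analysis on which of $i,j$ own endpoints, choosing the branch in which the total region $a_k+b_k$ is large enough to host the absorbed part. Afterwards I re-run the first normalization, which can only increase $\Phi$ further. If some degenerate case still resists this, I will fall back to showing that in that case merging yields strictly more edges, so that $\Phi$ increases in its first coordinate anyway.
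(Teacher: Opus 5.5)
Your route is the same as the paper's. You first normalize each branch so that every part occupies one block, anchored at an endpoint it owns. Then, whenever $V_i$ and $V_j$ share two branches $e_1,e_2$, you shift the smaller piece across so that one part leaves a branch, and iterate. Your branch normalization is correct, and more carefully argued than in the paper.

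The gap is in the exchange step, which is the only nontrivial part. You call the attachment term ``the main obstacle'' and leave it to a case analysis you do not carry out. Your fallback does not repair this. A strict increase of $\sum_i|E(G[V_i])|$ says nothing about each individual part keeping at least $\gamma$ edges, and feasibility is a per-part requirement. A smaller point: normalization can leave $\Phi$ unchanged, so ``every operation strictly increases $\Phi$'' is false as stated. Termination should instead come from the number of pairs $(l,e)$ with $V_l\cap V(B_e)\neq\emptyset$, which every exchange strictly decreases and normalization leaves unchanged.

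The missing observation is that after normalization the attachment term no longer depends on positions. Let $c_{l,e}\in\{0,1\}$ indicate whether $V_l$ contains an endpoint of $e$; this value is fixed, because $V_l\cap V(H)$ never changes. A part $V_l$ with $s\ge 1$ vertices on $B_e$ then has exactly $s-1+c_{l,e}$ edges of $G[V_l]$ incident to $B_e$, plus one more if $s=|V(B_e)|$ and $V_l$ contains both ends of $e$.

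Now assume $a_2\le b_1$; otherwise exchange the roles of $i,j$ and of $e_1,e_2$. Move the $a_2$ vertices of $V_i$ on $B_{e_2}$ into its existing block on $B_{e_1}$, which is nonempty since $a_1\ge 1$. Give the freed positions on $B_{e_2}$ to $V_j$, and renormalize. Then:
\begin{itemize}
\item $V_i$ gains at least $a_2$ edges on $e_1$ and loses $a_2-1+c_{i,e_2}\le a_2$ on $e_2$.
\item $V_j$ gains at least $a_2$ on $e_2$, since $b_2\ge 1$.
\item $V_j$ loses exactly $a_2$ on $e_1$ if $b_1>a_2$, and loses $b_1-1+c_{j,e_1}\le a_2$ if $b_1=a_2$.
\end{itemize}
So no part loses an edge, every $|V_l|$ is preserved, and the incidence count drops, whoever owns which endpoint. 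No case analysis is needed and the fallback can be dropped. With this computation your proof is complete, and it supplies the justification that the paper only asserts (``both of the parts have the same or more edges'').
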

\begin{proof}
    If $V_i$ is disconnected inside one branch, we can move one connected part to the other such that, it still is a solution.
    Other vertices in the branch will be moved in the opposite direction and no connected part will be split. 
    Similarly if a part of part is not connected to the vertex of $H$ we connect them. 

    Lastly, if two parts share two or more branches, we can take the smallest induced path from the two shared branches and swap it with the same number of vertices from the second part in the second branch. 
    By connecting the parts again, both of the parts have the same or more edges. 
    Applying these two steps exhaustively we have a solution $V'_1, V'_2, \dots, V'_r$.
\end{proof}

A nice solution allows us to bound the number of parts that are present in two or more branches by $\binom{|E(H)|}{2}$.

We first guess which parts occupy the vertices of $H$.
There are at most $k^k$ many assignments.
If a vertex adjacent to such an assigned vertex belongs to the same coalition, the coalition gains one edge.
There are two such edges if the whole branch belongs to the same part. 
We also guess for each branch how many of these extra edges are there.
As we already know the solution on $V(H)$, we know to which parts these edges belong.
If the incident vertices of a branch are from the same part there are three possibilities, if they are from different parts there are 4 possibilities.
This is noted as $\partial_{i,e} \in \{0,1,2\}$ and there are at most $4^{\binom{k}{2}}$ possibilities in total.  

We group parts intersecting multiple branches with parts containing at least one vertex from $V(H)$ and call them \emph{special}.
Let $W_1, W_2, \dots, W_l, l \in \mathcal{O}(k^4)$ be the partition of $V(H)$ into the special parts.
The rest of the parts are contained each inside just one branch, we call such parts \emph{well-behaved}.
The well behaved parts contain just $\gamma$ edges and $\gamma+1$ vertices. 
The extra vertices in each branch will go towards the special parts.

For each special part $i \in [l]$ we further guess inside which branches it has at least one vertex.
We denote it as $F_i \subseteq E(H)$.
There are ${{2^{|E(H)|}}^l} \in (2^{k^2})^{\mathcal{O}(k^4)} = 2^{\mathcal{O}(k^6)}$ such possibilities.
Together, all guesses of $W$, $\partial$, and $F$ add up to $k^k \cdot 4^{\binom{k}{2}} \cdot 2^{\mathcal{O}(k^6)} = 2^{\mathcal{O}(k^6)}$. 
For every guess, we run an ILP and return YES iff at least one ILP outputs YES.

We build a ILP that checks whether each path has the correct number of vertices. 

\noindent
\textbf{Variables}:
\begin{itemize}
    \item $\forall e \in E(H): x_{e}$ - how many well-behaved parts are contained in the $ab$-branch.
    \item $\forall i \in [l], \forall e \in E(H): y_{i,e} \in [\partial_{i, e}, n]$ - how many vertices are in $B_e$
\end{itemize}

\noindent
\textbf{Constraints}:
\begin{enumerate}
    \item \label{thm:mln:constr:length_of_paths} We count each vertex at most once within each branch.
    $$\forall e \in E(H): (\gamma+1)x_{e} + \sum_{i \in [l]} y_{i,e} \le |V(B_e)|$$
    \item \label{thm:mln:constr:edge-count_of_special_parts} To satisfy edge constraint for special parts.
    $$\forall i \in [l]: |E(G[W_i])| + \sum_{e \in F_i} (y_{i,e}-1 + \partial_{i,e}) \ge \gamma$$
    \item \label{thm:mln:constr:extra} If we guessed there to be two extra edges counted for one part within one branch, then all of the vertices inside the branch belong to that part. 
    $$\forall i \in [l], \forall e \in E(H): \partial_{i,e} = 2 \implies y_{i,e} = |V(B_e)|$$
    \item \label{thm:mln:constr:num_parts} There are exactly $r$ parts in total.
    $$ l + \sum_{e \in E(H)} x_e = r$$

\end{enumerate}

\noindent
\emph{The Running time.}
There are $\binom{k}{2} + \binom{k}{2}\cdot \mathcal{O}(k^4) = \mathcal{O}(k^6)$ variables.
Therefore, using \Cref{pro:pilp}, we can solve the ILP in $k^{\mathcal{O}(k^6)}n$.
Together with the guessing part we have a runtime of $k^{\mathcal{O}(k^6)}n$.
As the relationship between $k$ and $\mln$ is linear, the asymptotic complexity is $\mln^{\mathcal{O}(\mln^6)}\cdot n$.

\begin{claim}\label{lem:correctness of mln}
    $\mathcal{I}$ is a YES-instance of \textsc{ECGP} if and only if the algorithm returns YES.
\end{claim}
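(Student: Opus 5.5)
We prove the two directions separately. The backbone is Lemma~\ref{lem:mln_nice_solution}, which lets us restrict attention to nice solutions, together with the observation that the guesses $W$, $\partial$, $F$ together with the ILP variables are exactly a compressed description of a nice solution.

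\emph{Forward direction (YES-instance $\Rightarrow$ algorithm returns YES).}
\begin{enumerate}
\item Take a solution and, by Lemma~\ref{lem:mln_nice_solution}, replace it by a nice solution $V_1,\dots,V_r$ with the same part sizes.
\item Classify the parts. Those meeting $V(H)$ or meeting at least two branches are the special parts; all others lie inside a single open branch $B_e$. Niceness bounds the number of special parts by $\mathcal{O}(k^4)$.
\item Normalize well-behaved parts. A part inside a single branch is a contiguous subpath, so it induces exactly $|V_i|-1$ edges. Shrink it to exactly $\gamma+1$ vertices, giving the freed vertices to an adjacent part along the path. Adding vertices never destroys induced edges, and contiguity along the branch is preserved.
\item Read off the guesses. $W_i := V_i\cap V(H)$. $F_i$ is the set of branches that $V_i$ meets. $\partial_{i,e}\in\{0,1,2\}$ is the number of edges joining $V_i\cap B_e$ to endpoints of $e$ lying in $V_i$; by the second nice property, this subpath is attached to those endpoints.
\item Set the variables. $x_e$ is the number of well-behaved parts in $B_e$, and $y_{i,e} := |V_i\cap B_e|$.
\item Check the constraints. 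Constraint~\ref{thm:mln:constr:length_of_paths} holds because the parts are disjoint. Constraint~\ref{thm:mln:constr:num_parts} holds by counting parts. Constraint~\ref{thm:mln:constr:extra} holds because two attaching edges force the whole branch into $V_i$.
\item For constraint~\ref{thm:mln:constr:edge-count_of_special_parts}, count the edges of $G[V_i]$. They consist of edges inside $W_i$, i.e.\ edges of $H$ between $W_i$-vertices whose branch is empty, giving $|E(G[W_i])|$. Each contiguous piece in $B_e$ contributes $y_{i,e}-1$ path edges plus $\partial_{i,e}$ attaching edges. Hence the left-hand side equals $|E(G[V_i])|\ge\gamma$.
\end{enumerate}

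\emph{Backward direction (algorithm returns YES $\Rightarrow$ YES-instance).} Fix a guess and a feasible integral ILP solution, and build a partition directly.
\begin{enumerate}
\item Assign each vertex of $V(H)$ to the special part given by $W$.
\item In each branch $B_e$, place the subpaths of the special parts. A part with $\partial_{i,e}\ge1$ gets a block of $y_{i,e}$ vertices adjacent to its endpoint in $W_i$. If $\partial_{i,e}=2$, constraint~\ref{thm:mln:constr:extra} makes the block the whole branch.
\item Fill the middle of the branch with $x_e$ consecutive blocks of $\gamma+1$ vertices, each a well-behaved part. Each such block is a path with $\gamma$ edges.
\item Constraint~\ref{thm:mln:constr:length_of_paths} guarantees these blocks fit. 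Leftover vertices in the branch are appended to an adjacent block, which only increases its edge count.
\item Each special part then induces at least the quantity in constraint~\ref{thm:mln:constr:edge-count_of_special_parts}, hence at least $\gamma$ edges.
\item Constraint~\ref{thm:mln:constr:num_parts} gives exactly $r$ parts, so this is a valid ECGP solution.
\end{enumerate}

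\emph{Expected obstacle.} The delicate point is the consistency of the edge count in constraint~\ref{thm:mln:constr:edge-count_of_special_parts} with the realization, in both directions. Three things must line up:
\begin{enumerate}
\item The $-1$ is correct only when the part meets $B_e$ in a single contiguous piece, which the first nice property provides.
\item When $\partial_{i,e}=2$ the block is the whole branch, and the counting formula $y_{i,e}-1+2$ then gives $|B_e|+1$ edges, which is correct.
\item If both endpoints of $e$ are in $W_i$ and the branch is empty, that edge must be counted exactly once, inside $|E(G[W_i])|$.
\end{enumerate}
A further trap arises when two special parts both attach to the same branch from opposite ends with $\partial\ge1$. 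Their blocks must not overlap, which is exactly constraint~\ref{thm:mln:constr:length_of_paths}. I would handle all of this by a short case analysis on $\partial_{i,e}\in\{0,1,2\}$ and on whether the endpoints of $e$ lie in $W_i$.

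One subtlety: a special part with $\partial_{i,e}=0$ that meets $B_e$ is a floating interior block. It must be placed contiguously, and its $y_{i,e}-1$ edges are still correct. For this we also need the domain of $y_{i,e}$ to force $y_{i,e}\ge1$ whenever $e\in F_i$. I would make this explicit by adding the constraints $y_{i,e}\ge 1$ for $e\in F_i$ and $y_{i,e}=0$ for $e\notin F_i$ if they are not already implicit.
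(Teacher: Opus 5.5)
Your proposal follows the paper's proof essentially step for step: pass to a nice solution, read off $W_i$, $F_i$, $\partial_{i,e}$, $x_e$, $y_{i,e}$ and verify the constraints; conversely, put the attached blocks at the branch ends, the $x_e$ blocks of $\gamma+1$ vertices between them, and the leftovers anywhere. Your reading of $\partial_{i,e}$ as the number of attaching edges is the precise version of the paper's looser ``$0$, at least one, all'' assignment, and the shrinking of well-behaved parts and the extra constraints on $y_{i,e}$ are unnecessary, since the first constraint is only an inequality and a term with $e\in F_i$, $y_{i,e}=0$ merely undercounts.

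One point escapes both your verification and the paper's, namely the variable domain $y_{i,e}\in[\partial_{i,e},n]$. Take a branch with a single internal vertex whose two attaching edges both lie in $V_i$: then $\partial_{i,e}=2>1=y_{i,e}$, so the variable does not lie in its domain. That lower bound should really be $\min\{1,\partial_{i,e}\}$.
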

\begin{claimproof}
    \emph{Forward direction.}
    Let $V_1, V_2, \dots, V_r$ be a nice solution that we get by applying \Cref{lem:mln_nice_solution} to $\mathcal{I}$.

    For each $B_e$ branch we set $x_e$ to the number of parts that are fully contained inside the branch.
    There are at most $k + \binom{k}{2}$ unaccounted parts. 
    For each of them, we set $y_{i, e} := |V'_i \cap B_e|$.
    We set $W_i := V'_i \cap V(H)$ and $\partial_{i,e}$ we set to zero, one or two if at the corresponding branch the part $i$ has respectively 0, at least 1 or all of the vertices in the branch.

    We now have to check the ILP conditions hold.
    In \Cref{thm:mln:constr:length_of_paths} each part is either fully in the branch, and so it contains at least $\gamma+1$ vertices, or it is  counted within the $y$ variables.
    Every special part present inside a branch forms a path on $y_{i,e}$ vertices and one less edge. 
    If the part contains one of the ends of the branch, then the path is connected there and there is one more edge.
    If both of the branch ends are in the part two edges are added only in the case of the whole branch being in the part.
    This is exactly how the $\partial_{i,e}$ was chosen and thus \Cref{thm:mln:constr:edge-count_of_special_parts} holds.
    All \Cref{thm:mln:constr:extra} conditions are satisfied as $\partial_{i,e}$ was set up to 2 if all of the vertices of a branch were in the same part. 
    Lastly \Cref{thm:mln:constr:num_parts} is satisfied as each part was identified as one of the $l$ special parts or it is fully contained inside a $e$-branch and so it is counted in $x_e$.

\emph{Backward direction.} 
    Given an ILP solution, we construct the parts.
    We know the $W_1, \dots, W_l$ and $\partial_{i,e}$.
    We put all the vertices of $W_i$ into $V_i$.
    For each branch we distribute the vertices separately.
    
    First if $\partial_{i,e} = 1$, then one of the endpoints is in $V_i$.
    We put the closest $y_{i, e}$ many vertices into $V_i$.
    There could have been only two such part according to definition of $\partial$.
    If $\partial_{i,e} = 2$, then $V(B_e) \subseteq V_i$.
    The rest of the vertices we split into $x_e$ parts, each being a path with $\gamma+1$ vertices.
    There are enough vertices in the path thanks to \Cref{thm:mln:constr:length_of_paths}.
    Any extra vertices are put into $V_i$ and so $V_1, V_2, \dots, V_r$ is a partition.
    Now it remains to argue, that each special part has utility at least $\gamma$.
    As \Cref{thm:mln:constr:length_of_paths} is satisfied and we managed to put vertices inducing $y_{i,e} + \partial_{i,e}$ edges into $V_i$, then $|V_i| \ge \gamma$ for every $i$.
\end{claimproof}

This gives us an FPT algorithm for \textsc{ECGP} when parameterized by maximum leaf number.
We slightly modify the ILP to get an FPT algorithm for \textsc{BECGP} as well.

We assume that $(\gamma + 1) \le \frac{n}{r}$ otherwise there is no well behaved and we set all $x_{e}$ to $0$.
We add a condition that limits the number of vertices in each special part.
\begin{enumerate}
    \item[5] \label{thm:mln:vertex-count_of_special_parts}
    The vertices of $V(H)$ in the $i$-th part and the vertices in each branch add up to at most~$\frac{n}{r}$.
    $$\forall i \in [l]: |W_i| + \sum_{e \in E(H)} y_{i,e} \le \frac{n}{r}$$
\end{enumerate}  

We did not add any new variables, so the updated algorithm has the same asymptotic complexity as the original.

We have to argue that the extra constraint does not change the correctness.
In the forward direction we use the same assignment of variables.
As $y_{i,e}$ already represents the number of vertices in part $i$ within $e$-branch and no vertex is counted twice in another variable or $|W_i|$, the left side is at most $|V_i| \le \frac{n}{r}$.
Therefore, the constraints in \Cref{thm:mln:vertex-count_of_special_parts} are also satisfied.
For the backward direction, after the vertices are distributed according to the variables, there may be extra unassigned vertices due to \Cref{thm:mln:constr:length_of_paths} being inequality. 
However, so far in each part there are at most $\frac{n}{r}$ vertices due to \Cref{thm:mln:vertex-count_of_special_parts}.
Therefore, the rest of the vertices are distributed to fill each part to exactly $\frac{n}{r}$.

This concludes the FPT algorithms parameterized by the maximum leaf number.

\subsection{FPT via \(N\)-Fold ILP}

In this work, we use the so-called \emph{$N$-fold integer programming} formulation. 
Here, the problem is to minimize a linear objective over a set of linear constraints with a very restricted structure. 
In particular, the constraints are as follows. 
We use $x^{(i)}$ to denote a set of~$t_i$ variables (a so-called \emph{brick}).
\begin{align}
	D_1 x^{(1)} + D_2 x^{(2)} + \cdots + D_N x^{(N)} &= \textbf{b}_0    \label{eq:NFold:linking}  \\
	A_i x^{(i)}                        &= \textbf{b}_i    & \forall i \in [N]       \\
	\textbf{0} \le x^{(i)}                 &\le \textbf{u}_i  & \forall i \in [N]
\end{align}
Where we have $D_i \in \Z^{\rho \times \tau_i}$ and $A_i \in \Z^{\sigma_i \times \tau_i}$
Let us denote $\sigma = \max_{i \in [N]} \sigma_i$, $\tau = \max_{i \in [N]} \tau_i$, and let the dimension be~$d$, i.e., $d = \sum_{i \in [N]} \tau_i \le N\tau$.
Constraints~\eqref{eq:NFold:linking} are the so-called \emph{linking constraints} and the rest are the \emph{local constraints}. In the analysis of our algorithms, we use the following result of Eisenbrand et al.~\cite{EisenbrandHKKLO19}.

\begin{proposition}[{\cite[Corollary~97]{EisenbrandHKKLO19}}]\label{prop:n_fold_algo}
	$N$-fold IP can be solved in $a^{\rho^2s+\rho\sigma^2} \cdot d \cdot \log(d) \cdot L$ time, where
	\begin{itemize}
		\item $L$ is the maximum feasible value of the objective and
		\item $a= \rho\cdot\sigma \cdot \max_{i \in [N]} \left( \max ( \|D_i\|_\infty, \|A_i\|_\infty ) \right)$.
	\end{itemize} 
\end{proposition}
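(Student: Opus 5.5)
This proposition is quoted from Eisenbrand et al.~\cite[Corollary~97]{EisenbrandHKKLO19}. In the paper it is used as a black box, so the honest plan is to invoke it as stated. If I had to reconstruct the argument, I would follow the standard Graver-basis augmentation framework. Write the constraint matrix as the $N$-fold block matrix $\mathcal{A}$, with the $D_i$ in the top $\rho$ rows and the $A_i$ on the block diagonal, and let $\Delta:=\max_i\max(\|D_i\|_\infty,\|A_i\|_\infty)$. The algorithm starts from a feasible solution, which can be found by a standard auxiliary $N$-fold program with slack variables and the same block structure. It then repeatedly moves along an augmenting step $\lambda g$, where $g$ is an element of the Graver basis of $\mathcal{A}$ and $\lambda\in\mathbb{Z}_{>0}$, and stops when no improving step exists. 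Optimality at that point follows from the classical fact that a feasible point is optimal if and only if no Graver element improves it.

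The structural core is a bound on Graver elements. First, every Graver element of a single local block $A_i$ has $\ell_1$-norm at most $(2\sigma\Delta+1)^{\sigma}$. This follows from the Steinitz-lemma argument: a long conformal sum of columns can be reordered so that all prefix sums stay in a small box, and a repeated prefix yields a conformal decomposition, contradicting minimality. Applying the same argument one level up gives the second bound. A Graver element of $\mathcal{A}$ decomposes brick-wise into sums of local Graver elements, and their images under the $D_i$ are vectors in $\mathbb{Z}^{\rho}$ of bounded norm. This bounds the $\ell_1$-norm of every Graver element of $\mathcal{A}$ by $L_\mathcal{A}\le (\rho\sigma\Delta)^{\mathcal{O}(\rho\sigma)}$.

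With that bound, the best augmenting step for a fixed step length $\lambda$ is found by dynamic programming over the bricks $i=1,\dots,N$. The state is the partial linking sum $\sum_{j\le i}D_jg^{(j)}$, which lies in a box of side $\mathcal{O}(\Delta L_\mathcal{A})$ in $\mathbb{Z}^{\rho}$. Within each brick one enumerates or optimises the local part subject to $A_ig^{(i)}=0$ and the bounds. The state space, and hence the work per brick, is $a^{\mathcal{O}(\rho^2\sigma+\rho\sigma^2)}$ with $a=\rho\sigma\Delta$. This is where the exponent in the proposition comes from, and it explains why the running time is linear in $d$ up to logarithmic factors.

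To control the number of iterations, I would restrict $\lambda$ to powers of two and use a halving argument: the Graver-best step decreases the gap to the optimum by a constant fraction. This gives $\mathcal{O}(d\log L)$-type bounds on the number of iterations. The refinements in \cite{EisenbrandHKKLO19} (proximity plus scaling) then yield the stated $d\log d\cdot L$ dependence.

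The main obstacle is the Steinitz-based two-level norm bound together with the dynamic program's state-space analysis; the exact exponent $\rho^2\sigma+\rho\sigma^2$ depends on these. Since reproducing that analysis is beyond this paper's scope, the proof here is simply the citation, and only the parameters $\rho$, $\sigma$, $\Delta$ and $L$ need to be checked when the proposition is applied.
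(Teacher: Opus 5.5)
Your proposal matches the paper: the paper gives no proof of this proposition and imports it from Eisenbrand et al.\ as a black box, so the citation is the whole argument and your Graver-augmentation reconstruction is optional background. One small slip in that sketch: a Graver-best (or power-of-two) augmenting step does not shrink the optimality gap by a constant fraction, only by a factor of the form $1-\Theta(1/d)$, because the difference to an optimum decomposes into at most $2d-2$ conformal Graver elements, and this is exactly why the iteration bound is $\mathcal{O}(d\log(\cdot))$ rather than logarithmic.
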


\begin{theorem}
\label{thm:uccp-vin}
    There is an FPT algorithm for \textsc{ECGP} and \textsc{BECGP} parameterized by $\vi$. 
\end{theorem}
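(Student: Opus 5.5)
Let $X$ with $|X|\le k:=\vi$ be a deletion set such that every component of $G-X$ has at most $k$ vertices; such a set can be found in FPT time. Components of $G-X$ fall into \emph{types}: two components have the same type if there is an isomorphism between them that also preserves adjacency to each vertex of $X$. Since a component has at most $k$ vertices and $|X|\le k$, the number of types is bounded by a function $f(k)$ (roughly $2^{k^2}\cdot 2^{k^2}$).

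\textbf{Reducing to few relevant parts.} A part either intersects $X$, or it lies entirely inside $G-X$. At most $k$ parts intersect $X$. We guess the partition $A_1,\dots,A_p$ of $X$ into the parts meeting $X$; there are at most $k^k$ choices.

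\textbf{Handling parts avoiding $X$.} A part avoiding $X$ gets its edges only from the components it intersects, since there are no edges between distinct components. For \textsc{ECGP} we normalise such parts:
\begin{itemize}
\item Any part avoiding $X$ can be shrunk to a minimal edge-witness and the leftover vertices dumped into another part, which never hurts.
\item Each part is then a union of pieces of components, so it is described by how many pieces of each ``shape'' it takes.
\end{itemize}

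\textbf{The $N$-fold ILP.} Each component $j$ becomes one brick. Its variables form a 0/1 vector indexed by the configurations that say how $C_j$ is split among
\begin{itemize}
\item the $p$ special parts, and
\item pieces destined for non-special parts, classified by the pair (edge count, size) of each piece.
\end{itemize}
The number of configurations is at most $(p+k)^k$. The local constraints state that exactly one configuration is chosen.

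The linking constraints are as follows:
\begin{enumerate}
\item For each special part $A_i$, its edge count $|E(G[A_i])|$ plus the edges contributed by each brick (edges inside the assigned piece plus edges to $A_i$) must be at least $\gamma$. This is an inequality; we turn it into an equality with a slack variable.
\item The pieces for non-special parts must be grouped into $r-p$ parts, each with at least $\gamma$ edges.
\end{enumerate}

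Constraint 2 is the main obstacle, because grouping pieces is a bin-packing-like condition. My plan is to guess the number $q=r-p$ of non-special parts only implicitly:
\begin{itemize}
\item If the total number of non-special pieces with edges suffices, we count ``piece-multiset patterns'' of a part.
\item Since pieces have at most $\binom{k}{2}$ edges and a part needs only $\gamma$ edges, a minimal part uses at most $\gamma$ pieces.
\item However, $\gamma$ is not bounded by the parameter. So instead I would argue that merging pieces greedily is optimal: the non-special parts are feasible iff the pieces can be grouped into $q$ sets each of edge-sum at least $\gamma$. With piece edge-values in $[1,\binom{k}{2}]$, this holds iff a simple counting condition is met, namely $\sum$ edges $\ge q\gamma + \text{rounding loss}\le q\binom{k}{2}$.
\end{itemize}
This almost-linear condition has a bounded-error gap. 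To make it exact I would use a rounding / bin-packing-with-few-item-types argument: with $\binom{k}{2}$ item sizes, the configuration ILP for the packing has boundedly many variables. I would therefore add it as extra linking rows, keeping $\rho$ bounded in $k$. This is the step I am least sure of.

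Coefficients are bounded by $\binom{2k}{2}$, so \Cref{prop:n_fold_algo} gives FPT time.

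\textbf{\textsc{BECGP}.} Additionally, size rows fix each special part to size exactly $n/r$. The non-special pieces must form parts of size exactly $n/r$, so pieces carry both a size and an edge count, handled by the same configuration approach.
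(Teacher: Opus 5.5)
There is a genuine gap, and it is the step you flagged yourself: deciding whether the pieces destined for non-special parts can be grouped into $q=r-p$ parts with at least $\gamma$ edges each. Neither of your fixes works.

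\emph{The counting condition is not exact, and the error is real.} Suppose every piece has $2$ edges (say all components are copies of $P_3$ with no neighbours in $X$), $\gamma$ is odd, $q=2$, and there are $\gamma$ pieces. Then $2\gamma=q\gamma$ edges are available. Yet each part needs $(\gamma+1)/2$ whole pieces, so $\gamma+1$ pieces would be required and the instance is infeasible.

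\emph{The configuration ILP does not have boundedly many variables.} A configuration records how many pieces of each edge-value in $\{1,\dots,\binom{k}{2}\}$ a single part receives. Even the minimal configurations, with total between $\gamma$ and $\gamma+\binom{k}{2}-1$, number about $\gamma^{\binom{k}{2}-1}$. Their entries, which would become coefficients in your linking rows, are as large as $\gamma$. Since $\gamma$ is not a parameter, this yields at best an XP algorithm. You also cannot write one covering row per non-special part as a linking row, because $r-p$ is unbounded.

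The missing idea is to make the \emph{non-special parts themselves} bricks of the $N$-fold. The unbounded number of bins is then absorbed by the block structure, and $\gamma$ appears only on a right-hand side.
\begin{itemize}
\item For every $i\in\{p+1,\dots,r\}$, introduce variables $y_{t,i}$ counting the anonymous pieces with exactly $t$ edges placed in part $i$.
\item Each such brick has the single local row $\sum_t t\,y_{t,i}\ge\gamma$, plus a slack variable.
\item Link these to the component side by $\sum_i y_{t,i}=\sum_p m^p_t x_p$. Here $x_p$ counts the components realising pattern $p$, and $m^p_t$ is the number of $t$-edge pieces in $p$. Your per-component $0/1$ bricks would serve equally well in place of the aggregated $x_p$.
\end{itemize}
The number of linking rows is bounded by the number of component types, piece classes and special parts. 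All coefficients are at most about $\binom{k}{2}+k^2$, so the $N$-fold algorithm runs in FPT time; this is exactly the paper's formulation. For \textsc{BECGP}, pieces are indexed by (size, edge count), and each part-brick gets the extra local row $\sum_{c,t}c\,y_{c,t,i}=n/r$. The special parts get their size rows as linking constraints.
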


The algorithm consists first of brute-forcing all the possible solutions in the modulator and then verifying whether any solution can be extended to the whole graph using $N$-fold ILP.
Let $X \subseteq V$ of size at most $\vi$ be the vertex integrity modulator. 
Let $X = X_1 \cup X_2 \cup \dots \cup X_k$ be the partitioning of $X$ into at most $\vi$ different global parts.
Each connected component of $G\setminus X$ has size of at most $\vi$.

We assign each component $C$ of $G \setminus X$ a \emph{type} $T\in \mathcal{T}$ according to $G[X \cup V(C)]$ -- $C_1$ and $C_2$ have the same type if and only if there is an isomorphism of $G[X \cup V(C_1)]$ and $G[X \cup V(C_2)]$ that acts as an identity on $X$. 
We denote by $n_T$ the total number of components of type $T$ present in $G \setminus X$.
A \emph{pattern} $p \in \mathcal{P}$ extends a type $T$ by specifying a partition of a component's vertices $V(C)$ into the modulator parts and anonymous non-modulator parts.
Formally, $p = (T, (e_1^p, e_2^p, \dots, e_k^p), (m_0^p, m_1^p, \dots, m_{h}^p))$, where $e_i^p = |E(G[X_i \cup V_i^C])| - |E(G[X_i])|$ denotes the number of edges added to the modulator part $X_i$ by this pattern, and $m_t^p$ counts the number of chunks $W_j \in \{W_1, \dots, W_m\}$ whose internally induced edge-count $|E(G[W_j])|$ is exactly $t$.
Note that, $t$ is bounded by $h := \binom{\vi}{2}+\vi^2$ the maximum edges of $G[X\cup C]\setminus E(G[X])$.
For a single type, there are at most $h^k\cdot h^{h} \in \vi^{\mathcal{O}(\vi^2)}$ patterns extending it.

We can now state the ILPs.

\noindent
\textbf{Variables}:
\begin{enumerate}
    \item $x_p$: number of components with pattern $p$.
    \item $y_{t, i}$: number of chunks with edge-count $t$ within the part $i$.
\end{enumerate}

\noindent
\textbf{Constraints}:
\begin{enumerate}
    \item \label{thm:vi:enum:modulator_parts} The parts intersecting the modulator have edge-count of at least $u$ each:
    
        $$\forall i \in [k]: |E(G[X_i])| + \sum_{p \in \mathcal{P}} e_i^p \cdot x_p \ge \gamma$$
    
    \item \label{thm:vi:enum:other_parts} Each other part has edge-count of at least $u$:
    
        $$\forall i \in [k+1, r]: \sum_{t \in [h]_0} t \cdot y_{t,i} \ge \gamma$$
    
    \item \label{thm:vi:enum:type_counts} The number of components with a pattern that extends a type of a component is exactly the same:

        $$\forall T \in \mathcal{T}: \sum_{p \text{ extends } T} x_p = n_T$$
    
    \item \label{thm:vi:enum:chunks_count} The number of chunks across components is the same as it is assigned to the parts:

        $$\forall t \in [h]_0: \sum_{i} y_{t,i} = \sum_{p \in \mathcal{P}} m_t^p x_p$$
    
\end{enumerate}

\begin{claim}\label{lem:running_time_vin}
  The above $N$-fold ILP can be solved in time $2^{2^{\mathcal{O}(\vi^2)}} r \log(r)$.
\end{claim}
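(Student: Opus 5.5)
The plan is to cast the ILP in the $N$-fold form of \eqref{eq:NFold:linking} and then read the running time off \Cref{prop:n_fold_algo}. Blocks are indexed by the $r$ parts rather than by components. For each non-modulator part $i\in[k+1,r]$ we form a brick $x^{(i)}=(y_{0,i},\dots,y_{h,i})$ of $h+1=\mathcal{O}(\vi^2)$ variables. We add one more brick $x^{(0)}$ that holds all pattern variables $x_p$ together with slack variables turning the modulator inequalities into equalities. There are at most $|\mathcal T|\cdot \vi^{\mathcal{O}(\vi^2)}$ pattern variables. Since a type is determined by a graph on at most $2\vi$ vertices with $X$ fixed, $|\mathcal T|\le 2^{\mathcal{O}(\vi^2)}$, and so this brick has $2^{\mathcal{O}(\vi^2)}$ variables.

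The constraints split as follows.
\begin{itemize}
\item Constraint~\ref{thm:vi:enum:other_parts} involves only $y_{\cdot,i}$. With one slack variable it becomes a local constraint $A_i x^{(i)}=\gamma$ with $\sigma_i=1$ and entries bounded by $h$.
\item Constraints~\ref{thm:vi:enum:modulator_parts} and~\ref{thm:vi:enum:type_counts} involve only $x^{(0)}$. They are the local constraints of brick $0$: at most $k+|\mathcal T|$ rows, with coefficients $e_i^p\le h$ and $0/1$.
\item Constraint~\ref{thm:vi:enum:chunks_count} couples all bricks. It is the linking block with $\rho=h+1$ rows: $D_i$ is the identity on the $y$-variables, and $D_0$ has entries $-m_t^p$, which are at most $\vi$.
\end{itemize}
Hence $\rho=\mathcal{O}(\vi^2)$, $\|D\|_\infty,\|A\|_\infty\le \mathrm{poly}(\vi)$, $\tau=2^{\mathcal{O}(\vi^2)}$ and $N=r-k+1$. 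The dimension is $d\le r\cdot\mathcal{O}(\vi^2)+2^{\mathcal{O}(\vi^2)}$.

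The main obstacle is that the local block of brick $0$ has $\sigma=2^{\mathcal{O}(\vi^2)}$ rows, and this enters the exponent as $\rho\sigma^2$. Also, $a=\rho\sigma\max\|\cdot\|_\infty$ becomes $2^{\mathcal{O}(\vi^2)}$. The plan is to accept this. Plugging into \Cref{prop:n_fold_algo} gives
\[
a^{\rho^2 s+\rho\sigma^2}\le \bigl(2^{\mathcal{O}(\vi^2)}\bigr)^{2^{\mathcal{O}(\vi^2)}}=2^{2^{\mathcal{O}(\vi^2)}}.
\]
Here $s$ is the row count of the $D_i$ blocks used in the cited corollary and is $\mathcal{O}(\vi^2)$. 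This is multiplied by $d\log d$, which is $2^{\mathcal{O}(\vi^2)}\, r\log r$, so the total has the claimed shape.

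For the objective, we use a feasibility formulation with the zero objective, or minimize the sum of slacks, so that $L$ is bounded by a polynomial that can be absorbed. If this is too crude, the fallback is to first ensure $r\le n$, which is otherwise a trivial NO for \textsc{ECGP}. A last technical point: the upper bounds $\textbf{u}_i$ can be taken as $n$, and they do not affect the exponent.
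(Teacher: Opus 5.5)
Your proof is correct, but it splits the $N$-fold matrix differently from the paper.

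\textbf{The paper's split.} All of the modulator-part constraints, the type-count constraints and the chunk-count constraints are declared linking. This gives $\rho=2^{\mathcal{O}(\vi^2)}$, dominated by the number of types. Each per-part edge constraint forms its own brick with $\sigma=1$. The paper never says which brick holds the pattern variables $x_p$; implicitly it is a brick with no local rows.

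\textbf{Your split.} Only the $\mathcal{O}(\vi^2)$ chunk-count rows are linking. The $x_p$, together with the modulator and type-count rows, form one extra brick. This gives $\rho=\mathcal{O}(\vi^2)$ but $\sigma=2^{\mathcal{O}(\vi^2)}$.

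\textbf{Comparison.} Both $a$ and the exponent $\rho^2\sigma+\rho\sigma^2$ are polynomial in $\rho$ and $\sigma$. The $2^{\mathcal{O}(\vi^2)}$ type rows must land in one of the two counts, so both splits give the same $2^{2^{\mathcal{O}(\vi^2)}}$ factor. Your version is more explicit about brick membership, the slack variables that turn the $\ge$ constraints into equalities, and the upper bounds. The paper's version has the simpler $\sigma=1$ shape.

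\textbf{Two harmless slips.}
\begin{itemize}
\item The undefined $s$ in the proposition is the local row count $\sigma$, as in Eisenbrand et al., not the linking row count. In your split this still gives $\rho^2\sigma=2^{\mathcal{O}(\vi^2)}$.
\item Brick $0$ has $|\mathcal{T}|\cdot\vi^{\mathcal{O}(\vi^2)}=2^{\mathcal{O}(\vi^2\log\vi)}$ variables, not $2^{\mathcal{O}(\vi^2)}$. This only affects $d\log d$, which the leading doubly-exponential factor absorbs.
\end{itemize}

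\textbf{The factor $L$.} Your remark that $L$ ``can be absorbed'' is not literally right: a factor depending on $n$ cannot disappear into $r\log r$. The paper, however, simply omits $L$ from the bound, so this looseness is shared rather than a gap specific to your argument.
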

\begin{claimproof}
    The number of rows of \Cref{thm:vi:enum:modulator_parts} and \Cref{thm:vi:enum:chunks_count} is at most $\mathcal{O}(\vi)$ and $\mathcal{O}(\vi^2)$ respectively.
    The number of types is also bounded by a function of $\vi$ and that bounds the number of constraints in \Cref{thm:vi:enum:type_counts} 
    All of \Cref{thm:vi:enum:modulator_parts,thm:vi:enum:chunks_count,thm:vi:enum:type_counts} are linking constraints for $N$-fold.
    However, there can be $r$ many constraints in \Cref{thm:vi:enum:other_parts}. 
    Fortunately, each row of \Cref{thm:vi:enum:other_parts} forms a brick in the matrix as each variable $y_{t,i}$ appears only in the row of part $i$ and in the linking constraint.
    The last structural property of the matrix we have to check is the size of the coefficients. 
    All of the coefficients in the matrix are bounded by the number of edges in a graph with $\mathcal{O}(\vi)$ vertices.

    With this distinction we can now calculate the running time of the $N$-fold ILP using \Cref{prop:n_fold_algo}.
    The number of types in $\mathcal{T}$ is at most $2^{2\vi^2}$. Note that 
    $\rho \in 2^{\mathcal{O}(\vi^2)}$,
    $\sigma=1$, 
    $a \in \rho \cdot \sigma \cdot \vi^2$, and 
    $d = |\mathcal{P}|+ r|\mathcal{T}| \in ({\mathcal{O}(\vi^2)}\cdot \vi^\mathcal{\vi^2})\cdot(r {\mathcal{O}(\vi^2)}) = r \vi^\mathcal{\vi^2}$.
    Therefore, the total running time of a single ILP instance is
    $2^{2^{\mathcal{O}(\vi^2)}} r \log(r)$. 
\end{claimproof}

In the algorithm there are run at most $\vi^\vi$ ILP instances and so the total running time is $2^{2^{\mathcal{O}(\vi^2)}} r \log(r) + n$.

\begin{claim}\label{clm:N-fold_ILP_vin}
       \textsc{ECGP} admits a solution if and only if at least one $N$-fold is feasible.
\end{claim}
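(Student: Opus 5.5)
We prove the two directions separately. The ``for at least one guess'' refers to the outer enumeration of the partition $X_1,\dots,X_k$ of the modulator $X$ into global parts, with $k\le \min\{\vi,r\}$; the remaining $r-k$ parts avoid $X$. For \textsc{BECGP} the same argument goes through once size constraints are added.

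\emph{Forward direction.} Fix a solution $V_1,\dots,V_r$ and number it so that $V_i\cap X=X_i$ for $i\le k$ and $V_i\cap X=\emptyset$ for $i>k$. Take the guess $X_1,\dots,X_k$.

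For each component $C$ of $G\setminus X$, split $V(C)$ into:
\begin{itemize}
\item the sets $V_i\cap V(C)$ for $i\le k$, which are attached to the modulator parts;
\item the sets $V_j\cap V(C)$ for $j>k$, which are the chunks.
\end{itemize}
This splitting gives a pattern $p$ extending the type of $C$. Its entry $e_i^p$ equals the number of edges of $G[X_i\cup (V_i\cap V(C))]$ not inside $G[X_i]$, and $m_t^p$ counts the chunks of $C$ inducing exactly $t$ edges. Set $x_p$ to the number of components receiving pattern $p$; then Constraint~\ref{thm:vi:enum:type_counts} holds. Set $y_{t,i}$ to the number of chunks of part $V_i$ ($i>k$) with edge-count $t$; then Constraint~\ref{thm:vi:enum:chunks_count} holds.

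The key observation is that no edge joins two distinct components of $G\setminus X$. Hence:
\begin{itemize}
\item For $i\le k$, $|E(G[V_i])| = |E(G[X_i])| + \sum_C e_i^{p(C)}$. Edges from $X_i$ to $C$ and edges inside $C\cap V_i$ are counted exactly once, so Constraint~\ref{thm:vi:enum:modulator_parts} holds.
\item For $i>k$, $|E(G[V_i])|$ is the sum of the edge-counts of its chunks, so Constraint~\ref{thm:vi:enum:other_parts} holds.
\end{itemize}

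\emph{Backward direction.} Take a feasible solution of the ILP for some guess. By Constraint~\ref{thm:vi:enum:type_counts}, for each type $T$ we can assign the patterns counted by the $x_p$ to the $n_T$ components of type $T$ bijectively. Each component $C$ is then split as its pattern dictates, using the isomorphism that fixes $X$. The vertices designated for modulator part $i$ go to $V_i\supseteq X_i$. This yields a multiset of chunks, containing for each $t$ exactly $\sum_p m_t^p x_p$ chunks with edge-count $t$. By Constraint~\ref{thm:vi:enum:chunks_count} we distribute these chunks among parts $k+1,\dots,r$ so that part $i$ receives $y_{t,i}$ chunks of edge-count $t$.

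The edge-count identities from the forward direction hold for any such assembly, again because components are pairwise non-adjacent. So Constraints~\ref{thm:vi:enum:modulator_parts} and~\ref{thm:vi:enum:other_parts} give at least $\gamma$ edges in every part.

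\emph{Main obstacle.} The delicate point is that the parts $V_i$ with $i>k$ must be nonempty; a part with $\gamma=0$ and no chunks would be empty. I would handle this in one of two ways:
\begin{itemize}
\item assume $\gamma\ge1$, which forces each such part to receive a chunk with $t\ge1$;
\item add the constraint $\sum_t y_{t,i}\ge 1$ within each brick.
\end{itemize}
A second point is making sure chunks with $t=0$ (edgeless pieces) are accounted for consistently, which is why $t$ ranges over $[h]_0$.

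For \textsc{BECGP}, patterns and chunks should also record vertex counts, with the constraints $|X_i|+\sum_p (\text{size}_i^p)x_p=n/r$ and $\sum_t\sum_s s\,y_{t,s,i}=n/r$. The same correspondence then goes through unchanged.
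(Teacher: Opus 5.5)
Your proposal is correct and is essentially the paper's argument: guess $X_1,\dots,X_k$, read off patterns and chunk counts from a solution in the forward direction, and in the backward direction match patterns to components type by type and hand out chunks according to the $y_{t,i}$. Your remark about nonempty parts when $\gamma=0$ covers a point the paper leaves implicit. One small correction: in the backward direction, two chunks coming from the same component may be assigned to the same part $V_i$ with $i>k$, so you only get $|E(G[V_i])|\ge\sum_t t\,y_{t,i}$ rather than equality, and non-adjacency of distinct components does not justify the identity there. Monotonicity does justify the inequality, since extra edges only help (the paper's ``no two chunks interfere negatively''), and the inequality is all you need.
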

\begin{proof}
    \emph{Forward direction.}
    Assume there exists a valid partition $V = V_1 \cup V_2 \dots \cup V_r$ such that $|E(G[V_i])| \ge \gamma$ for all $i \in [r]$.
    The algorithm brute-forces all partitions of $X$. Thus, one iteration will exactly match the intersection of the true partition with $X$.
    Each part induces patterns in components of $C$ of $G \setminus X$.
    We set $x_p$ to be the exact number of components in $G \setminus X$ that exhibit pattern $p$.
    Any part of a component $C$ assigned to a part $V_j$ ($j > k$) that doesn't touch the modulator acts as an anonymous chunk. 
    If the edge-count of this chunk is exactly $t$, it contributes to the count of $y_{t,j}$.

    Now we can check that the constraints are satisfied. 
    For $i \in [k]$, $|E(G[V_i])|$ consists of internal edges in $X_i$, internal edges within the component parts assigned to $i$, and edges between $X_i$ and those parts. 
    This is precisely counted in \Cref{thm:vi:enum:modulator_parts} by $|E(G[X_i])| + \sum_{p} e_i^p \cdot x_p \ge \gamma$.
    
    For each non-modulator part $i \in \{k+1, \dots, r\}$, all its edges come from the anonymous chunks assigned to it. 
    Since each chunk with edge-count $t$ contributes exactly $t$ edges and contributions between chunks can be only positive, then $\sum_{t} t \cdot y_{t,i} = |E(G[V_i])| \ge \gamma$ and so \Cref{thm:vi:enum:other_parts}.
    
    \Cref{thm:vi:enum:type_counts} holds as each component has a pattern. 
    Similarly \Cref{thm:vi:enum:chunks_count} holds as the $y$ variables were exactly set according the actual number of chunks which was determined from $x_p$.

\emph{Backward direction.} Assume one of the ILPs yields a feasible non-negative integer solution $(x_p, y_{t,i})$. 
    We reconstruct the partition of $V$ as follows.
    Place the vertices of $X$ into parts $V_1, \dots, V_k$ according to the current brute-forced partition $X = X_1 \cup \dots \cup X_k$.
    
    \Cref{thm:vi:enum:type_counts} ensures that every component with type $T \in \mathcal{T}$ was partitioned accordingly to patterns.
    We can arbitrarily match the components of type $T$ to the chosen patterns $p$.
    For each component, we distribute its vertices into $V_1, \dots, V_k$ according to its pattern $p$.
    The chunks can be distributed according to $y_{t,i}$ arbitrarily as no two chunks interfere negatively (no two chunks share edges, edges are only positive).
    Due to \Cref{thm:vi:enum:chunks_count} there are enough chunks to be assigned.
    Each vertex is assigned and the assignment is valid as \Cref{thm:vi:enum:type_counts} holds.
    As \Cref{thm:vi:enum:other_parts,thm:vi:enum:modulator_parts} holds each part has at least $\gamma$ induced edges.   
\end{proof}

The ILP can be modified to work for \textsc{BECGP} as well.

We need to track not only the edge-counts of chunks, but vertex-counts as well.
For that we define \emph{balanced pattern} $b \in \mathcal{B}$ such that $$b := (T, (e_1^b, e_2^b, \dots, e_k^b), (v_1^b, v_2^b, \dots, v_k^b), (m_{1,0}^b, m_{2,0}^b, m_{2,1}^b, \dots, m_{\vi, h}^b)),$$ where $v_i^b$ is the number of vertices assigned to the $i$-th part and $m_{c,t}^b$ is the number of chunks with exactly $c$ vertices and $t$ edges. 
The rest of the elements are defined the same as for pattern of a chunk. 

In a similar manner the variables are further distinguished with the vertex-counts.
\begin{enumerate}
    \item[2'] $y_{c, t, i}$: number of chunks with $c$ vertices, edge-count $t$ within the part $i$.
\end{enumerate}
We drop $y_{t,i}$ as a variable and each such occurrence is replaced as $y_{t, i} \to \sum_{c \in [\vi]} y_{c, t, i}$.
The $x_b$ variables are the same just are indexed with a balanced pattern instead.

We also have to further apply the division of chunks to \Cref{thm:vi:enum:chunks_count} to have the correct number of them.
And we add additional constraints.
\begin{enumerate}
    \item[4'] \label{thm:vi:enum:chunks_count_2} 
    The chunks are aggregated by the number of vertices $c$ and edges $t$ according to the number of patterns.
    $$\forall c \in [\vi], \forall t \in [h]_0: \sum_{i \in [k+1, r]} y_{c,t,i} = \sum_{b \in \mathcal{B}} m_{c,t}^b x_b$$

    \item[5] \label{thm:vi:enum:balanced_modulator}
    Each modulator part has $n/r$ vertices:
    $$ \forall i \in [k]: |X_i| + \sum_{b\in \mathcal{B}} v_i^b \cdot x_b = n/r$$

    \item[6] \label{thm:vi:enum:balanced_other}
    Similarly non-modulator parts:
    $$ \forall i \in [k+1, r]: \sum_{c \in [\vi], t} c \cdot y_{c, t, i} = n/r$$
\end{enumerate}

\begin{claim}\label{lem:running_time_vin_balanced}
  The updated $N$-fold ILP can be solved in time $2^{2^{\mathcal{O}(\vi^2)}} r \log(r)$.
\end{claim}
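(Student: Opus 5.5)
We mirror the proof of Claim~\ref{lem:running_time_vin}: we identify the block structure of the modified ILP, verify it is still an $N$-fold, bound the parameters $\rho$, $\sigma$, $\tau$, $a$ and $d$ in terms of $\vi$ (and $r$ only through $N$ and $d$), and plug them into Proposition~\ref{prop:n_fold_algo}.

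\textbf{Bricks.} For each non-modulator part $i\in[k+1,r]$ there is one brick $\{y_{c,t,i}: c\in[\vi], t\in[h]_0\}$. It has $\tau\le \vi(h+1)=\mathcal{O}(\vi^3)$ variables, since $h=\binom{\vi}{2}+\vi^2$. There is one further brick containing all pattern variables $x_b$. For a fixed type $T$ there are at most $h^k\cdot \vi^k\cdot h^{\vi(h+1)}$ balanced patterns, so $|\mathcal{B}|\le |\mathcal{T}|\cdot \vi^{\mathcal{O}(\vi^3)}=2^{\mathcal{O}(\vi^3\log \vi)}$, using $|\mathcal{T}|\le 2^{\mathcal{O}(\vi^2)}$. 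This brick may be wide, but it appears only in linking rows, and its width enters only through $d$.

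\textbf{Local versus linking rows.} For each $i\in[k+1,r]$, the edge constraint~\ref{thm:vi:enum:other_parts} (after the substitution $y_{t,i}\to\sum_c y_{c,t,i}$) and the size constraint~6 involve only the variables of brick $i$. These are the local rows, so $\sigma=2$; the inequality is turned into an equality with one bounded slack variable in the brick. All remaining rows are linking: constraint~\ref{thm:vi:enum:modulator_parts} and constraint~5 give $2k\le 2\vi$ rows, the type counts~\ref{thm:vi:enum:type_counts} give $|\mathcal{T}|$ rows, and the chunk-count rows~4' give $\vi(h+1)$ rows. Hence $\rho\le 2^{\mathcal{O}(\vi^2)}$. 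The constraint~4' rows have the $y$ variables with coefficient $1$ on the left and the $x_b$ variables on the right, so each brick $i$ contributes coefficient $1$ in these rows.

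\textbf{Coefficient bound.} Every entry is a vertex or edge count within $G[X\cup C]$, or the multiplicity of chunks $m^b_{c,t}\le \vi$. So every entry is $\mathcal{O}(\vi^2)$, and $a\le \rho\cdot 2\cdot \mathcal{O}(\vi^2)=2^{\mathcal{O}(\vi^2)}$. The right-hand sides $n/r$ and $n_T$ are not coefficients and do not affect $a$.

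\textbf{Running time.} Proposition~\ref{prop:n_fold_algo} gives time $a^{\rho^2\sigma+\rho\sigma^2}\cdot d\log d\cdot L$. Its exponent is $\rho^2\sigma+\rho\sigma^2=2^{\mathcal{O}(\vi^2)}$, so $a^{\rho^2\sigma+\rho\sigma^2}=2^{\mathcal{O}(\vi^2)\cdot 2^{\mathcal{O}(\vi^2)}}=2^{2^{\mathcal{O}(\vi^2)}}$. The dimension is $d\le |\mathcal{B}|+r\cdot\mathcal{O}(\vi^3)=r\cdot 2^{\mathcal{O}(\vi^3\log\vi)}$. For $L$ we use a feasibility instance with a zero objective, as in Claim~\ref{lem:running_time_vin}, so $L$ is a constant. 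Multiplying these factors gives $2^{2^{\mathcal{O}(\vi^2)}}\, r\log r$, because $d\log d$ is absorbed as $2^{\mathcal{O}(\vi^3\log \vi)}\, r\log r$.

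\textbf{Main obstacle.} The delicate step is ensuring that $\rho$ stays $2^{\mathcal{O}(\vi^2)}$ and does not grow with the refined chunk index $(c,t)$ or with $|\mathcal{B}|$. The pattern variables must therefore be kept as a single brick that only enters linking rows, rather than counted as rows. I would also check that the extra factor $\vi$ introduced by indexing chunks by $c$ only enlarges $\tau$ and $\rho$ polynomially. This is harmless inside the exponent $2^{\mathcal{O}(\vi^2)}$.
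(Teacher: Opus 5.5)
Your proposal is correct and follows the paper's argument. Both proofs split the rows into linking constraints (1, 3, 4$'$, 5, giving $\rho\in 2^{\mathcal{O}(\vi^2)}$) and per-part local bricks (constraints 2 and 6), then plug $\rho$, $\sigma$, $a$ and $d$ into Proposition~\ref{prop:n_fold_algo}; your bookkeeping is slightly tighter, taking $\sigma=2$ rather than the paper's $\sigma\in\mathcal{O}(\vi^2)$ and bounding $|\mathcal{B}|$ explicitly to get $d\le r\cdot 2^{\mathcal{O}(\vi^3\log\vi)}$, but this does not change the final bound.
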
   
\begin{claimproof}
    The constraints in \Cref{thm:vi:enum:modulator_parts,thm:vi:enum:type_counts} remain linking constraints.
    Their amount is still bounded with $2^{\mathcal{O}(\vi^2)}$ even after patterns were updated to balanced patterns as they are either created for types, or the first $k$ parts.
    The updated constraints in \Cref{thm:vi:enum:chunks_count_2} grow only to $\mathcal{O}(\vi^3)$.
    The constraints in \Cref{thm:vi:enum:balanced_modulator} are also linking constraints and their amount is bounded by $\vi$.

    Together constraints in \Cref{thm:vi:enum:other_parts,thm:vi:enum:balanced_other} are the local constraints,
    This time for each $i \in [k+1, r]$ all the local constraints with $y_{c,t,i}$ for any $c$ and $t$ form a brick. 
    There are exactly $2$ of such constraints in a single brick.
    Variables with different part index appear only in one brick. 

    Therefore the ILP program is still $N$-fold with the following parameters regarding \Cref{prop:n_fold_algo}.
    $\rho \in 2^{\mathcal{O}(\vi^2)}$,
    $\sigma \in \mathcal{O}(\vi^2)$,
    $a \in \rho\cdot \sigma \cdot \vi^2$
    and
    $d = |\mathcal{P}|+ r|\mathcal{T}| \in ({\mathcal{O}(\vi^2)}\cdot \vi^\mathcal{\vi^2})\cdot(r {\mathcal{O}(\vi^2)}) = r \vi^\mathcal{\vi^2}$.

    Therefore, the total running time of a single ILP is
    $2^{2^{\mathcal{O}(\vi^2)}} r \log(r)$. 
\end{claimproof}
Again the algorithm uses at most $\vi^\vi$ ILP instances and so the total running time is $2^{2^{\mathcal{O}(\vi^2)}} r \log(r) + n$.

\begin{claim}\label{clm:N-fold_ILP_vin_balanced}
       \textsc{BECGP} admits a solution if and only if at least one $N$-fold is feasible.
\end{claim}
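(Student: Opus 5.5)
The proof will mirror the proof of Claim~\ref{clm:N-fold_ILP_vin}. We prove both directions, paying attention to the extra balanced-pattern bookkeeping of vertex counts $v_i^b$ and of the chunk classes $m_{c,t}^b$.

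\emph{Forward direction.} Fix a feasible balanced partition $V_1,\dots,V_r$ with $|V_i|=n/r$ and $|E(G[V_i])|\ge\gamma$. Relabel the parts so that those meeting $X$ are $V_1,\dots,V_k$, and take the brute-force iteration where $X_i=V_i\cap X$. Every component $C$ of $G\setminus X$ is then split by the partition as follows:
\begin{itemize}
\item a piece $V_i\cap V(C)$ for each $i\le k$, with edge gain $e_i$ and vertex count $v_i=|V_i\cap V(C)|$;
\item for each $j>k$ with $V_j\cap V(C)\neq\emptyset$, a chunk with $c=|V_j\cap V(C)|\le\vi$ vertices and $t=|E(G[V_j\cap V(C)])|\le h$ edges.
\end{itemize}
This determines a balanced pattern $b$ extending the type of $C$. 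Set $x_b$ to be the number of components exhibiting $b$, and $y_{c,t,j}$ to be the number of $(c,t)$-chunks placed in $V_j$.

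We then verify the constraints:
\begin{itemize}
\item Constraint~1 holds because the edges of $G[V_i]$ for $i\le k$ split into the edges inside $X_i$ plus the per-component gains $e_i^b$. There are no edges between different components.
\item Constraint~2 holds with equality in edge count: since components are pairwise non-adjacent, $G[V_j]$ for $j>k$ is exactly the disjoint union of its chunks.
\item Constraints~3 and~4' hold by the definition of the counts.
\item Constraints~5 and~6 express $|V_i|=n/r$, because every vertex of $V_i$ lies either in $X_i$ or in exactly one piece or chunk of one component.
\end{itemize}

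\emph{Backward direction.} Take a feasible solution of some ILP. Place $X_i$ into $V_i$. By constraint~3, we can bijectively assign to the $n_T$ components of each type $T$ the multiset of balanced patterns $b$ extending $T$, $x_b$ times each. Because the isomorphism fixes $X$, each pattern can be realised on any component of that type. Split every component accordingly: its $i$-th pieces go to $V_i$ for $i\le k$, and its chunks go into a global pool. By constraint~4', the pool contains exactly $\sum_j y_{c,t,j}$ chunks of each class $(c,t)$, so we can hand out precisely $y_{c,t,j}$ of them to part $V_j$ for each $j>k$.

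The resulting sets partition $V$. Constraints~5 and~6 give every part exactly $n/r$ vertices. Constraints~1 and~2 give every part at least $\gamma$ edges, since the edge counts are additive over disjoint non-adjacent components. Extra adjacency only arises through $X_i$, and that is already accounted for in $e_i^b$.

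\emph{Main obstacle.} The delicate point is the correctness of the pattern encoding. We must check that $e_i^b$ depends only on the type and the partition of $V(C)$, not on the particular component, which follows from the isomorphism fixing $X$. We must also check that chunks of different components placed in the same non-modulator part contribute additively: they are non-adjacent, and they touch no modulator vertices of that part because $V_j\cap X=\emptyset$. Finally, the brute force must be over partitions of $X$ into at most $\min(\vi,r)$ labelled parts, so that all relevant $X$-partitions are covered. Parts $V_j$ with $j>k$ must be allowed to receive no chunks when that is consistent with constraint~6, and this requires $n/r>0$, which is trivial.
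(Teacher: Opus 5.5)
Your proposal is correct and follows essentially the same route as the paper. The paper likewise reads off $x_b$ and $y_{c,t,i}$ from a balanced solution, reuses the \textsc{ECGP} reconstruction for the backward direction, and lets constraints~5 and~6 force every part to have exactly $n/r$ vertices; you simply carry out the constraint checks that the paper delegates to Claim~\ref{clm:N-fold_ILP_vin}.

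One small imprecision concerns the backward direction. Two chunks of the same component may be handed to the same non-modulator part and be adjacent there. So edge counts are only superadditive rather than additive, which is still enough for the $\ge \gamma$ requirement.
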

\begin{claimproof}
     \emph{Forward direction.}
    There is a partition $V = V_1 \cup V_2 \dots \cup V_r$ such that $|E(G[V_i])| \ge \gamma$ and $|V_i| = \frac{n}{r}$ for all $i \in [r]$.
    We already know $X_i := X \cap V_i$.
    We set $x_b$ to be the exact number of components in $G \setminus X$ that exhibit balanced pattern $b$.
    For each anonymous chunk in $b$, if the number of vertices is exactly $c$ and edge-count is exactly $t$, it is counted in $y_{c,t,i}$.
    
    Now we can check that the constraints are satisfied. 
    The constraints \Cref{thm:vi:enum:modulator_parts,thm:vi:enum:other_parts,thm:vi:enum:chunks_count_2,thm:vi:enum:type_counts} are satisfied with the same argument from \Cref{clm:N-fold_ILP_vin}.

    The constraints \Cref{thm:vi:enum:balanced_modulator,thm:vi:enum:balanced_other} are satisfied as each part $V_i$ has size $\frac{n}{r}$ and the variables reflect the number of types and chunks with exactly that many non modulator vertices with together with the modulator gives us is exactly  $\frac{n}{r}$.

\emph{Backward direction.} 
    Assume one of the ILPs yields a feasible non-negative integer solution $(x_b, y_{c,t,i})$. 
    The same procedure as in \Cref{clm:N-fold_ILP_vin} gives us a \textsc{ECGP} solution $V_1, V_2, \dots V_r$ as \Cref{thm:vi:enum:chunks_count} is tightened to  \Cref{thm:vi:enum:chunks_count_2} and more constraints are added but not removed.
    We only have to argue that each $V_i$ has size of exactly $\frac{n}{r}$.
    Each part was reconstructed according to assigned patterns and brute-forced modulator.
    The number of vertices in each part is captured with $y_{c,t,i}$ and forced to be $\frac{n}{r}$ by \Cref{thm:vi:enum:balanced_modulator,thm:vi:enum:balanced_other}.
\end{claimproof}

This finishes the FPT algorithms when parameterized by vertex integrity.

\begin{theorem} 
\label{thm:uccp-cvdn_plus_u}
    Both \textsc{ECGP} and \textsc{BECGP} are FPT parameterized by $\cvd+\gamma$.
\end{theorem}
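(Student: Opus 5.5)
Throughout, let $X$ be a cluster vertex deletion set with $|X|=k$, so that $G-X$ is a disjoint union of cliques $K^1,\dots,K^q$. Let $s:=\min\{\ell:\binom{\ell}{2}\ge\gamma\}=\mathcal{O}(\sqrt{\gamma})$. The plan is to shrink every part to a small \emph{witness} that already carries $\gamma$ edges, to guess how the witnesses meet $X$, and to settle everything else with an ILP whose number of variables is bounded by a function of $k+\gamma$. We then apply \Cref{pro:pilp}. The justification for shrinking is the monotonicity used throughout the paper: adding vertices to a part never destroys induced edges. Hence, for \textsc{ECGP}, a solution exists iff there are $r$ pairwise disjoint sets $W_1,\dots,W_r$ with $|E(G[W_i])|\ge\gamma$ and $|W_i|\le 2\gamma$. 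We may further assume each $W_i$ is inclusion-minimal.

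\emph{Step 1: parts meeting $X$.} At most $k$ witnesses meet $X$. We guess an ordered partition $A_1,\dots,A_p$ of the part of $X$ used by these witnesses. For each $A_j$ we guess the \emph{shape} of $W_j\setminus X$, which consists of:
\begin{itemize}
\item at most $2\gamma$ vertices;
\item for each vertex, its neighbourhood type $T\subseteq X$;
\item a grouping of the vertices recording which of them lie in the same clique of $G-X$, together with the grouping across different $j$.
\end{itemize}
The number of shapes is bounded by a function of $k$ and $\gamma$, and each shape determines $|E(G[W_j])|$ exactly, so we keep only shapes giving at least $\gamma$ edges. Realising a shape amounts to choosing distinct cliques and, inside each chosen clique, distinct vertices of prescribed types. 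This can be encoded by counting cliques according to their \emph{profile}, i.e.\ the vector of type multiplicities inside the clique capped at $2\gamma k$. There are boundedly many profiles, and one ILP variable per (profile, usage pattern) pair records how many cliques of that profile realise each pattern.

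\emph{Step 2: parts disjoint from $X$.} Here a part is a union of pieces of distinct cliques, and a piece of size $a$ contributes $\binom{a}{2}$ edges. In a minimal witness every piece has size at most $s$. A \emph{configuration} is a multiset of piece sizes in $[2,s]$ whose edge sum is at least $\gamma$ but becomes less than $\gamma$ when any piece is removed; there are $f(\gamma)$ configurations. Introduce one variable $z_c$ for each configuration $c$, counting the parts that use it, and require $p+\sum_c z_c\ge r$ (surplus parts can be merged). The pieces must then be supplied by the cliques left after Step 1. For each clique profile and each way of cutting the remaining clique into pieces of sizes in $[2,s]$, we need a variable.

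\emph{Main obstacle: bounding the cutting patterns.} Clique sizes are unbounded, so the number of cutting patterns must be bounded. I would first prove an exchange lemma: in some optimal solution, every clique is cut into pieces of size exactly $s$ (each such piece alone forms a part), plus at most $f(\gamma)$ pieces of size less than $s$. The argument is that two small pieces from the same clique can be merged, or rebalanced, without decreasing the number of parts that can be formed. Given this, a clique of size $c$ is described by $c$ capped at $M=g(k,\gamma)$ together with its residue modulo $s$, plus its profile. Large cliques contribute $\lfloor\cdot/s\rfloor$ standalone parts, which is a linear term in the ILP. This yields boundedly many variables, and Lenstra's algorithm finishes the proof.

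\emph{Balanced variant.} For \textsc{BECGP} we add the part size $B=n/r$. If $B\ge 2\gamma+\mathcal{O}(1)$, the witness argument together with padding by unused vertices works exactly as in the proof of \Cref{thm:becgp-kernel}, so only the witness count matters. If $B<2\gamma+\mathcal{O}(1)$, every part has bounded size, so parts themselves serve as witnesses. In that case the configurations become multisets of piece sizes summing to exactly $B$, with no minimality requirement, and we demand that all vertices are covered with equality, again with boundedly many variables.
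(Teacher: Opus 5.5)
The step that fails is the exchange lemma you identify as the main obstacle. The lemma is false, so there is no bounded family of cutting patterns to feed into Lenstra.

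\textbf{Counterexample.} Take $\gamma=7$, so $s=5$ because $\binom{4}{2}=6<7\le\binom{5}{2}$. Let $X=\emptyset$ (so $\cvd=0$ and $\gamma$ is a constant). Let $G$ be a clique $K$ on $4M$ vertices together with $M$ disjoint edges, and let $r=M$.
\begin{enumerate}
\item Give each part four vertices of $K$ and one of the edges. This yields $6+1=7$ edges per part, so the instance is a YES-instance.
\item Conversely, suppose a part contains $b$ vertices of $K$ and $j$ complete edges. Then $\binom{b}{2}+j\ge 7$ implies $b+j\ge 5$, with equality only for $(b,j)\in\{(5,0),(4,1)\}$.
\item Summing over the $M$ parts gives $5M\le\sum(b+j)\le 4M+M$. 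Hence equality holds in every part, and $\sum j=M$ forces $(b,j)=(4,1)$ everywhere.
\end{enumerate}
Thus \emph{every} solution cuts $K$ into $M$ pieces of size $4<s$ and takes no standalone $s$-piece from it. This contradicts both your bound of $f(\gamma)$ small pieces per clique and your claim that large cliques contribute $\lfloor\cdot/s\rfloor$ standalone parts. Your informal justification also breaks on this instance. Merging two of the $4$-pieces leaves some part with a single edge. Rebalancing them to sizes $5$ and $3$ leaves a part with $\binom{3}{2}+1=4<\gamma$ edges.

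\textbf{What is missing.} A clique of unbounded size may have to be cut into unboundedly many pieces of several sizes below $s$. Cliques of pairwise different sizes cannot then be aggregated into boundedly many Lenstra variables without a genuinely new idea. The balanced case with $B<2\gamma$, where cliques must be covered exactly, is even more constrained.

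\textbf{How the paper avoids this.} It uses $N$-fold integer programming instead of Lenstra, so no structural lemma about cuttings is needed.
\begin{itemize}
\item After guessing the partition of $X$, it works with chunk types: multisets of at most $\gamma+1$ neighbourhood types in $X$. There are $2^{\mathcal{O}(\gamma\cdot\cvd)}$ of them.
\item Every cluster $K$ is a separate brick, with capacity constraints $\sum_T T(t)\,x_{T,K}\le|V_{t,K}|$ for each type $t$.
\item Every part $i$ is a brick carrying its edge-count constraint on the variables $y_{T,i}$, and for \textsc{BECGP} also its size constraint.
\item The two sides are linked only through the constraints $\sum_K x_{T,K}=\sum_i y_{T,i}$, one per chunk type.
\end{itemize}
Since $N$-fold allows an unbounded number of bricks, arbitrary per-cluster cuttings are permitted.

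Your witness reduction, your handling of the parts meeting $X$, and your reduction of \textsc{BECGP} with $B\ge 2\gamma$ to the unbalanced case are reasonable. What has to be replaced is the treatment of the clique vertices outside the special parts.
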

\begin{proof}
    We will use $N$-fold to design a FPT algorithm.
We denote the modulator as $X$ and the set of all clusters as $\mathcal{K}$.

First, we notice that if a part has more than $\gamma+1$ vertices in a single clique, those vertices can be moved to other parts, while satisfying the edge count constraint.
We define \emph{vertex type} of a vertex $v$ to be the $t_v = N(v) \cap X$.
Further we call a set of vertices in the same clique $A$ a \emph{chunk}.
The \emph{chunk type} $T$ of a chunk $A$ inside a clique is a multiset of vertex types of the vertices in the chunk.
The count of vertex types $t$ in $T$ is denoted as $T(t)$.
The set of all chunk types is $\mathcal{T}$.
We consider chunks of size at most $\gamma+1$, in that case the number of chunk types is bounded by $(2^\cvd)^{\gamma+1} \in 2^{\mathcal{O}(\gamma\cdot\cvd)}$. 

We use the $N$-fold ILP to both distribute the vertices into the chunk types and then form the parts from the chunk types. 
Again we guess the partition of the modulator and then run many $N$-fold ILPs and if any of those return YES then we have a YES-instance and NO-instance otherwise.
Assigning the modulator only parts $1$ to $\cvd$ ensures there are at most $\cvd^\cvd$ guesses.
The set of vertices from the modulator that belong to the $i$-th part is denoted as $X_i$.

Now follows the ILP formulation.

\noindent
\textbf{Variables}
\begin{itemize}
    \item $\forall T \in \mathcal{T}, K \in \mathcal{K}: x_{T, K} \in \mathbb{N}$ -- the number of chunks of chunk type $T$ in the cluster $K$.
    \item $\forall T \in \mathcal{T}, \forall i \in [r]: y_{T,i} \in \mathbb{N}$ -- the number of chunks of chunk type $T$ in the part $i$.
\end{itemize}

\noindent
\textbf{Constraints}
\begin{enumerate}
    \item \label{thm:cvd_u:enum:types_across_vertices} For each vertex type and cluster, there is at least the number of vertices of this type as there are demanded by the chunk types.
    $$\forall t \subseteq X, \forall K \in \mathcal{K}: \sum_{T \in \mathcal{T}} T(t)\cdot x_{T,K} \le |V_{t,K}|$$
    where $T(t)$ is the number occurrences of the vertex type $t$ inside the chunk type $T$ and $V_{t,K}$ is the set of vertices inside the cluster $K$ and vertex type $t$.

    \item \label{thm:cvd_u:enum:types_across_cliques_and_parts} For any chunk type, the number of chunks with that type has to be the same if we count them from the point of view of cliques and parts.
    $$\forall T \in \mathcal{T}: \sum_{K \in \mathcal{K}} x_{T,K} = \sum_{i \in [r]} y_{T,i}$$

    \item \label{thm:cvd_u:enum:edge-count} For each part $i \in [r]$, the edge-count has to be at least $\gamma$.
    $$\forall i \in [r]: \sum_{T \in \mathcal{T}} e(T, X_i) \cdot y_{T,i} + |E(G[X_i])| \ge \gamma$$
    where $e(T, X_i):= \binom{|T|}{2} + \sum_{t\subseteq X} T(t)\cdot |t\cap X_i|$ is the number of edges induced by a chunk of chunk type $T$ plus the number of edges between the chunk and $X_i$.

\end{enumerate}

\emph{The running time.} 
The constraints \Cref{thm:cvd_u:enum:types_across_cliques_and_parts} we classify as linking constraints. 
There are at most $2^{\mathcal{O}(\gamma\cdot\cvd)}$ many chunk types, so there are at most that many linking constraints.
The rest of the constraints are local constraints.
In the case of \Cref{thm:cvd_u:enum:types_across_vertices} constraints for the same cluster $K$ are grouped together to form a brick.
There are at most $2^\cvd$ of such constraints in each brick, as that is the number of different vertex types.
The constraints in \Cref{thm:cvd_u:enum:edge-count} each form a brick on its own.
No two bricks share variables as they are grouped by $K$ in $x_{T,K}$ and $i$ in $y_{T,i}$.
The largest coefficient that appears in a constraint is either $T(t)$ or $e(T, X_i)$, which are bounded by $\gamma$ and $\gamma^2 + \gamma \cdot \cvd$.

Substituting into \Cref{prop:n_fold_algo} we get the following runtime of a single ILP.
$\rho \in 2^{\mathcal{O}(\cvd \cdot \gamma)}$,
$\sigma= 2^\cvd$, 
$a \in \rho\cdot \sigma\cdot \gamma (\gamma + \cvd)$, and
$d = |\mathcal{T}| \cdot |\mathcal{K}| + |\mathcal{T}|\cdot r \in 2^{\mathcal{O}(\gamma\cdot\cvd)}n$.
Therefore, the total running time is ${(\gamma^2 + 2^\cvd)}^{2^{\mathcal{O}(\cvd \cdot \gamma)}} \cdot n \cdot \log(n)$.

\begin{claim}\label{clm:N-fold_ILP_cvd_plus_u}
       \textsc{ECGP} admits a solution if and only if at least one $N$-fold is feasible.
\end{claim}
\begin{claimproof}
     \emph{Forward direction.}
    Assume there exists a valid partition $V = V_1 \cup V_2 \dots \cup V_r$ such that $|E(G[V_i])| \ge \gamma$ for all $i \in [r]$.
    % The partition fixes $X_1, X_2, \dots X_r$.
    For each part $V_i$ and clique $K$ we let $A := V_i \cap K$ be a chunk. 
    If $A$ has more than $\gamma+1$ vertices, we remove any vertices until it has $\gamma+1$ vertices.
    The chunk $A$ has chunk type $T_A$.
    We set $x_{T_A,K}$ and $y_{T_A, i}$ as a count of all such $T_A$ we have created in this way.
    
    Now we can check that the constraints are satisfied. 
    
    Each chunk $A$ was counted once in $x_{T_A,K}$ and once in $y_{T_A,i}$, therefore \Cref{thm:cvd_u:enum:types_across_cliques_and_parts} is satisfied.
    The union of chunks associated with the $i$-th part is a subset of $V_i$ as each chunk contains at most $\gamma+1$ vertices from $V_i \cap K$.
    Every vertex is present only in one part and therefore in \Cref{thm:cvd_u:enum:types_across_vertices} each vertex was accounted at most once, so it holds.  
    The number of edges $|E(G[V_i])|$ is at least $\gamma$. 
    The chunks corresponding to $V_i$ may have fewer edges in total than $|E(G[V_i])|$, but only in case that there are more than $\gamma+1$ vertices in $V_i$ from the same clique.
    With $\gamma+1$ vertices in a clique, there are $\binom{\gamma+1}{2} \ge \gamma$ edges for $\gamma \ge 0$ and so \Cref{thm:cvd_u:enum:edge-count} is satisfied.
    
    \emph{Backward direction.}
    The algorithm brute-forces all partitions of $X$. 
    Thus, one iteration will exactly match the intersection of the true partition with $X$.
    
    We go through each type $T \in \mathcal{T}$.
    As \Cref{thm:cvd_u:enum:types_across_cliques_and_parts} holds we can pair each occurrence of type $T$ in $V_i$ with an occurrence in a clique $K$. 
    Then we assign a set of unassigned vertices with types $T$ in $K$ to $V_i$.
    There are enough vertices as \Cref{thm:cvd_u:enum:types_across_vertices} holds.
    Some of the vertices were unassigned as \Cref{thm:cvd_u:enum:types_across_vertices} only forces inequality. 
    We put all of them into $V_1$.
    This cannot decrease the edge-count in $G[V_i]$.
    
    Now $V_1, V_2, \dots, V_r$ is a partition. 
    Each part has at least $\gamma$ edges as the number of edges is determined just by the chunk type and their sum is forced in \Cref{thm:cvd_u:enum:edge-count}.
\end{claimproof}

To establish FPT for \textsc{BECGP} with respect to $\cvd + \gamma$ we just add the following constraints.

\begin{enumerate}
    \item[4] \label{thm:cvd_u:vertices_bound}
    Each part has at most $\frac{n}{r}$ many vertices:
    $$\forall i \in [r]: \sum_{T \in \mathcal{T}} |T| \cdot y_{T,i} + |X_i| \le \frac{n}{r}$$
\end{enumerate}

The constraints in \Cref{thm:cvd_u:vertices_bound} are all local constraints. 
Each of them is added to a different brick of the part $i$, where there is one constraint with exactly the same variables. 
The coefficients $|T|$ are at most $\gamma+1$ which is less than previously stated upper bound.
Therefore each updated $N$-fold has the same asymptotic runtime as before.

To argue the correctness the constraint in \Cref{thm:cvd_u:vertices_bound} limits that there are at most $\frac{n}{r}$ vertices from the chunks and the modulator.
This condition will be satisfied for any solution as vertices not in a chunk are not counted.
For the other direction it holds $\sum_{T, i} y_{T, i} = \sum_{T, K} x_{T, K}$.
Therefore, it is enough to assign each unassigned vertex such that there are exactly $\frac{n}{r}$ vertices in each part.

This concludes the FPT algorithms when parameterized by the cluster vertex deletion number and the edge count combined.

\end{proof}

\subsection{FPT via Vertex Integrity}

\begin{theorem}
\label{thm:paths-u}
\textsc{ECGP} is fixed-parameter tractable on graphs that are disjoint unions of paths when parameterized by $\gamma$.
\end{theorem}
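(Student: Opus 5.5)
Since the graph $G$ is a disjoint union of paths, every part $V_i$ induces a linear forest: it consists of vertex-disjoint subpaths, and a subpath on $\ell$ vertices contributes $\ell-1$ edges. I would first reduce to a bounded vertex-integrity instance and then invoke \Cref{thm:uccp-vin}, as the subsection title suggests. The key observation is a normalization. There is an optimal solution in which every maximal subpath (\emph{segment}) of a part in a single path has at most $\gamma+1$ vertices, except for ``filler'' vertices whose edges are not needed.

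\textbf{Step 1: cutting long paths.} Take any solution and a path $P$ of $G$. I would show that cutting $P$ into consecutive pieces of $\gamma+2$ vertices (with a remainder piece) and deleting the edge between consecutive pieces gives an equivalent instance, provided the deleted edges are handled carefully. Concretely, consider a segment with more than $\gamma+1$ vertices inside a part. Its first $\gamma+1$ vertices already give $\gamma$ edges, so that part is satisfied by this single segment. The remaining vertices can then be reassigned freely. The claim to establish is the following. In a graph that is a union of paths, a solution exists if and only if some solution uses only segments of length at most $\gamma+1$, together with ``junk'' vertices whose incident induced edges are irrelevant. Hence deleting every $(\gamma+2)$-th edge of a path should not destroy feasibility, if the cut positions are chosen so that any window of $\gamma+1$ consecutive vertices can be shifted into a piece.

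\textbf{Step 2: the exchange argument.} The easy direction is that deleting edges never creates solutions. For the hard direction, I would rebuild a solution piece by piece. Along each path, take the segments of length at most $\gamma+1$ in order from left to right and greedily slide each one leftward so that no segment straddles a cut point. Segments that already lie within a piece stay, and a straddling segment is shifted within the slack provided by the piece. This step is the main obstacle. Shifting a segment can collide with a neighbouring segment of a different part, so the cleanest route may be to choose the piece length larger, say $2\gamma+2$, and argue via an interval-packing count: inside each path, the number of disjoint segments of each length that can be packed depends only on the path length.

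\textbf{Step 3: compressing identical components.} If Step 1 is too delicate, I would use an alternative that avoids it. By the interval-packing view, a path of length $L$ serves purely as a supply of disjoint subpaths with prescribed lengths. The parts then need a multiset of segment lengths in $[\gamma+1]$, with each part's edge total, $\sum(\ell-1)$, at least $\gamma$. So each part's demand is one of boundedly many (in $\gamma$) multisets of segment lengths with total at most $2\gamma$ vertices. A path of length $L$ can host any multiset of segments with total size at most $L$. Feasibility then becomes an ILP with variables $z_D$, the number of parts of demand type $D$, with $\sum_D z_D=r$. Each path supplies its vertex total, and since segments placed in a path just need total length at most $L$, segments of any lengths fit into any path with enough vertices. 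This turns into a bin-packing-like constraint with bounded item sizes. I would handle it by assigning variables $w_{P\text{-type},\text{config}}$, where a path type is its length truncated at $2\gamma(\gamma+1)$ and a config is a multiset of segment lengths summing to at most the truncated length. Long paths of that type can be handled by a single aggregated count. This yields an ILP whose number of variables is bounded by a function of $\gamma$, solvable by \Cref{pro:pilp}, and this establishes the fixed-parameter tractability claimed in the theorem.
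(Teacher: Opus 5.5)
Your proposal has a genuine gap. It sits in the step you flagged as the main obstacle, and neither of your routes gets around it.

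Step~1 is not just delicate; as stated it is false. Take $\gamma=2$, $r=3$, and $G=P_8\cup P_2$ with path $v_1\cdots v_8$ and edge $ab$. The partition $\{v_1,v_2,v_3\},\{v_4,v_5,v_6\},\{v_7,v_8,a,b\}$ is feasible. Cutting into pieces of $\gamma+2=4$ vertices deletes $v_4v_5$, and the resulting graph $2P_4\cup P_2$ has $7$ edges. A short check shows it has no partition into three parts with two edges each. Your interval-packing remark does not rescue this: the number of segments of each \emph{single} length a path can host depends only on its length, but what matters is which \emph{joint} multisets of segment lengths fit, and cutting changes that.

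Since you leave Step~2 open, everything rests on Step~3, and there the decisive step is only asserted. Restricting configurations to total size at most $2\gamma(\gamma+1)$ presupposes that, in some optimal solution, every long path hosts only a bounded amount of segments other than stand-alone $(\gamma+1)$-vertex parts. But a long path donating many $P_2$'s to many parts, each completed by a $P_\gamma$ elsewhere, is a legitimate solution shape. Normalizing it away is exactly the exchange argument missing from Step~2. Moreover, the ``single aggregated count'' cannot pool the leftover vertices of long paths, because segments cannot straddle two paths. The number of $(\gamma+1)$-blocks a path yields after hosting a configuration also depends on its length modulo $\gamma+1$, which your truncated type does not record.

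The missing ingredient is the paper's reduction rule. If a path component $v_1\cdots v_t$ has $t\ge\gamma+1$, delete $v_1,\dots,v_{\gamma+1}$ (which alone form a part with exactly $\gamma$ edges) and decrease $r$ by one. This is justified by an exchange argument on a feasible partition minimizing the number of parts meeting these $\gamma+1$ vertices. After exhaustive application every component has at most $\gamma$ vertices, so $\vi\le\gamma$, and Theorem~\ref{thm:uccp-vin} finishes the proof; no new ILP is needed. Your fallback of cutting into pieces of $2\gamma+2$ vertices would also be sound, but only as a consequence of this lemma.

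If you want to keep your own ILP and avoid any exchange lemma, you need an exact aggregation for long paths. One way:
\begin{itemize}
\item Let $\Lambda:=\mathrm{lcm}(2,\dots,\gamma+1)$.
\item Write each path's segment multiset as a core (fewer than $\Lambda/\ell$ segments of each length $\ell$) plus homogeneous groups of total size exactly $\Lambda$.
\item Classify paths of length $L\ge\gamma\Lambda$ by $L$ modulo $\Lambda$.
\end{itemize}
A path's group capacity is then $\lfloor L/\Lambda\rfloor$ plus a constant depending only on its residue and core. This is linear in the counting variables, so it aggregates exactly. Without some such argument, your proposal does not yet prove the theorem.
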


\begin{proof}
Let $(G,r,\gamma)$ be an instance of \textsc{ECGP}, where connected components of $G$ are paths.

\begin{uccp}\label{paths+u RR}
   If a connected component $P=v_1,\dots,v_t$ has $t \ge \gamma+1$, then delete the vertices $v_1,\dots,v_{\gamma+1}$ and decrease $r$ by $1$. 
\end{uccp}

\begin{lemma}\label{lem:paths+u RR}
    Reduction Rule \ref{paths+u RR} is correct.
\end{lemma}
\begin{proof}
   Let $W:=\{v_1,\dots,v_{\gamma+1}\}$. Since $G[W]$ is a path on $\gamma+1$ vertices, it induces exactly $\gamma$ edges.

\smallskip
\noindent
\emph{Forward direction.}
Suppose $(G,r,\gamma)$ is a YES-instance. Let $P=v_1,\dots,v_t$ be a path component and let $W=\{v_1,\dots,v_{\gamma+1}\}$.
Among all feasible partitions of $V(G)$ into $r$ parts, fix one that minimizes the number of parts that contain vertices of $W$.
If all vertices of $W$ lie in a single part, we are done.
Otherwise, there exist indices $1 \le a \le b < c \le d \le \gamma+1$ such that:
\begin{itemize}
    \item $\{v_a,\dots,v_b\} \subseteq C_1$,
    \item $\{v_c,\dots,v_d\} \subseteq C_2$,
\end{itemize}
for two distinct parts $C_1 \neq C_2$, and $b+1=c$ (i.e., these two sets are consecutive along the path).

\medskip
\noindent
\emph{Step 1: Merge along the path.}
Move all vertices $v_c,\dots,v_d$ from $C_2$ to $C_1$.
Since $b+1=c$, the edge $v_bv_c$ becomes internal to $C_1$. Hence the number of edges induced by $C_1$ increases by $1$.

\medskip
\noindent
\emph{Step 2: Reconstruct parts.}
Let $S$ be the set of vertices originally in $C_1 \cup C_2$.
We now repartition $S$ into two parts $C_1'$ and $C_2'$ as follows.
Initialize $C_1' := \{v_a,\dots,v_d\}$. Let $R := S \setminus C_1'$.

We add vertices from $R$ to $C_1'$ in the following manner. Since $G$ is a disjoint union of paths, $R$ induces a collection of vertex-disjoint paths. Consider any ordering of these path components, say $Q_1,Q_2,\dots,Q_{l}$.
Iteratively add the vertices of $Q_1,Q_2,\dots,Q_{l}$ to $C_1'$ until the induced subgraph $G[C_1']$ has at least $\gamma$ edges. Let $Q_i$ be the first path whose addition makes the number of edges at least $\gamma$.

Since each $Q_i$ is a path, there exists a prefix $Q_i'$ of $Q_i$ such that adding $Q_i'$ to $C_1'$ results in exactly $u$ edges.
Set
\[
C_1' := C_1' \cup Q_i', \qquad
C_2' := (R \setminus Q_i') \cup (Q_i \setminus Q_i').
\]
\noindent Since $C_1$ and $C_2$ are feasible parts, we have
$|E(G[C_1])| \ge \gamma$ and $|E(G[C_2])| \ge \gamma$,
and hence
\[
|E(G[C_1])| + |E(G[C_2])| \ge 2\gamma.
\]

When we move the vertices $v_c,\dots,v_d$ from $C_2$ to $C_1$, the edge $v_bv_c$ (with $b+1=c$) becomes internal to $C_1'$. Thus the total number of internal edges increases by $1$.

In the reconstruction step, we split at most one path $Q_i$ into two consecutive parts $Q_i'$ and $Q_i \setminus Q_i'$. Since $Q_i$ is a path, there is exactly one edge between these two parts, and this edge is not counted in either part. Hence at most one edge is lost.
Therefore, the total number of internal edges in $C_1'$ and $C_2'$ satisfies
\begin{align*}
|E(G[C_1'])| + |E(G[C_2'])|
&\ge |E(G[C_1])| + |E(G[C_2])| + 1 - 1 \\
&\ge 2\gamma.
\end{align*}
By construction, $C_1'$ induces exactly $\gamma$ edges. Consequently,
\[
|E(G[C_2'])| \;\ge\; 2\gamma - \gamma \;=\; \gamma.
\]
Thus both $C_1'$ and $C_2'$ induce at least $\gamma$ edges, and the modified partition is feasible.
Thus we obtain a valid partition into $r$ parts in which fewer parts intersect $W$, contradicting minimality.
Therefore, in some feasible solution, all vertices of $W$ lie in a single part.

\smallskip
\noindent
\emph{Reverse direction.}
If the reduced instance $(G',r-1,\gamma)$ is a YES-instance, then we can add $W$ as one part. This part induces exactly $\gamma$ edges, and all other parts remain valid. Hence $(G,r,\gamma)$ is a YES-instance.
\end{proof}

Apply this rule exhaustively. Let $(G',r',\gamma)$ be the resulting instance.
In $G'$, every connected component has at most $\gamma$ vertices. Otherwise, the reduction rule would still apply. Since $G'$ is a disjoint union of paths, this implies
$\mathsf{vi}(G') \le \gamma$.
Due to Theorem~\ref{thm:uccp-vin}, we know that \textsc{ECGP} is FPT when parameterized by $\mathsf{vi}$. Applying this algorithm to $(G',r',\gamma)$ yields an FPT algorithm.
\end{proof}

\begin{theorem} \label{thm:vdp-plus-u} 
\textsc{ECGP} is fixed-parameter tractable when parameterized by $\vdp+ \gamma$.  
\end{theorem}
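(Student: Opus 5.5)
Let $X$ be a vertex deletion set of size $k:=\vdp$ such that $G-X$ is a disjoint union of paths; $X$ can be computed in FPT time, since deletion to paths (maximum degree two and acyclic) is a bounded-search-tree problem. The plan is the same as in Theorem~\ref{thm:paths-u}: cut long paths of $G-X$ down to bounded length by reduction rules, so that $X$ together with the short components witnesses vertex integrity bounded by $f(k,\gamma)$. We then finish with Theorem~\ref{thm:uccp-vin}.

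\emph{Step 1: marking the path vertices near $X$.} In a path component $P$ of $G-X$, call a vertex an \emph{attachment} if it has a neighbour in $X$. There may be unboundedly many attachments, since $X$ can have large degree. I would first handle this with the matching/expansion idea from the kernel (Theorem~\ref{thm:kernel_ru}) applied to the bipartite graph between $X$ and the attachments. If some $x\in X$ has at least $\gamma$ private attachment neighbours available and we are allowed to reserve it, we can form the part $\{x\}\cup U_x$, which induces at least $\gamma$ edges. Rather than an exact expansion rule, I would use a guessing phase. We guess the partition of $X$ into at most $k$ groups that lie in common parts, which gives $k^k$ choices. Then, for each group, we guess a bounded ``witness'' subgraph of at most $2\gamma$ vertices realising its $\gamma$ edges. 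Only $O(k\gamma)$ witness path vertices are relevant, and these vertices are where the interaction with $X$ matters.

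\emph{Step 2: after the guess, the modulator parts are already satisfied by their witnesses.} Any vertex outside the witnesses can be dumped into a modulator part without harm, exactly as in the forward direction of Reduction Rule~\ref{u-expansion}. We delete $X$ and the witness vertices and reduce $r$ by the number of groups. The remaining parts lie entirely inside $G-(X\cup W)$, which is a disjoint union of paths. One caveat: a non-modulator part in the original solution might have used edges to $X$. But then that part would contain a vertex of $X$, a contradiction. So the remaining parts use only path edges. The remaining question is therefore whether a disjoint union of paths, after surplus vertices are discarded, admits $r'$ parts of at least $\gamma$ edges each. Leftover vertices can go into any modulator part, or into any part if no group exists. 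Theorem~\ref{thm:paths-u} answers this question, or one can simply count $\sum_P\lfloor |P|/(\gamma+1)\rfloor\ge r'$.

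\emph{Main obstacle.} The main obstacle is justifying Step 1's guess. The witness vertices are not bounded in number of candidates: each $x$ may have $n$ neighbours. So we cannot literally guess them, and we also need to know that removing them does not destroy path segments that are needed for other parts. I would first try an exchange argument. For each group $X_i$, a witness uses at most $2\gamma$ path vertices, and only their \emph{types} matter, namely the neighbourhood in $X$ together with the local path structure within distance $\gamma$. We choose witnesses greedily among vertices of the same type, preferring vertices on short paths or near path ends. We then argue, in the style of Reduction Rule~\ref{paths+u RR}, that swapping vertices of equal type between a modulator part and a path part loses at most one path edge per cut. That loss is compensated because a path part can be re-cut to exactly $\gamma$ edges. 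If this exchange works, each group has at most $2^{k}$-type-many choices of size $2\gamma$, giving a bound of $(2^k\gamma)^{O(k\gamma)}$ guesses. Otherwise, I would fall back to bounding path lengths by Reduction Rule~\ref{paths+u RR} applied to stretches of more than $\gamma+1$ consecutive non-attachment vertices. I would then bound the number of attachments per $x$ by roughly $r\gamma$ using the matching argument, and conclude with bounded vertex integrity via Theorem~\ref{thm:uccp-vin}.
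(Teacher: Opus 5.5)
Your proposal has a genuine gap, and it is the one you flag yourself as the main obstacle: Step~1 is not an FPT step. The witness vertices of the modulator parts range over $n^{O(k\gamma)}$ choices. The compression you suggest does not work, because vertices of the same ``type'' (same neighbourhood in $X$, same path structure within distance $\gamma$) are not interchangeable. What matters for the rest of the instance is how deleting the witnesses cuts the paths of $G-X$. That depends on the lengths of the resulting segments modulo $\gamma+1$, which is global information along the path.

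For instance, take $\gamma=2$ and a path $v_1,\dots,v_{10}$. Deleting $v_4$ leaves segments on $3$ and $6$ vertices, i.e.\ three complete blocks of $\gamma+1$ vertices. Deleting $v_5$ leaves segments on $4$ and $5$ vertices, i.e.\ only two blocks plus leftovers $P_1,P_2$. Yet both vertices look identical within distance $\gamma$ and can have the same neighbours in $X$. So the exchange ``swap equal-type vertices, lose at most one edge per cut'' cannot be justified as stated, and the greedy choice remains unproven.

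Your fallback does not close the gap either. Bounding the number of attachments per $x$ by roughly $r\gamma$ gives nothing, since $r$ is not part of the parameter $\vdp+\gamma$. Moreover, Reduction Rule~\ref{paths+u RR} is proved only when the whole graph is a disjoint union of paths: the re-cutting step needs $R$ to induce paths. Applying it to interior stretches of paths attached to $X$ therefore needs a new argument. There is also a smaller error in Step~2. The count $\sum_P\lfloor |P|/(\gamma+1)\rfloor\ge r'$ is not equivalent to feasibility, because short segments can be combined: four disjoint edges with $\gamma=2$ yield two valid parts, but the count is $0$. Only your appeal to Theorem~\ref{thm:paths-u} is correct there.

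The missing idea, which the paper uses, is never to fix the positions of the at most $2k\gamma$ certificate vertices. Instead, their effect on each path $P$ is summarised by a bounded \emph{signature}, consisting of:
\begin{itemize}
\item the number of certificate vertices given to each special part;
\item their edge contribution, truncated at $\gamma$;
\item the change in the number of complete $(\gamma+1)$-blocks relative to the baseline $\lfloor |P|/(\gamma+1)\rfloor$;
\item the change in the number of leftover segments of each length $\ell\le\gamma$.
\end{itemize}
The realizable signatures of each path are computed by a left-to-right DP along the path. They are then combined by a global DP over all paths, whose state space is bounded in $k+\gamma$ because there are at most $2k\gamma$ certificate vertices and hence at most $2k\gamma$ affected paths. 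For each final state, the residual consists of $A$ complete blocks plus a union of paths on at most $\gamma$ vertices each. This residual has vertex integrity at most $\gamma$ and is decided via Theorem~\ref{thm:uccp-vin}.
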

\begin{proof}
Let $X$ be the given vertex deletion set of size $k$ such that $G-X$ is a disjoint union of paths, and let $\mathcal{P}$ denote the set of connected components of $G-X$. 

\medskip \noindent \textbf{Step 1: Guessing and normalizing the parts intersecting $X$.}
A part of a partition of $V(G)$ is called \emph{special} if it intersects $X$. Since the parts are pairwise disjoint and every special part contains at least one vertex of $X$, any partition contains at most $k$ special parts.

Consider a feasible partition $V(G)=V_1\dot\cup\cdots\dot\cup V_r$, and let $p$ be the number of its special parts. After relabeling the parts, we may assume that $V_1,\ldots,V_p$ are precisely the special parts. For every $i\in[p]$, let $X_i:=V_i\cap X$. The sets $X_1,\ldots,X_p$ form a partition of $X$ into nonempty sets. Thus $0\le p\le\min\{k,r\}$, where $p=0$ is possible only when $X=\emptyset$.

The algorithm branches over all partitions $X=X_1\dot\cup\cdots\dot\cup X_p$  of $X$ into $p$ nonempty sets, including the unique empty partition when $X=\emptyset$. In the branch corresponding to $(X_1,\ldots,X_p)$, we seek a solution in which $X_i$ is the intersection of the $i$-th special part with $X$. The number of branches is bounded by a function of $k$; for instance, it is at most $k^k$ when $k\ge1$.

The following lemma shows that each special part can be certified using at most $2\gamma$ vertices outside $X$.

\begin{lemma} \label{lem:vdp-special-part-normalization} 
Let $V_1,\ldots,V_r$ be a feasible partition of $V(G)$, and suppose that $V_1,\ldots,V_p$ are precisely the parts intersecting $X$.
For every $i\in[p]$, let $X_i:=V_i\cap X$.
Then there exist pairwise disjoint sets $R_1,\ldots,R_p\subseteq V(G)\setminus X$ such that, for every $i\in[p]$, 
\[ R_i\subseteq V_i,\qquad |R_i|\le 2\gamma,\qquad\text{and}\qquad |E(G[X_i\cup R_i])|\ge \gamma. \] 
Consequently, $X_i\cup R_i$ is a certificate of feasibility for the $i$-th special part, and every vertex of $V_i\setminus(X_i\cup R_i)$ is irrelevant to the feasibility of that part.
\end{lemma}
\begin{proof} Fix $i\in[p]$. Since $V_i$ is feasible, $G[V_i]$ contains at least $\gamma$ edges. 
Choose a set $F_i\subseteq E(G[V_i])$ of exactly $u$ edges, and let $Q_i$ be the set of their endpoints. 
Then $|Q_i|\le 2\gamma$. 
Define $R_i:=Q_i\setminus X_i$. 
Since $Q_i\subseteq V_i$ and $V_i\cap X=X_i$, we have $R_i\subseteq V_i\setminus X$ and $|R_i| \le 2\gamma$.

Moreover, $Q_i\subseteq X_i\cup R_i$. Hence every edge of $F_i$ is induced by $X_i\cup R_i$, and therefore $|E(G[X_i\cup R_i])|\ge |F_i|=\gamma$. 
Since the parts $V_1,\ldots,V_p$ are pairwise disjoint and $R_i\subseteq V_i$, the sets $R_1,\ldots,R_p$ are pairwise disjoint. Finally, removing vertices of $V_i\setminus(X_i\cup R_i)$ from the $i$-th special part does not remove any edge of $F_i$. Thus $X_i\cup R_i$ remains feasible. Such removed vertices may subsequently be assigned to any already feasible part, since adding vertices cannot decrease the number of induced edges. Therefore, at most $2\gamma$ vertices of $G-X$ are needed to certify the feasibility of each special part. \end{proof}

For a fixed branch $X=X_1\dot\cup\cdots\dot\cup X_p$, we call the vertices of $R_i$ the \emph{certificate vertices} of the $i$-th special part. 
Let $R:=\bigcup_{i=1}^{p}R_i$  be the set of all certificate vertices. By Lemma~\ref{lem:vdp-special-part-normalization}, we have $|R|=\sum_{i=1}^{p}|R_i|\le 2p\gamma\le 2k\gamma$. 

A path $P\in\mathcal{P}$ is called \emph{affected} if $V(P)\cap R\neq\emptyset$, and \emph{unaffected} otherwise. Since every affected path contains at least one certificate vertex and the paths in $\mathcal{P}$ are pairwise vertex-disjoint, the number of affected paths is at most $|R|\le 2k\gamma$.

If an affected path contains $s$ certificate vertices, then deleting these vertices produces at most $s+1$ residual path segments.
Let  $\mathcal{P}_R := \{P\in\mathcal{P}:V(P)\cap R\neq\emptyset\}$  denote the set of affected paths. 
The total number of residual path segments produced by the affected paths is at most  $\sum_{P\in\mathcal{P}_R} \bigl(|V(P)\cap R|+1\bigr) = |R|+|\mathcal{P}_R| \le 2|R| \le 4k\gamma$.  
Thus, for the fixed partition of $X$, every feasible solution admits certificates for its special parts using at most $2ku$ vertices of $G-X$.
These certificate vertices belong to at most $2k\gamma$ path components, and their deletion produces at most $4k\gamma$ residual path segments.
In the next step, we represent the possible assignments of certificate vertices to the special parts by bounded signatures, without enumerating their positions explicitly.

\medskip \noindent \textbf{Step 2: Baseline decompositions and signatures of affected paths.}
Fix a branch corresponding to a partition $X=X_1\dot\cup\cdots\dot\cup X_p$, where $p\le k$. By Lemma~\ref{lem:vdp-special-part-normalization}, it suffices to select at most $2u$ certificate vertices for each special part. 
Hence at most $2p\gamma \le 2k\gamma$ vertices of $G-X$ are selected as certificate vertices in total.

We first associate with every path $P\in\mathcal{P}$ a fixed \emph{baseline decomposition}.
This is the decomposition that $P$ would contribute if none of its vertices were selected as certificate vertices.
At this point, we do not determine whether $P$ is affected. The baseline is only a fixed reference contribution. 
Later, the dynamic program either retains this baseline or replaces it with the contribution of an affected configuration.

Let $m:=\gamma+1$. 
For every path $P\in\mathcal{P}$, let $n_P:=|V(P)|$ and write  $n_P=a_Pm+b_P$, and $0\le b_P<m$. 
Equivalently, $a_P=\lfloor n_P/(\gamma+1)\rfloor$ and $b_P=n_P\bmod (\gamma+1)$.

By exhaustive application of Reduction Rule~\ref{paths+u RR}, the path $P$ produces $a_P$ completed feasible parts and leaves, if $b_P>0$, one residual path on $b_P$ vertices.
Indeed, each application removes $\gamma+1$ consecutive vertices, which induce exactly $u$ path edges, and decreases the number of required parts by one. 
We call $a_P$ the \emph{baseline block contribution} of $P$ and $b_P$ its \emph{baseline remainder}.

Define $A_0:=\sum_{P\in\mathcal{P}}a_P$  and, for every $\ell\in[\gamma]$, define  $c^0_\ell := \bigl|\{P\in\mathcal{P}:b_P=\ell\}\bigr|$.
Thus, under the baseline decomposition, the paths of $G-X$ produce $A_0$ completed feasible parts, while the residual graph is the disjoint union of $c^0_\ell$ copies of the path $P_\ell$ for every $\ell\in[\gamma]$.

The values $A_0,c^0_1,\ldots,c^0_u$ are determined entirely by the orders of the path components of $G-X$ and can be computed in polynomial time. They are not guessed or stored in the dynamic-programming states. Instead, the dynamic program stores only the bounded corrections caused by paths that are selected as affected.

\medskip 
\noindent \emph{Configurations of an affected path.} Let $P=v_1,\ldots,v_t$ be a path in $\mathcal{P}$. 
A \emph{configuration} of $P$ is a mapping $\varphi:V(P)\rightarrow\{0,1,\ldots,p\}$. 
The value $\varphi(v)=0$ means that $v$ is ordinary, whereas $\varphi(v)=i\in[p]$ means that $v$ is selected as a certificate vertex for the $i$-th special part.
A configuration is called \emph{affected} if at least one vertex receives a nonzero value.

For every $i\in[p]$, define  $s_i(P,\varphi) := \bigl|\{v\in V(P):\varphi(v)=i\}\bigr|$. 
We consider only configurations satisfying $s_i(P,\varphi)\le2\gamma$ for every $i\in[p]$.
Let  $s(P,\varphi):=\sum_{i=1}^{p}s_i(P,\varphi)$  denote the total number of certificate vertices selected from $P$.

For every $i\in[p]$, define the edge count contribution of $P$ to the $i$-th special part as \[ \begin{aligned} \mu_i(P,\varphi) &:= \sum_{\substack{v\in V(P)\\ \varphi(v)=i}} |N_G(v)\cap X_i| + \bigl| \{v_jv_{j+1}\in E(P): \varphi(v_j)=\varphi(v_{j+1})=i\} \bigr|. \end{aligned} \] 
The first term counts edges between certificate vertices assigned to the $i$-th special part and vertices of $X_i$. 
The second term counts path edges whose endpoints are both assigned to the $i$-th special part. 
We truncate this contribution at $\gamma$ and define $\alpha_i(P,\varphi):=\min\{\gamma,\mu_i(P,\varphi)\}$. 
Edges induced entirely by $X_i$ are not included in $\alpha_i(P,\varphi)$ and will be counted separately.

\medskip 
\noindent \emph{Residual segments of an affected path.}
Let $\varphi$ be an affected configuration of $P$, and let $s:=s(P,\varphi)\ge1$. Deleting the $s$ certificate vertices selected by $\varphi$ produces at most $s+1$ ordinary path segments.
Let $g_0,\ldots,g_s$ denote their numbers of vertices, allowing zero-length end gaps. Since $P$ has $t$ vertices, we have $\sum_{j=0}^{s}g_j=t-s$. 
For every $j\in\{0,\ldots,s\}$, write  $g_j=h_jm+\rho_j$, and $0\le\rho_j<m$.  Once the certificate vertices have been assigned to the special parts, the ordinary segments are used exclusively for non-special parts. Hence Reduction Rule~\ref{paths+u RR} may be applied exhaustively to each such segment.
The segment on $g_j$ vertices produces $h_j$ completed feasible parts and leaves, if $\rho_j>0$, one residual path on $\rho_j\le \gamma$ vertices.

Define the block contribution of $(P,\varphi)$ as  $a(P,\varphi) := \sum_{j=0}^{s}h_j = \sum_{j=0}^{s} \left\lfloor\frac{g_j}{\gamma+1}\right\rfloor$.
For every $\ell\in[\gamma]$, define $c_\ell(P,\varphi) := \bigl|\{j\in\{0,\ldots,s\}:\rho_j=\ell\}\bigr|$. 
Thus, $c_\ell(P,\varphi)$ is the number of residual paths on exactly $\ell$ vertices produced by the configuration.

We compare these values with the baseline contribution of $P$. 
Define the \emph{block correction} $\delta(P,\varphi):=a(P,\varphi)-a_P$  and, for every $\ell\in[\gamma]$, the \emph{remainder correction} \[ d_\ell(P,\varphi) := c_\ell(P,\varphi)-\mathbf{1}[b_P=\ell], \] where $\mathbf{1}[\mathcal{Q}]$ is $1$ if $\mathcal{Q}$ holds and is $0$ otherwise.

The following lemma bounds these corrections in terms of the number of certificate vertices selected from $P$, independently of the order of $P$. 
\begin{lemma} \label{lem:vdp-bounded-path-correction} 
Let $P$ be a path, and let $\varphi$ be an affected configuration of $P$ selecting exactly $s$ certificate vertices.
Then $-s\le\delta(P,\varphi)\le0$. Moreover, for every $\ell\in[\gamma]$,  $-1\le d_\ell(P,\varphi)\le s+1$,  and $\sum_{\ell=1}^{\gamma}\max\{0,d_\ell(P,\varphi)\}\le s+1$. \end{lemma} 
\begin{proof} 
Let $t:=|V(P)|$ and $m:=\gamma+1$. By definition, 
\[ a_P=\left\lfloor\frac{t}{m}\right\rfloor \qquad\text{and}\qquad a(P,\varphi) = \sum_{j=0}^{s} \left\lfloor\frac{g_j}{m}\right\rfloor, \]
where $\sum_{j=0}^{s}g_j=t-s$. 
Since 
\[ \sum_{j=0}^{s} \left\lfloor\frac{g_j}{m}\right\rfloor \le \left\lfloor \frac{\sum_{j=0}^{s}g_j}{m} \right\rfloor = \left\lfloor\frac{t-s}{m}\right\rfloor \le \left\lfloor\frac{t}{m}\right\rfloor, \] 
we obtain $a(P,\varphi)\le a_P$, and hence $\delta(P,\varphi)\le0$.

For the lower bound, write $t=a_Pm+b_P$, where $0\le b_P<m$, and $g_j=h_jm+\rho_j$, where $0\le\rho_j<m$. 
Since $\sum_{j=0}^{s}g_j=t-s$, we have  $a(P,\varphi)m+\sum_{j=0}^{s}\rho_j = a_Pm+b_P-s$.  
Therefore, \[ (a_P-a(P,\varphi))m = s+\sum_{j=0}^{s}\rho_j-b_P. \] Using $\rho_j\le m-1$ and $b_P\ge0$, we obtain \[ \begin{aligned} (a_P-a(P,\varphi))m &\le s+(s+1)(m-1) \\ &=sm+(m-1). \end{aligned} \] Consequently, \[ a_P-a(P,\varphi) \le s+\frac{m-1}{m} <s+1. \] Since the left-hand side is an integer, $a_P-a(P,\varphi)\le s$, and hence $\delta(P,\varphi)\ge-s$.

Deleting $s$ vertices from a path produces at most $s+1$ nonempty residual segments. Therefore, $c_\ell(P,\varphi)\le s+1$ for every $\ell\in[\gamma]$, while the baseline contains at most one residual path. It follows that \[ -1 \le c_\ell(P,\varphi)-\mathbf{1}[b_P=\ell] \le s+1. \] 
Finally, $\sum_{\ell=1}^{\gamma}c_\ell(P,\varphi)\le s+1$.  Subtracting the single baseline residual path, if one exists, cannot increase the sum of the positive coordinates. Hence \[ \sum_{\ell=1}^{\gamma}\max\{0,d_\ell(P,\varphi)\}\le s+1. \] \end{proof}
We associate with an affected configuration $(P,\varphi)$ the \emph{signature} 
\[ \sigma(P,\varphi) := \left( (s_i(P,\varphi))_{i\in[p]}, (\alpha_i(P,\varphi))_{i\in[p]}, \delta(P,\varphi), (d_\ell(P,\varphi))_{\ell\in[\gamma]} \right). \] 
The signature records precisely the information relevant outside $P$: the numbers of certificate vertices assigned to the special parts, their edge count contributions, and the corrections to the baseline numbers of completed parts and residual short paths.
Let \[ \Sigma(P) := \{\sigma(P,\varphi): \varphi\text{ is a valid affected configuration of }P\} \] denote the set of realizable signatures of $P$. 
Thus, $\sigma\in\Sigma(P)$ means that at least one affected configuration of $P$ has the effect recorded by $\sigma$. 
Distinct configurations may yield the same signature and need not be distinguished by the global algorithm.

Consider a collection $\mathcal{A}\subseteq\mathcal{P}$ of affected paths, and suppose that, for every $P\in\mathcal{A}$, an affected configuration $\varphi_P$ has been chosen. 
Assume that these configurations select at most $2k\gamma$ certificate vertices in total, that is, $\sum_{P\in\mathcal{A}} s(P,\varphi_P)\le 2k\gamma$.  
Define the \emph{accumulated block correction} by
$\Delta A := \sum_{P\in\mathcal{A}}\delta(P,\varphi_P)$,  
and, for every $\ell\in[\gamma]$, define the \emph{accumulated remainder correction} by 
$\Delta c_\ell := \sum_{P\in\mathcal{A}}d_\ell(P,\varphi_P)$.  
Since every affected path contains at least one certificate vertex, we have $|\mathcal{A}|\le 2k\gamma$. 
Moreover, by Lemma~\ref{lem:vdp-bounded-path-correction}, \[ \begin{aligned} \Delta A &\ge -\sum_{P\in\mathcal{A}}s(P,\varphi_P) \ge -2k\gamma, \end{aligned} \] while $\delta(P,\varphi_P)\le0$ for every $P\in\mathcal{A}$. 
Hence $-2k\gamma\le\Delta A\le0$.

For every $\ell\in[\gamma]$, each affected path can remove at most one baseline residual path of order $\ell$. 
Therefore, \[ \Delta c_\ell\ge-|\mathcal{A}|\ge-2k\gamma. \] 
On the other hand, the configurations represented by $\mathcal{A}$ produce at most \[ \sum_{P\in\mathcal{A}} \bigl(s(P,\varphi_P)+1\bigr) \le 2k\gamma+|\mathcal{A}| \le 4k\gamma \] residual path segments in total. 
Consequently,  $\Delta c_\ell\le4ku$ for every $\ell\in[\gamma]$. Thus, we get $-2k\gamma\le\Delta c_\ell\le4k\gamma$  for every $\ell\in[\gamma]$.

The values $\Delta A$ and $\Delta c_\ell$ describe the changes relative to the fixed baseline caused by the selected affected configurations. 
In particular, the resulting total number of completed feasible parts is  $A=A_0+\Delta A$, and the resulting number of residual paths on $\ell$ vertices is $c_\ell=c^0_\ell+\Delta c_\ell$ for every $\ell\in[\gamma]$.

Therefore, although the baseline values $A_0,c^0_1,\ldots,c^0_u$ may depend on the input size, all corrections stored by the global dynamic program are bounded by a function of $k+\gamma$. In the next step, we compute the sets $\Sigma(P)$ using a finite-state dynamic program and combine the resulting signatures over all path components.

\medskip
\noindent \textbf{Step 3: Computing the signatures of a path.}

Fix a branch $X=X_1\dot\cup\cdots\dot\cup X_p$, where $p\le k$, and let $P=v_1,\ldots,v_t$ be a path component of $G-X$. We describe a left-to-right dynamic program that computes the set $\Sigma(P)$ of all realizable signatures of affected configurations of $P$.

Recall that a configuration of $P$ is a mapping $\varphi:V(P)\rightarrow\{0,1,\ldots,p\}$.
The value $\varphi(v)=0$ means that $v$ is ordinary, whereas $\varphi(v)=i\in[p]$ means that $v$ is selected as a certificate vertex for the $i$-th special part.
We consider only configurations satisfying $s_i(P,\varphi)\le 2u$ for every $i\in[p]$ and $s(P,\varphi)\le 2k\gamma$. 
Let $m:=\gamma+1$. For every $j\in\{0,\ldots,t\}$, let $P_j:=P[\{v_1,\ldots,v_j\}]$ denote the prefix of $P$ on its first $j$ vertices, with $P_0$ being the empty graph. 
A state after processing $P_j$ is a tuple \[ \eta= \left( (\alpha_i)_{i\in[p]}, (s_i)_{i\in[p]}, c, \lambda, z, (q_\ell)_{\ell\in[\gamma]} \right). \] 
The entries have the following meanings.

\begin{enumerate}
\item For every $i\in[p]$, the value $\alpha_i\in\{0,\ldots,\gamma\}$ is the edge count contributed by the processed prefix to the $i$-th special part, truncated at $u$. It counts edges between certificate vertices assigned to the $i$-th special part and vertices of $X_i$, together with path edges in $P_j$ whose endpoints are both assigned to the $i$-th special part. 
\item For every $i\in[p]$, the value $s_i\in\{0,\ldots,2\gamma\}$ is the number of vertices of $P_j$ selected as certificate vertices for the $i$-th special part. 
\item The value $c\in\{0,1,\ldots,p\}$ records the assignment of the last processed vertex $v_j$. The value $c=0$ means that $v_j$ is ordinary, while $c=i\in[p]$ means that $v_j$ is assigned to the $i$-th special part. For $j=0$, we set $c=0$. 
\item The value $\lambda\in\{0,\ldots,\gamma\}$ records the residual order of the currently open ordinary segment after extracting every possible complete block of $\gamma+1$ consecutive ordinary vertices. More precisely, if the currently open ordinary segment has $L$ vertices, then $L=h(\gamma+1)+\lambda$ for some number $h$ of already extracted complete blocks. Thus $\lambda=L\bmod (\gamma+1)$. If $v_j$ is a certificate vertex, then $\lambda=0$. 
\item Let $a_j$ be the number of complete blocks of $\gamma+1$ ordinary vertices extracted from the ordinary segments of $P_j$. We store the correction \[ z:=a_j-\left\lfloor\frac{j}{\gamma+1}\right\rfloor \] relative to the baseline block contribution of the prefix $P_j$. 
\item For every $\ell\in[\gamma]$, the value $q_\ell$ is the number of already closed ordinary segments whose residual order, after extracting all complete blocks, is exactly $\ell$. The currently open ordinary segment, represented by $\lambda$, is not included in these counters. 
\end{enumerate}

A state is retained only if $\sum_{i=1}^{p}s_i\le 2k\gamma$. 

\medskip 
\noindent \emph{Initial state.} For $j=0$, the table contains only the state  $\left( (0)_{i\in[p]}, (0)_{i\in[p]}, 0, 0, 0, (0)_{\ell\in[\gamma]} \right)$.

\medskip 
\noindent \emph{Transitions.} 
Suppose that \[ \eta= \left( (\alpha_i)_{i\in[p]}, (s_i)_{i\in[p]}, c, \lambda, z, (q_\ell)_{\ell\in[\gamma]} \right) \] is reachable after processing $P_j$, where $0\le j<t$. Define \[ \Delta_j := \left\lfloor\frac{j+1}{\gamma+1}\right\rfloor - \left\lfloor\frac{j}{\gamma+1}\right\rfloor. \] Thus $\Delta_j\in\{0,1\}$ indicates whether the baseline decomposition completes a new block when the prefix is extended from $P_j$ to $P_{j+1}$.

\smallskip 
\noindent \emph{Ordinary transition.} 
Suppose that $v_{j+1}$ is declared ordinary. If $\lambda<\gamma$, no new complete ordinary block is formed, and we set $\lambda':=\lambda+1$ and $z':=z-\Delta_j$. If $\lambda=\gamma$, then $v_{j+1}$ completes a block of $\gamma+1$ consecutive ordinary vertices. We extract this block and set $\lambda':=0$ and $z':=z+1-\Delta_j$. In both cases, the vectors $(\alpha_i)_{i\in[p]}$, $(s_i)_{i\in[p]}$, and $(q_\ell)_{\ell\in[\gamma]}$ remain unchanged, and we set $c':=0$.

\smallskip 
\noindent \emph{Certificate transition.} Suppose that $v_{j+1}$ is assigned to the $i$-th special part, where $i\in[p]$. This transition is allowed only if $s_i<2\gamma$ and $\sum_{h=1}^{p}s_h<2k\gamma$.

We set $s'_i:=s_i+1$ and $s'_h:=s_h$ for every $h\neq i$. The vertex $v_{j+1}$ contributes exactly  $|N_G(v_{j+1})\cap X_i|$ edges between the path and $X_i$. Moreover, if $j\ge1$ and $c=i$, then the edge $v_jv_{j+1}$ also belongs to the $i$-th special part. Hence we set \[ \alpha'_i := \min\left\{ \gamma,\, \alpha_i+|N_G(v_{j+1})\cap X_i|+\mathbf{1}[c=i] \right\}, \] and $\alpha'_h:=\alpha_h$ for every $h\neq i$.

If $\lambda>0$, then assigning $v_{j+1}$ to a special part closes the currently open ordinary segment. Its residual order is $\lambda$, since all complete blocks have already been extracted. Therefore, for every $\ell\in[\gamma]$, we set \[ q'_\ell:=q_\ell+\mathbf{1}[\ell=\lambda]. \] If $\lambda=0$, we set $q'_\ell:=q_\ell$ for every $\ell\in[\gamma]$. Finally, we set $c':=i$, $\lambda':=0$, and $z':=z-\Delta_j$. No complete ordinary block is created by this transition, while the baseline may complete one.

\medskip \noindent \emph{Bounds on the local states.}
We retain only states satisfying \[ -2k\gamma\le z\le0 \qquad\text{and}\qquad \sum_{\ell=1}^{\gamma}q_\ell\le2k\gamma. \] These restrictions do not discard any state that can lead to a relevant configuration. Indeed, consider a configuration of a prefix $P_j$ selecting $s\le2k\gamma$ certificate vertices. Applying Lemma~\ref{lem:vdp-bounded-path-correction} to $P_j$ shows that the difference between the number of complete ordinary blocks and the baseline block contribution lies in $[-s,0]$. Hence $-2k\gamma\le z\le0$.

Furthermore, every nonempty closed ordinary segment ends immediately before a certificate vertex. Thus a prefix containing $s$ certificate vertices has at most $s$ closed ordinary segments. The possible ordinary segment following the last certificate vertex is represented separately by $\lambda$. Therefore, \[ \sum_{\ell=1}^{\gamma}q_\ell\le s\le2k\gamma. \]

\medskip 
\noindent \emph{Extracting signatures.} After processing the entire path $P$, let \[ \eta= \left( (\alpha_i)_{i\in[p]}, (s_i)_{i\in[p]}, c, \lambda, z, (q_\ell)_{\ell\in[\gamma]} \right) \] be a reachable final state. If $\sum_{i=1}^{p}s_i=0$, then the corresponding configuration is unaffected and its contribution is already represented by the baseline. We therefore do not extract an affected signature from this state.

Suppose that $\sum_{i=1}^{p}s_i\ge1$. The possible final open ordinary segment must now be closed. For every $\ell\in[\gamma]$, define  $\widehat q_\ell := q_\ell+\mathbf{1}[\lambda=\ell]$.
Let $b_P:=|V(P)|\bmod (\gamma+1)$ be the baseline remainder of $P$, and define  $d_\ell := \widehat q_\ell-\mathbf{1}[b_P=\ell]$  for every $\ell\in[\gamma]$.
The final value of $z$ is exactly the block correction $\delta(P,\varphi)$, since \[ z = a(P,\varphi) - \left\lfloor\frac{|V(P)|}{\gamma+1}\right\rfloor. \] Thus the final state realizes the signature \[ \left( (s_i)_{i\in[p]}, (\alpha_i)_{i\in[p]}, z, (d_\ell)_{\ell\in[\gamma]} \right). \] Let $\Sigma(P)$ be the set of all signatures extracted from reachable final states in this manner.

\begin{lemma} \label{lem:vdp-local-signatures} 
For every path component $P$ of $G-X$, the dynamic program computes exactly the set $\Sigma(P)$ of signatures realizable by affected configurations of $P$ satisfying $s_i(P,\varphi)\le2\gamma$ for every $i\in[p]$ and $s(P,\varphi)\le2k\gamma$. 
\end{lemma} 
\begin{proof} We prove the claim by induction on the number of processed vertices. The initial state represents the unique configuration of the empty prefix. Suppose that the claim holds after processing $P_j$. Every configuration of $P_{j+1}$ restricts to a configuration of $P_j$ represented by a reachable state. 
If $v_{j+1}$ is ordinary, the ordinary transition extends the current ordinary segment and extracts a complete block exactly when its residual order reaches $\gamma+1$. 
The update of $z$ records the difference between this actual block contribution and the corresponding change in the baseline. 
If $v_{j+1}$ is assigned to the $i$-th special part, the certificate transition increments $s_i$, adds the edges from $v_{j+1}$ to $X_i$, and counts the path edge $v_jv_{j+1}$ exactly when $v_j$ is also assigned to the $i$-th special part. 
It also closes and records the current ordinary segment, if one exists. Thus every valid configuration of $P_{j+1}$ gives rise to a reachable state with the stated interpretation. 
Conversely, every transition assigns $v_{j+1}$ either to no special part or to exactly one special part and updates all entries according to that assignment. Hence every reachable state corresponds to a valid configuration of the processed prefix.
After the entire path has been processed, closing the final ordinary segment and subtracting the fixed baseline remainder yields precisely the signature defined in Step~2. Therefore, the extracted signatures are exactly the realizable affected signatures of $P$.
\end{proof}

\medskip \noindent \emph{Running time.} At each path position, the number of states is at most \[ (\gamma+1)^p(2\gamma+1)^p(p+1)(\gamma+1)(2k\gamma+1)(2k\gamma+1)^\gamma. \] Indeed, the edge count vector has at most $(\gamma+1)^p$ possibilities, the certificate-count vector has at most $(2\gamma+1)^p$ possibilities, the last assignment has $p+1$ possibilities, the open residual order has $\gamma+1$ possibilities, the correction variable has at most $2k\gamma+1$ possibilities, and each residual-segment counter has at most $2k\gamma+1$ possibilities. Each state has at most $p+1\le k+1$ outgoing transitions. Hence $\Sigma(P)$ can be computed in time $f_1(k,\gamma)\cdot |V(P)|$ for some computable function $f_1$. In the next step, we combine the signature sets over all path components, verify the feasibility of the special parts, and construct the residual instance of bounded vertex integrity.

\medskip 
\noindent \textbf{Step 4: Combining the path signatures.}

We now combine the signature sets $\Sigma(P)$ over all paths $P\in\mathcal{P}$. Recall that the fixed baseline decomposition yields \[ A_0 := \sum_{P\in\mathcal{P}} \left\lfloor\frac{|V(P)|}{\gamma+1}\right\rfloor \] completed feasible parts. Moreover, for every $\ell\in[\gamma]$, the baseline contains \[ c^0_\ell := \bigl| \{P\in\mathcal{P}: |V(P)|\bmod(\gamma+1)=\ell\} \bigr| \] residual paths on exactly $\ell$ vertices. Let $\mathcal{P}=\{P_1,\ldots,P_m\}$. We process these paths one at a time. For every $j\in\{0,\ldots,m\}$, a state after processing $P_1,\ldots,P_j$ is a tuple \[ \zeta = \left( (\alpha_i)_{i\in[p]}, (s_i)_{i\in[p]}, \Delta A, (\Delta c_\ell)_{\ell\in[\gamma]} \right). \] The entries have the following meanings.

\begin{enumerate} \item For every $i\in[p]$, the value $\alpha_i\in\{0,\ldots,\gamma\}$ is the total edge count contributed by the certificate vertices selected from the processed paths to the $i$-th special part, truncated at $\gamma$. \item For every $i\in[p]$, the value $s_i\in\{0,\ldots,2\gamma\}$ is the total number of certificate vertices assigned to the $i$-th special part from the processed paths. \item The value $\Delta A$ is the accumulated correction to the number of completed $(\gamma+1)$-vertex blocks relative to the baseline contributions of the processed paths. \item For every $\ell\in[\gamma]$, the value $\Delta c_\ell$ is the accumulated correction to the number of residual paths on $\ell$ vertices relative to the baseline contributions of the processed paths. 
\end{enumerate}

We retain only states satisfying $s_i\le2\gamma$ for every $i\in[p]$ and $\sum_{i=1}^{p}s_i\le2k\gamma$. By the bounds established in Step~2, it suffices to consider $-2k\gamma\le\Delta A\le0$  and $-2k\gamma\le\Delta c_\ell\le4k\gamma$  for every $\ell\in[\gamma]$.

\medskip 
\noindent \emph{Initial state.} Before processing any path, the table contains only the state \[ \left( (0)_{i\in[p]}, (0)_{i\in[p]}, 0, (0)_{\ell\in[\gamma]} \right). \] 

\medskip 
\noindent \emph{Transitions.} Suppose that \[ \zeta = \left( (\alpha_i)_{i\in[p]}, (s_i)_{i\in[p]}, \Delta A, (\Delta c_\ell)_{\ell\in[\gamma]} \right) \] is reachable after processing $P_1,\ldots,P_{j-1}$. We distinguish two possibilities for $P_j$.

\smallskip 
\noindent \emph{Unaffected transition.} We may select no certificate vertex from $P_j$. The path then retains its baseline contribution, so the state remains unchanged.

\smallskip 
\noindent \emph{Affected transition.} Alternatively, choose a signature \[ \sigma = \left( (\widehat{s}_i)_{i\in[p]}, (\widehat{\alpha}_i)_{i\in[p]}, \delta, (d_\ell)_{\ell\in[\gamma]} \right) \in\Sigma(P_j). \] The signature certifies the existence of an affected configuration of $P_j$ with the effect recorded by $\sigma$. 
For every $i\in[p]$, set \[ s'_i:=s_i+\widehat{s}_i \qquad\text{and}\qquad \alpha'_i := \min\{\gamma,\alpha_i+\widehat{\alpha}_i\}. \]
The transition is allowed only if $s'_i\le2\gamma$ for every $i\in[p]$ and $\sum_{i=1}^{p}s'_i\le2k\gamma$. We further set $\Delta A':=\Delta A+\delta$  and, for every $\ell\in[\gamma]$, $\Delta c'_\ell:=\Delta c_\ell+d_\ell$.  The resulting state is retained only if all entries lie in the prescribed ranges.

\medskip 
\noindent \emph{Interpretation of a final state.} Let \[ \zeta = \left( (\alpha_i)_{i\in[p]}, (s_i)_{i\in[p]}, \Delta A, (\Delta c_\ell)_{\ell\in[\gamma]} \right) \] be reachable after all paths in $\mathcal{P}$ have been processed.

For every $i\in[p]$, let $R_i$ denote the certificate vertices assigned to the $i$-th special part by the configurations represented by $\zeta$. The set $X_i\cup R_i$ induces at least \[ \min\{\gamma,\,|E(G[X_i\cup R_i])|\} = \min\{\gamma,\,|E(G[X_i])|+\alpha_i\} \] edges, up to truncation at $\gamma$.
Accordingly, we call $\zeta$ \emph{special-feasible} if \[ |E(G[X_i])|+\alpha_i\ge \gamma \] for every $i\in[p]$. 
A final state that is not special-feasible is rejected. For a special-feasible state, define  $A:=A_0+\Delta A$.  
By the definition of the corrections, $A$ is the total number of completed $(\gamma+1)$-vertex blocks produced by the ordinary path segments. By exhaustive application of Reduction Rule~\ref{paths+u RR}, each such block forms one feasible non-special part. For every $\ell\in[\gamma]$, define \[ c_\ell:=c^0_\ell+\Delta c_\ell. \] We discard the state if $c_\ell<0$ for some $\ell\in[\gamma]$. Otherwise, $c_\ell$ is the number of residual path components on exactly $\ell$ vertices after all complete blocks have been extracted.

Let  $H_\zeta := \bigcup_{\ell=1}^{\gamma}c_\ell P_\ell$.  Thus, $H_\zeta$ consists of exactly $c_\ell$ pairwise vertex-disjoint copies of $P_\ell$ for every $\ell\in[\gamma]$. Every connected component of $H_\zeta$ has at most $\gamma$ vertices, and hence $\vi(H_\zeta)\le \gamma$.  Moreover,  $|V(H_\zeta)| = \sum_{\ell=1}^{\gamma}\ell c_\ell \le n$,  so $H_\zeta$ can be constructed in polynomial time.

\medskip
\noindent \emph{Acceptance conditions.}
Let  $q:=r-p$  be the required number of non-special parts.

\smallskip 
\noindent \emph{Case 1: $q=0$.} 
All required parts are special. Since $\zeta$ is special-feasible, the sets $X_i\cup R_i$, $i\in[p]$, provide $r=p$ feasible parts. Every remaining vertex can be assigned arbitrarily to one of these parts, since adding vertices cannot decrease the number of induced edges. We therefore accept the state.

\smallskip 
\noindent \emph{Case 2: $q>0$ and $A\ge q$.} Choose any $q$ of the $A$ completed blocks as initial non-special parts. 
Together with the $p$ special parts, they give exactly $p+q=r$ feasible parts. 
Assign the vertices of every additional completed block and every vertex of $H_\zeta$ arbitrarily among these parts. 
Since the selected parts already induce at least $\gamma$ edges, this preserves feasibility. We therefore accept the state. 

\smallskip
\noindent \emph{Case 3: $q>0$ and $A<q$.} 
Use the $A$ completed blocks as $A$ feasible non-special parts, and define $q':=q-A$.
The residual graph $H_\zeta$ must provide the remaining $q'$ non-special parts. We therefore invoke the algorithm for \textsc{ECGP} parameterized by vertex integrity on the instance $(H_\zeta,q',\gamma)$.  Since $\vi(H_\zeta)\le \gamma$, this invocation is fixed-parameter tractable in $\gamma$. We accept $\zeta$ if and only if $(H_\zeta,q',\gamma)$ is a YES-instance. If the residual instance is a YES-instance, its $q'$ feasible parts, together with the $A$ completed blocks and the $p$ special parts, give exactly  $p+A+q' = p+A+(q-A) = p+q = r$ feasible parts. The branch $X=X_1\dot\cup\cdots\dot\cup X_p$ is accepted if at least one reachable final state is special-feasible and satisfies one of the above acceptance conditions. The original instance is accepted if at least one branch is accepted.

\begin{lemma}
\label{lem:vdp-global-dp-interpretation}
Let $\zeta$ be a reachable final state. Then $A=A_0+\Delta A$ is exactly the number of completed feasible blocks produced by the ordinary path segments represented by $\zeta$, and $c_\ell=c^0_\ell+\Delta c_\ell$ is exactly the number of residual paths on $\ell$ vertices for every $\ell\in[\gamma]$.
\end{lemma} 
\begin{proof} 
We proceed by induction over the processed path components. Before any path is processed, all corrections are zero, so the statement holds. Suppose that it holds after processing $P_1,\ldots,P_{j-1}$. If $P_j$ is unaffected, its actual contribution equals its baseline contribution, and no correction is added. If $P_j$ is affected, the selected signature records precisely the difference between its actual contribution and its baseline contribution: $\delta$ is the correction to the number of completed blocks, and $d_\ell$ is the correction to the number of residual paths on $\ell$ vertices. Adding these values to the accumulated corrections therefore preserves the stated interpretation. The result follows after all paths have been processed. 
\end{proof}

\medskip 
\noindent \emph{Running time of the global dynamic program.}
The edge count vector has at most $(\gamma+1)^p$ possibilities, and the certificate-count vector has at most $(2\gamma+1)^p$ possibilities. The block correction has at most $2k\gamma+1$ possible values, while each of the $\gamma$ remainder-correction coordinates has at most $6k\gamma+1$ possible values. Therefore, the number of global states is at most $(\gamma+1)^p(2\gamma+1)^p(2k\gamma+1)(6k\gamma+1)^\gamma$. 
Since $p\le k$, this quantity is bounded by a function of $k+\gamma$. For every path $P$, the set $\Sigma(P)$ has size bounded by a function of $k+\gamma$. Hence the global dynamic program runs in FPT time.   The number of final states is bounded by a function of $k+\gamma$. Each residual instance $H_\zeta$ has at most $n$ vertices and vertex integrity at most $\gamma$. Therefore, by Theorem~\ref{thm:uccp-vin}, all residual instances can be processed in FPT time. It remains to prove the soundness and completeness of the algorithm.

\medskip
\noindent
\textbf{Step 5: Correctness and running time.}
We prove that the algorithm accepts if and only if $(G,r,\gamma)$ is a YES-instance.

\medskip
\noindent \emph{Soundness.} 
Suppose that the algorithm accepts a branch $X=X_1\dot\cup\cdots\dot\cup X_p$ through a special-feasible final state \[ \zeta = \left( (\alpha_i)_{i\in[p]}, (s_i)_{i\in[p]}, \Delta A, (\Delta c_\ell)_{\ell\in[\gamma]} \right). \] For every affected path, fix a configuration realizing the signature selected by the global dynamic program. For each $i\in[p]$, let $R_i$ be the set of certificate vertices assigned to the $i$-th special part, and define $S_i:=X_i\cup R_i$. The sets $S_1,\ldots,S_p$ are pairwise disjoint.

Let $\mu_i$ be the total, untruncated utility contribution of the certificate vertices in $R_i$. Since $\alpha_i=\min\{\gamma,\mu_i\}$ and $\zeta$ is special-feasible,  $|E(G[S_i])| = |E(G[X_i])|+\mu_i \ge \gamma$.  Thus $S_1,\ldots,S_p$ are feasible special parts.

By Lemma~\ref{lem:vdp-global-dp-interpretation}, the ordinary path segments produce exactly $A:=A_0+\Delta A$ pairwise vertex-disjoint completed blocks, each inducing exactly $\gamma$ edges, together with the residual graph $ H_\zeta = \bigcup_{\ell=1}^{\gamma}c_\ell P_\ell$, and $c_\ell:=c^0_\ell+\Delta c_\ell$.  
Let $q:=r-p$. If $q=0$, the sets $S_1,\ldots,S_p$ already provide all $r$ required parts. Assign every remaining vertex arbitrarily to one of them.

Suppose that $q>0$. If $A\ge q$, choose any $q$ completed blocks as non-special parts. Together with the $p$ special parts, they give exactly $r$ feasible parts. Assign all vertices of the additional completed blocks and $H_\zeta$ arbitrarily to these parts. Finally, suppose that $A<q$. The algorithm accepts only if $(H_\zeta,q-A,\gamma)$ is a YES-instance. Let $T_1,\ldots,T_{q-A}$ be a corresponding feasible partition of $H_\zeta$. The special parts $S_1,\ldots,S_p$, the $A$ completed blocks, and the parts $T_1,\ldots,T_{q-A}$ are pairwise disjoint, feasible, and their total number is \[ p+A+(q-A)=r. \] The certificate vertices belong to the special parts, while every ordinary vertex belongs either to a completed block or to $H_\zeta$. Hence these sets partition $V(G)$. Therefore $(G,r,\gamma)$ is a YES-instance.

\medskip \noindent \emph{Completeness.} Suppose that $(G,r,\gamma)$ admits a feasible partition $V(G)=V_1\dot\cup\cdots\dot\cup V_r$. After relabeling, let $V_1,\ldots,V_p$ be precisely the parts intersecting $X$, and let $X_i:=V_i\cap X$ for every $i\in[p]$. The algorithm considers the branch corresponding to \[ X=X_1\dot\cup\cdots\dot\cup X_p. \] 
By Lemma~\ref{lem:vdp-special-part-normalization}, for every $i\in[p]$ there exists a set $R_i\subseteq V_i\setminus X$ such that \[ |R_i|\le2\gamma \qquad\text{and}\qquad |E(G[X_i\cup R_i])|\ge \gamma. \] Assign every vertex of $R_i$ to the $i$-th special part in the corresponding path configuration, and declare every other vertex of $G-X$ ordinary. The total number of selected certificate vertices is at most $2p\gamma\le2k\gamma$.

Let $q:=r-p$. If $q=0$, the corresponding final state is special-feasible and is accepted by the first acceptance condition. Assume that $q>0$. Every vertex of  $\bigcup_{i=1}^{p} \bigl(V_i\setminus(X_i\cup R_i)\bigr) $ may be moved to an arbitrary non-special part. 
Each special part retains its certificate $X_i\cup R_i$, and adding vertices to a non-special part cannot decrease its utility. Consequently, the vertices outside the special certificates admit a partition into exactly $q$ feasible non-special parts. The restriction of this assignment to each path $P\in\mathcal{P}$ is either unaffected or defines an affected configuration of $P$.

By Lemma~\ref{lem:vdp-local-signatures}, its signature belongs to $\Sigma(P)$. The global dynamic program can therefore select the corresponding transition for every path and reaches a special-feasible final state $\zeta$. After fixing the certificate vertices, the residual graph is a disjoint union of ordinary path segments. 
Applying Reduction Rule~\ref{paths+u RR} exhaustively to these segments. By Lemma~\ref{lem:vdp-global-dp-interpretation}, these applications produce exactly $A=A_0+\Delta A$ completed feasible parts and leave precisely $H_\zeta$. If $A\ge q$, the algorithm accepts by the second acceptance condition. If $A<q$, correctness of the reduction rule implies that $(H_\zeta,q-A,\gamma)$ is a YES-instance. The vertex-integrity algorithm therefore returns YES, and the algorithm accepts by the third acceptance condition. Thus every YES-instance is accepted.

\medskip 
\noindent \emph{Running time.}
The number of partitions of $X$ considered in Step~1 is bounded by $k^k$. For each branch, Step~3 computes all local signature sets in FPT time. By Step~4, the number of global states is at most \[ (\gamma+1)^p(2\gamma+1)^p(2k\gamma+1)(6k\gamma+1)^\gamma, \] which is bounded by a function of $k+\gamma$ because $p\le k$. Since every path has a number of realizable signatures bounded by a function of $k+\gamma$, the global dynamic program runs in FPT time. For every final state, the residual graph $H_\zeta$ has at most $n$ vertices and satisfies $\vi(H_\zeta)\le \gamma$. The number of final states is bounded by a function of $k+\gamma$. Hence, by Theorem~\ref{thm:uccp-vin}, all calls to the vertex-integrity algorithm together take FPT time. Including the branching over partitions of $X$, the total running time is $?$. Therefore, \textsc{ECGP} is fixed-parameter tractable when parameterized by $\vdp+\gamma$. This completes the proof of Theorem~\ref{thm:vdp-plus-u}.
\end{proof}

\begin{theorem}
\label{thm:vds-plus-u}
\textsc{ECGP} is fixed-parameter tractable when parameterized by $\mathsf{vds}+\gamma$.
\end{theorem}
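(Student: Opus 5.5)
We follow the template of Theorem~\ref{thm:vdp-plus-u}, replacing paths by stars. Let $X$ with $|X|=k$ be a modulator such that $G-X$ is a disjoint union of stars; let $\mathcal{S}$ be its components.

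\textbf{Step 1: reduction rule for large stars.} Every star has vertex integrity small relative to the number of leaves, so the key is a rule analogous to Reduction Rule~\ref{paths+u RR}. A star component $S$ with centre $c$ and at least $\gamma$ leaves can serve as a single part: $c$ plus any $\gamma$ leaves induce exactly $\gamma$ edges. A single star can contribute at most one feasible part that avoids $X$, since every edge of $S$ contains $c$. For $X=\emptyset$ this already settles the problem: the instance is YES iff at least $r$ stars have at least $\gamma$ leaves.

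\textbf{Step 2: bounding stars that interact with $X$.} We branch over the partition $X=X_1\dot\cup\cdots\dot\cup X_p$ of $X$ into the special parts, as in Step~1 of the proof of Theorem~\ref{thm:vdp-plus-u}. By Lemma~\ref{lem:vdp-special-part-normalization}, each special part is certified by at most $2\gamma$ vertices of $G-X$. Hence at most $2k\gamma$ stars are \emph{affected}, meaning they contain a certificate vertex.

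\textbf{Step 3: signatures of affected stars.} An unaffected star $S$ contributes one non-special part iff it has at least $\gamma$ leaves, and otherwise it contributes nothing useful. An affected star has a configuration given by the certificate vertices it donates to each $X_i$. Its \emph{signature} records:
\begin{itemize}
\item the certificate counts $s_i\le 2\gamma$;
\item the truncated edge contributions $\alpha_i\le\gamma$, namely edges to $X_i$ plus star edges inside the certificate;
\item a bit saying whether the rest of $S$ still forms a non-special part (centre unused and at least $\gamma$ remaining leaves), compared with the baseline.
\end{itemize}
Leaves are grouped by their neighbourhood type $N(v)\cap X$, giving at most $2^k$ types, and only up to $2\gamma$ leaves of each type matter. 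Therefore $\Sigma(S)$ is computable in time $f(k,\gamma)\cdot|V(S)|$, either by a small DP over leaf types or by direct enumeration.

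\textbf{Step 4: global DP.} We process the stars in order with a state consisting of $(\alpha_i)_i$, $(s_i)_i$, and a bounded correction $\Delta A\in[-2k\gamma,0]$ to the baseline count $A_0$ of stars with at least $\gamma$ leaves. We accept a branch iff some final state is special-feasible, i.e.\ $|E(G[X_i])|+\alpha_i\ge\gamma$ for every $i$, and $A_0+\Delta A\ge r-p$. All leftover vertices are dumped into an existing part.

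Completeness follows by normalizing an arbitrary solution. Any non-special part lives inside star components only, and it must contain a centre together with at least $\gamma$ of its leaves in the same part, because every edge of $G-X$ within a star contains the centre. Hence non-special parts can be taken to be disjoint single-star blocks. Soundness is immediate from the construction.

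\textbf{Main obstacle.} The delicate point is the normalization claim that a non-special part may be assumed to use only one star. If a non-special part instead collects edges from several stars with fewer than $\gamma$ leaves each, the rule above fails. So the first thing to verify is this claim; the fallback is to track residual small stars, as $H_\zeta$ is tracked in Theorem~\ref{thm:vdp-plus-u}. Those residual stars have fewer than $\gamma$ leaves, so $\vi\le\gamma+1$, and Theorem~\ref{thm:uccp-vin} decides the residual instance in FPT time. With that fallback the algorithm mirrors Step~4 of the $\vdp+\gamma$ proof, storing bounded corrections $\Delta c_\ell$ for the number of residual stars with $\ell<\gamma$ leaves.
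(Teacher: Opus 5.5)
\textbf{Gap.} Your main line (Step~1, the acceptance test in Step~4, and the completeness paragraph) relies on the normalization you flag as the ``main obstacle'', and that normalization is false. Take $X=\emptyset$, $\gamma=2$, $r=2$, and let $G$ be four disjoint edges. Pairing the edges gives a feasible partition. Yet no star has two leaves, so both your criterion ``at least $r$ stars with at least $\gamma$ leaves'' and your test $A_0+\Delta A\ge r-p$ reject. In general, \textsc{ECGP} on a star forest is bin covering with item sizes equal to the leaf counts; this is exactly why Corollary~\ref{cor:vds-hard} holds. A non-special part may therefore collect its $\gamma$ edges from several centres, each with fewer than $\gamma$ leaves, and the sentence ``it must contain a centre together with at least $\gamma$ of its leaves'' is wrong. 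As written, your algorithm is sound but incomplete.

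\textbf{The fallback works.} Your fallback is therefore mandatory rather than optional, but it does go through. The star analogue of Reduction Rule~\ref{paths+u RR} holds: in a star forest with $q\ge 2$ required parts, a star $S$ with at least $\gamma$ leaves can become a part of its own. All edges of $S$ lie in the part $P$ containing its centre, so deleting $V(S)$ from the other parts loses nothing, and $P\setminus V(S)$ can be merged into another part. After the certificate vertices are removed, an affected star leaves either only isolated leaves (if its centre is a certificate) or a single smaller star. Hence $\delta\in\{-1,0\}$ and $d_\ell\in\{-1,0,1\}$, and residual components have at most $\gamma$ vertices. Lemma~\ref{lem:vdp-special-part-normalization} is not path-specific, so the vdp-style dynamic program carries over.

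\textbf{Comparison with the paper.} The paper needs none of this machinery: no branching on $X$ and no signatures (Reduction Rule~\ref{rule:vds-leaves}). Leaves of one star with the same neighbourhood $T$ in $X$ are twins. After moving any such leaf that is isolated in its part into a part containing a neighbour, these leaves occupy at most $|T|+1\le k+1$ parts. In each of those parts every such leaf contributes a distinct edge, so at most $\gamma$ of them are needed per part. Truncating each twin class to $(k+1)\gamma$ vertices bounds every star by $1+2^k(k+1)\gamma$ vertices, so $\vi\le k+1+2^k(k+1)\gamma$, and Theorem~\ref{thm:uccp-vin} finishes the proof. Your repaired route is much longer. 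What it buys is a better dependence on $k$: it calls the vertex-integrity algorithm only with parameter at most $\gamma$, so it avoids the triple-exponential dependence on $k$ that the paper inherits from the $2^{2^{\mathcal{O}(\vi^2)}}$ bound.
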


\begin{proof}
Let $(G,r,\gamma)$ be an instance of \textsc{ECGP}, and let $X$ be a minimum vertex deletion set to a disjoint union of stars, with $|X|=k$. Thus every connected component of $G-X$ is a star.

\begin{uccp}\label{rule:vds-leaves}
Let $X$ be a vertex deletion set to a disjoint union of stars, with $|X|=k$. Let $S$ be a star of $G-X$ with center $c$. For a fixed subset $T \subseteq X$, let $L_T$ be the set of leaves of $S$ whose neighborhood in $X$ is exactly $T$. If $|L_T|>(k+1)\gamma$, then delete arbitrary vertices from $L_T$ until exactly $(k+1)\gamma$ remain.
\end{uccp}

\begin{lemma}\label{lem:vds+u RR}
Reduction Rule~\ref{rule:vds-leaves} is correct.
\end{lemma}
\begin{proof}
Let $(G,r,\gamma)$ be an instance, and let $(G',r,\gamma)$ be the instance obtained after applying the rule to some set $L_T$. We prove that $(G,r,\gamma)$ is a YES-instance if and only if $(G',r,\gamma)$ is a YES-instance.

\medskip
\noindent
\emph{Forward direction.}
Suppose that $(G,r,\gamma)$ is a YES-instance, and let $C_1,\dots,C_r$ be a feasible partition of $V(G)$.

We first show that we may assume, without loss of generality, that every vertex of $L_T$ belongs to a part containing at least one of its neighbors. Indeed, every vertex of $L_T$ is adjacent precisely to the center $c$ and to the vertices of $T$. If some vertex $v \in L_T$ belongs to a part containing none of $c$ and none of the vertices of $T$, then $v$ is isolated in that part. Removing $v$ from that part does not decrease its number of induced edges. Now place $v$ into any part containing either $c$ or a vertex of $T$; this cannot decrease the number of induced edges in that part. Repeating this operation, we obtain a feasible partition in which every vertex of $L_T$ lies in a part containing at least one of its neighbors.

Now consider such a feasible partition. Any part containing a vertex of $L_T$ must contain either $c$ or a vertex of $T$. Since the parts are pairwise disjoint, there is at most one part containing $c$, and for each vertex of $T$ there is at most one part containing that vertex. Hence the number of parts that may contain vertices of $L_T$ is at most $|T|+1 \le |X|+1 = k+1$.

Fix one such part $C_j$. Each vertex of $L_T \cap C_j$ contributes at least one edge inside $C_j$, since it is adjacent only to $c$ and to the vertices of $T$. Hence the total number of edges in $G[C_j]$ that are incident to vertices of $L_T \cap C_j$ is at least $|L_T \cap C_j|$.
Since $C_j$ is feasible, we have $|E(G[C_j])| \ge \gamma$. Therefore, among the vertices of $L_T \cap C_j$, at most $\gamma$ are needed to contribute up to $\gamma$ edges toward this bound. Consequently, if $|L_T \cap C_j| > \gamma$, we may delete arbitrary vertices from $L_T \cap C_j$ until exactly $\gamma$ remain, without violating the condition $|E(G[C_j])| \ge \gamma$.

Applying this to every part containing vertices of $L_T$, we obtain a feasible partition in which at most $\gamma$ vertices of $L_T$ are used in each of at most $k+1$ parts. Thus, in total, at most $(k+1)\gamma$ vertices of $L_T$ are needed.
Hence the deleted vertices from $L_T$ are not needed for feasibility, and removing them yields a feasible partition of $G'$. Therefore $(G',r,\gamma)$ is a YES-instance.

\medskip
\noindent
\emph{Reverse direction.}
Suppose that $(G',r,\gamma)$ is a YES-instance. Since $G'$ is an induced subgraph of $G$, the same partition of $V(G')$ into $r$ feasible parts can be extended to a partition of $V(G)$ by placing all deleted vertices arbitrarily into any one of the existing parts. This does not decrease the number of induced edges in any part. Hence $(G,r,\gamma)$ is also a YES-instance.

\medskip

Thus $(G,r,\gamma)$ is a YES-instance if and only if $(G',r,\gamma)$ is a YES-instance, proving correctness of the rule.
\end{proof}

Apply the reduction rule exhaustively and let $G'$ be the resulting graph.
Consider any star $S$ in $G'-X$ with center $c$. The leaves of $S$ are partitioned into at most $2^k$ classes according to their neighborhoods in $X$, and each class has size at most $k\gamma+\gamma$. Hence $|V(S)| \le 1 + 2^k(k\gamma+\gamma)$.
Therefore $\mathsf{vi}(G') \le k + 1 + 2^k(k\gamma+\gamma)$.
Since \textsc{ECGP} is fixed-parameter tractable when parameterized by $\mathsf{vi}$, the reduced instance can be solved in time $f(k,\gamma)\cdot n^{\mathcal{O}(1)}$ for some computable function $f$.
Therefore \textsc{ECGP} is fixed-parameter tractable when parameterized by $\mathsf{vds}+\gamma$.
\end{proof}

\subsection{Algorithms via Dynamic Programming}

Next, we provide a dynamic programming algorithm for \textsc{ECGP} parameterized by $\mathsf{tw}+r+\gamma$.

\begin{theorem}
\label{thm:tw-dp-general}
\textsc{ECGP} can be solved in time
$r^{\mathcal{O}(k)} (\gamma+1)^{\mathcal{O}(r)} \cdot n$.
More precisely, given a nice tree decomposition of width $k$, the problem can be solved in time
$\mathcal{O}\!\left(r^{k+1} \cdot (\gamma+1)^{2r} \cdot n\right)$.
\end{theorem}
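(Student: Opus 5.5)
We will run a standard bottom-up dynamic program over a nice tree decomposition $(T,(X_t)_{t\in V(T)})$ of width $k$ with $\mathcal{O}(n)$ nodes. Let $G_t$ be the subgraph induced by the vertices introduced in the subtree of $t$. A partial solution at $t$ is an assignment $\pi: V(G_t)\to[r]$. Its edge count for part $j$ is the number of edges of $G_t$ with both endpoints in $\pi^{-1}(j)$. The parts are labelled, which costs at most a factor $r!$ in redundancy but no loss of correctness.

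\textbf{States.} A state at $t$ is a pair $(f,\mathbf{e})$ with $f:X_t\to[r]$ and $\mathbf{e}=(e_1,\dots,e_r)\in\{0,\dots,\gamma\}^r$. We set $D_t[f,\mathbf{e}]=\mathrm{true}$ iff some partial solution $\pi$ on $G_t$ has $\pi|_{X_t}=f$ and, for every $j$, $\min\{\gamma,\text{edge count of part } j\}=e_j$. Truncation at $\gamma$ is sound because edge counts only increase as vertices and edges are added. There are at most $r^{k+1}(\gamma+1)^r$ states per node.

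\textbf{Transitions.} We handle the four node types as follows.
\begin{itemize}
\item \emph{Leaf}: only the empty $f$ together with $\mathbf{e}=\mathbf{0}$ is true.
\item \emph{Introduce $v$ with label $j$}: set $D_t[f\cup\{v\mapsto j\},\mathbf{e}']$ true from $D_{t'}[f,\mathbf{e}]$. Here $\mathbf{e}'$ increases coordinate $j$ by $|\{u\in N(v)\cap X_t: f(u)=j\}|$ and truncates at $\gamma$. All edges from $v$ into $G_{t'}$ go to bag vertices, since $v$ is new and tree-decomposition connectivity forbids edges to forgotten vertices. Hence each edge is counted exactly once, at the moment its later endpoint is introduced.
\item \emph{Forget $v$}: take the OR over the label of $v$.
\item \emph{Join}: $D_t[f,\mathbf{e}]$ is the OR over pairs with $D_{t_1}[f,\mathbf{e}^1]$ and $D_{t_2}[f,\mathbf{e}^2]$ such that $\min\{\gamma,e^1_j+e^2_j-b_j\}=e_j$ for all $j$, where $b_j$ is the number of edges of $G[X_t]$ inside $f^{-1}(j)$. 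No edge joins $V(G_{t_1})\setminus X_t$ to $V(G_{t_2})\setminus X_t$, so inclusion--exclusion over the bag is exact.
\end{itemize}

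One subtlety arises at join nodes: after truncation, $e^1_j+e^2_j-b_j$ is only meaningful when both $e^i_j<\gamma$. We fix this by letting a truncated coordinate $e^i_j=\gamma$ force $e_j=\gamma$, which is valid because $e^i_j$ already lower-bounds the true count. The answer is YES iff $D_{\mathrm{root}}[\emptyset,(\gamma,\dots,\gamma)]$ is true, assuming the root bag is empty. Correctness follows from the invariant, proved by induction over node types.

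\textbf{Running time; main obstacle.} Leaf, introduce and forget nodes take $\mathcal{O}(r^{k+1}(\gamma+1)^r\cdot r)$ time. The bottleneck is the join node. Naively enumerating $f$, $\mathbf{e}^1$ and $\mathbf{e}^2$ costs $r^{k+1}(\gamma+1)^{2r}$ per join node, which is exactly the claimed bound $\mathcal{O}(r^{k+1}(\gamma+1)^{2r}n)$ after summing over $\mathcal{O}(n)$ nodes. Each combination yields a unique $\mathbf{e}$ in $\mathcal{O}(r)$ time, and $b_j$ is precomputed per $f$, so the extra polynomial factors are absorbed or negligible. I expect the join step, with its truncated-sum bookkeeping, to be the main point needing care, rather than an algorithmic obstacle.
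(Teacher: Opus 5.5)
Your proposal is correct and is essentially the paper's dynamic program: states are a bag colouring $f\colon X_t\to[r]$ together with a $\gamma$-truncated edge-count vector, and the $(\gamma+1)^{2r}$ factor comes from enumerating pairs of vectors at join nodes. The one difference is edge accounting. You count an edge when its later endpoint is introduced, so subtracting the bag edges $b_j$ at joins is genuinely required, and your rule that a saturated coordinate forces $e_j=\gamma$ makes that subtraction sound. The paper instead uses introduce-edge nodes, so the edge sets of the two children are disjoint, yet it still subtracts $d_\chi$ after truncating; your bookkeeping is the consistent version of the same argument.
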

\begin{proof}
Let $(G=(V,E),r,\gamma)$ be an instance, and let $(T,\mathcal{B})$ be a nice tree decomposition of $G$ of width $k$. We use the standard node types: leaf, introduce-vertex, introduce-edge, forget, and join. We assume that the root bag is empty.

For a node $t \in V(T)$, let $B_t$ denote its bag. Let $V_t$ be the set of vertices appearing in bags of the subtree rooted at $t$, and let $E_t$ be the set of edges introduced in this subtree. Thus every edge of $G$ is introduced exactly once, and $G_t := (V_t,E_t)$ is the processed subgraph at node $t$.
We work with $r$ \emph{labeled} parts, indexed by $[r] := \{1,\ldots,r\}$. This is without loss of generality, since any feasible partition into $r$ unlabeled parts can be labeled arbitrarily.\\

\noindent \textbf{States.}
For every node $t$, we store a table indexed by tuples $(\chi,\alpha)$, where:
\begin{itemize}
    \item $\chi : B_t \to [r]$ assigns each vertex currently present in the bag to one of the $r$ parts;
    \item $\alpha = (\alpha_1,\ldots,\alpha_r) \in \{0,1,\ldots,\gamma\}^r$, where $\alpha_i$ stores the number of edges already introduced inside part $i$, truncated at $\gamma$.
\end{itemize}

We use the truncated addition operation $a \oplus_\gamma b := \min\{\gamma,a+b\}$ for $a,b \in \{0,\ldots,\gamma\}$, and extend it coordinatewise to vectors in $\{0,\ldots,\gamma\}^r$. \\

\noindent \textbf{Meaning of a state.}
We set $\mathrm{DP}_t[\chi,\alpha]=1$ if and only if there exists a mapping $\pi_t : V_t \to [r]$ such that:
\begin{enumerate}
    \item $\pi_t(v)=\chi(v)$ for every $v \in B_t$, and
    \item for every $i \in [r]$, the number of edges of $E_t$ whose endpoints are both mapped to part $i$ is, after truncation at $\gamma$, equal to $\alpha_i$.
\end{enumerate}

In other words, $\alpha_i$ records how many monochromatic edges of part $i$ have already been seen in the processed subgraph, but once the value reaches $\gamma$ we do not distinguish larger values.
Since $|B_t| \le k+1$, the number of possible functions $\chi$ is at most $r^{k+1}$. Hence the number of states per node is at most $r^{k+1}(\gamma+1)^r$.\\

\noindent \textbf{Leaf node.}
If $t$ is a leaf node, then $B_t=\emptyset$ and $V_t=\emptyset$. We set
$\mathrm{DP}_t[\emptyset,(0,\ldots,0)] = 1$,
and all other entries to $0$.\\

\noindent \textbf{Introduce-vertex node.}
Suppose $t$ is an introduce-vertex node with child $t'$, and $B_t = B_{t'} \cup \{v\}$.
For every state $(\chi',\alpha)$ with $\mathrm{DP}_{t'}[\chi',\alpha]=1$ and every color $c \in [r]$, define $\chi := \chi' \cup \{(v,c)\}$. Then set
$\mathrm{DP}_t[\chi,\alpha]=1$.
No edge is introduced at this step, so the vector $\alpha$ remains unchanged.\\

\noindent \textbf{Introduce-edge node.}
Suppose $t$ is an introduce-edge node with child $t'$, and $B_t=B_{t'}$, while the introduced edge is $e=\{x,y\}$ with $x,y \in B_t$.
For every state $(\chi,\alpha)$ with $\mathrm{DP}_{t'}[\chi,\alpha]=1$, define a new vector $\alpha'$ by
\[
\alpha'_i :=
\begin{cases}
\alpha_i \oplus_\gamma 1 & \text{if } i=\chi(x)=\chi(y),\\
\alpha_i & \text{otherwise.}
\end{cases}
\]
Then set $\mathrm{DP}_t[\chi,\alpha']=1$.
Thus, if the endpoints of the introduced edge lie in the same part, we increase the corresponding counter by one, truncated at $\gamma$.\\

\noindent \textbf{Forget node.}
Suppose $t$ is a forget node with child $t'$, and $B_t = B_{t'} \setminus \{v\}$.
For every state $(\chi,\alpha)$ at node $t$, we set $\mathrm{DP}_t[\chi,\alpha]=1$ if and only if there exists a color $c \in [r]$ such that
$\mathrm{DP}_{t'}[\chi',\alpha]=1$,
where $\chi'$ is the extension of $\chi$ to $B_{t'}$ defined by
$\chi'(v)=c$ and $\chi'(u)=\chi(u)$ for all $\gamma \in B_t$.
Here we simply forget the assignment of $v$ while keeping all accumulated edge counts.\\

\noindent \textbf{Join node.}
Suppose $t$ is a join node with children $t_1,t_2$, and $B_t = B_{t_1}=B_{t_2}$.
For a state $(\chi,\alpha)$, we set $\mathrm{DP}_t[\chi,\alpha]=1$ if and only if there exist vectors $\alpha^{(1)},\alpha^{(2)} \in \{0,\ldots,\gamma\}^r$ such that:
\begin{itemize}
    \item $\mathrm{DP}_{t_1}[\chi,\alpha^{(1)}]=1$,
    \item $\mathrm{DP}_{t_2}[\chi,\alpha^{(2)}]=1$, and
    \item $\alpha = \alpha^{(1)} \oplus_u \alpha^{(2)} \ominus d_\chi$ coordinatewise,
\end{itemize}
where $d_\chi[i]$ is the number of edges $\{x,y\} \in E(G[B_t])$ such that $\chi(x)=\chi(y)=i$.\\

\noindent \textbf{Correctness.}
At the root $r$, the bag is empty. Hence the instance is a YES-instance if and only if
$\mathrm{DP}_r[\emptyset,\alpha]=1$ for some vector $\alpha$ with $\alpha_i=\gamma$ for every $i \in [r]$.
Indeed, $\alpha_i=\gamma$ means that part $i$ induces at least $\gamma$ edges.\\

\noindent \textbf{Running time.}
At every node there are at most $r^{k+1}(\gamma+1)^r$ states. Leaf, introduce-vertex, introduce-edge, and forget nodes can be processed within this bound up to polynomial factors. The expensive step is the join node: for a fixed assignment $\chi$, we combine pairs of vectors $\alpha^{(1)},\alpha^{(2)} \in \{0,\ldots,\gamma\}^r$, giving a factor of $(\gamma+1)^{2r}$.
Hence the total running time is
$\mathcal{O}\!\left(n \cdot r^{k+1} \cdot (\gamma+1)^{2r}\right)$.
\end{proof}

Since $\gamma \le \binom{n}{2}$ in any instance, we have $(\gamma+1)^{2r} \le n^{\mathcal{O}(r)}$. 
Moreover, for \textsc{BECGP}, the dynamic programming can be extended by additionally storing, for each part, the number of assigned vertices (bounded by $n/r$), incurring an additional $n^{\mathcal{O}(r)}$ factor. 
The same approach also extends to signed variants by storing edge-count values.

\begin{corollary}
\textsc{ECGP}, \textsc{BECGP}, and their signed variants belong to XP when parameterized by $\mathsf{tw}+r$.
\end{corollary}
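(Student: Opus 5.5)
We prove the corollary by exhibiting, for each of the four problems, a dynamic program over a nice tree decomposition whose number of states per bag is $n^{f(k,r)}$. First, we compute a tree decomposition of width $\mathcal{O}(\tw)$ in FPT time and convert it to a nice one with $\mathcal{O}(n)$ nodes. For \textsc{ECGP} nothing further is needed: Theorem~\ref{thm:tw-dp-general} gives running time $\mathcal{O}(r^{k+1}(\gamma+1)^{2r}n)$. We may assume $\gamma\le\binom{n}{2}$, since otherwise the answer is trivially NO. Then $(\gamma+1)^{2r}\le n^{\mathcal{O}(r)}$, so the bound is $n^{\mathcal{O}(\tw+r)}$, which is XP. (If $\gamma\le 0$ the instance is trivial, as any partition into $r$ parts, balanced when required, works.)

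For \textsc{BECGP}, we enlarge each state $(\chi,\alpha)$ of the dynamic program of Theorem~\ref{thm:tw-dp-general} by a vector $\beta\in[n/r]_0^r$. Here $\beta_i$ is the number of vertices of $V_t$ assigned to part $i$. The transitions change as follows. At an introduce-vertex node with colour $c$, we increment $\beta_c$ and discard the state if $\beta_c>n/r$. At a join node, the children's vectors are added and $|\chi^{-1}(i)|$ is subtracted, because bag vertices are counted in both subtrees. Forget and introduce-edge nodes leave $\beta$ unchanged. At the root we accept if and only if some state has $\alpha_i=\gamma$ and $\beta_i=n/r$ for all $i$. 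The extra factor is $(n/r+1)^{r}$ states, and squaring this at join nodes still gives $n^{\mathcal{O}(r)}$. Correctness is the same induction as in Theorem~\ref{thm:tw-dp-general}, extended by the invariant on $\beta$.

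For the signed variants, we replace the truncated counter $\alpha_i\in\{0,\dots,\gamma\}$ by the exact value $|E^+_t(i)|-|E^-_t(i)|$. This is the number of positive minus negative introduced edges monochromatic in part $i$, and it lies in $[-m,m]$ with $m\le n^2$. An introduce-edge node adds $\pm1$ according to $\sigma(e)$ when both endpoints receive the same part. At a join node, the children's values are added and the contribution of signed edges inside the bag with both endpoints in part $i$ is subtracted, since those edges are counted in both subtrees. At the root we accept if every coordinate is at least $\gamma$, together with $\beta$ as above in the balanced case.

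Truncation is no longer valid for signed variants, because negative edges can later decrease the count. This is why exact values are stored, at a cost of $(2n^2+1)^r=n^{\mathcal{O}(r)}$ states. The main point to check is the join subtraction: it must be exactly the double-counted bag edges in each variant, otherwise the counts are wrong. With that checked, every variant runs in time $n^{\mathcal{O}(\tw+r)}$, placing all four problems in XP.
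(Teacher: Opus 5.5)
Your approach is the paper's: reuse the tree-decomposition DP of Theorem~\ref{thm:tw-dp-general}, use $\gamma\le\binom{n}{2}$ so that $(\gamma+1)^{2r}=n^{\mathcal{O}(r)}$, add a part-size vector for the balanced case, and store exact (untruncated) signed counters for the signed case. Your $\beta$-bookkeeping is correct. This includes subtracting $|\chi^{-1}(i)|$ at joins, because $V_{t_1}\cap V_{t_2}=B_t$.

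The step that fails as written is your join rule for the signed counters. You count edges at introduce-edge nodes, and each edge is introduced exactly once. So the edges introduced below $t_1$ and those introduced below $t_2$ form disjoint sets, and no bag edge is ever counted in both subtrees. Subtracting the monochromatic bag edges at a join therefore cancels an edge that one child counted once, or pre-subtracts an edge that will only be introduced above $t$. In fact such an edge is subtracted at every join whose bag contains both endpoints.

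The resulting counters undercount positive edges, so feasible partitions can be rejected. They also cancel the penalty of negative edges, so infeasible partitions can be accepted. There are two ways to fix this. You can take the plain coordinatewise sum of the children's counters at a join. Alternatively, you can count, at each introduce-vertex node, the edges from the new vertex to the current bag. Under that convention bag edges really do lie in both $E(G[V_{t_1}])$ and $E(G[V_{t_2}])$, and your subtraction becomes correct.

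The join rule of Theorem~\ref{thm:tw-dp-general}, which you invoke as a black box for \textsc{ECGP} and \textsc{BECGP}, contains the same $\ominus d_\chi$ term. There it is moreover applied to truncated values, from which the true count cannot be recovered. Under the introduce-edge convention the correct rule is simply $\alpha^{(1)}\oplus_\gamma\alpha^{(2)}$. With these corrections your argument gives running time $n^{\mathcal{O}(\tw+r)}$ for all four variants.
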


\begin{corollary} \label{thm:fpt-ru-via-dp} 
\textsc{ECGP} can be solved in time $n^{\mathcal{O}(1)} + \mathcal{O}\left( r\gamma^2 \cdot r^{2r\gamma}\cdot (\gamma+1)^{2r} \right)$,  and \textsc{BECGP} can be solved in time  $n^{\mathcal{O}(1)} + \mathcal{O}\left( (r\gamma^2+r^2)\cdot r^{2r\gamma}\cdot (\gamma+1)^{2r} \cdot (r\gamma^2+r^2+1)^{2r} \right)$. 
\end{corollary}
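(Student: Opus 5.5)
The plan is to combine the polynomial kernels with the tree-decomposition dynamic program of Theorem~\ref{thm:tw-dp-general}. The kernel gives a small vertex cover, hence small treewidth, and the running time of the dynamic program is then a function of $r$ and $\gamma$ only.

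For \textsc{ECGP}, we first run the kernelization of Theorem~\ref{thm:kernel_ru} in polynomial time. If a rule answers YES, we stop. Otherwise we obtain an equivalent instance $(G',r',\gamma)$ with $r'\le r$ and $|V(G')|=\mathcal{O}(r\gamma^2)$. In the proof of Theorem~\ref{thm:kernel_ru}, the reduced graph has a vertex cover $C$ formed by the endpoints of a maximal matching, with $|C|<2r\gamma$. A vertex cover of size $c$ yields a path decomposition of width at most $c$: take one bag $C\cup\{v\}$ for each $v\in I$ and arrange these bags along a path. Every edge is covered because $I$ is independent, and every vertex of $C$ lies in all bags, so the connectivity condition holds. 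We convert this into a nice tree decomposition of width $k\le 2r\gamma$ with $\mathcal{O}(k\cdot|V(G')|)$ nodes, introducing the edges at appropriate nodes; this conversion is polynomial in the kernel size. Plugging into Theorem~\ref{thm:tw-dp-general} gives
\[
\mathcal{O}\bigl(|V(G')|\cdot r^{k+1}(\gamma+1)^{2r}\bigr)=\mathcal{O}\bigl(r\gamma^2\cdot r^{2r\gamma}(\gamma+1)^{2r}\bigr)
\]
up to absorbing the extra factor $r$ from $r^{k+1}$, and the node-count factor $k$, into the constants. This yields the first bound.

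For \textsc{BECGP}, we apply Theorem~\ref{thm:becgp-kernel} to get an instance with $N=\mathcal{O}(r\gamma^2+r^2)$ vertices. Again $C$ is a vertex cover with $|C|<2r\gamma$ (Step~1 there), and the later steps only delete vertices of $I$, so $C$ stays a vertex cover. Hence the same path decomposition has width at most $2r\gamma$. We then use the balanced extension of the dynamic program described after Theorem~\ref{thm:tw-dp-general}. For each part we additionally store the number of vertices assigned so far, a value in $\{0,\dots,N/r\}\subseteq\{0,\dots,N\}$. This multiplies the number of states by at most $(N+1)^r$, and the join step (which adds size vectors, subtracting bag vertices counted twice) costs at most $(N+1)^{2r}$. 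At the root we require every size to equal $N/r$ and every $\alpha_i=\gamma$. The resulting running time is
\[
\mathcal{O}\bigl(N\cdot r^{2r\gamma}(\gamma+1)^{2r}(N+1)^{2r}\bigr),
\]
which is the claimed bound after substituting $N=r\gamma^2+r^2$.

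The main obstacle is making sure the balanced dynamic program is stated correctly, since the paper only sketches it. I would specify precisely that a vertex is counted in its part's size at the introduce-vertex node, and that at a join node the counts of the bag vertices of each part (the number of $v\in B_t$ with $\chi(v)=i$) are subtracted once, exactly as $d_\chi$ handles bag edges. A secondary check is that the kernel's trivial YES/NO outputs and the small-instance cases, such as $B<\gamma+1$ where the instance is already small, are covered by the additive $n^{\mathcal{O}(1)}$ term.
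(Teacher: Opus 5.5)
Your route is the paper's: kernelize, use the endpoint set $C$ of a maximal matching (with $|C|<2r\gamma$) as a vertex cover of the kernel to bound its treewidth, and run the dynamic program of Theorem~\ref{thm:tw-dp-general}, extended by per-part size counters for \textsc{BECGP}. The step that fails as written is the last one for \textsc{ECGP}. From your stated width bound $k\le 2r\gamma$ you only get $r^{k+1}\le r^{2r\gamma+1}$. Neither the surplus factor $r$ nor the node-count factor $k=\mathcal{O}(r\gamma)$ can be ``absorbed into the constants'': $r$ and $\gamma$ are parameters, and $r^{2r\gamma+1}\notin\mathcal{O}(r^{2r\gamma})$.

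The repair is the one the paper makes. You already have $|C|<2r\gamma$, so $k\le |C|\le 2r\gamma-1$ (in fact $|C|=2|M|\le 2r\gamma-2$). Hence $k+1\le 2r\gamma$ and $r^{k+1}\le r^{2r\gamma}$, with nothing to absorb. For the number of nodes, invoke Theorem~\ref{thm:tw-dp-general} as stated, since its bound is already linear in the number of vertices. If you want to count nodes yourself, note that your decomposition is a path decomposition, so no join node incurs the $(\gamma+1)^{2r}$ cost. The resulting slack, together with $k+1\le 2r\gamma-1$, pays for the extra nodes. Your \textsc{BECGP} bound tacitly uses $r^{k+1}\le r^{2r\gamma}$. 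That contradicts your own `width at most $2r\gamma$', but it is correct with the strict bound.

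A smaller point, which the paper's proof shares: the \textsc{BECGP} kernel gives only $N<2r\gamma^2+3r\gamma+r^2$, not $N\le r\gamma^2+r^2$. A constant factor inside the base of $(N+1)^{2r}$ becomes a factor $c^{2r}$ that $\mathcal{O}(\cdot)$ does not hide, so `substituting $N=r\gamma^2+r^2$' is not justified. Bounding each size counter by $B=N/r$ instead of by $N$ fixes this: $B+1\le 2\gamma^2+3\gamma+r\le r\gamma^2+r^2+1$ whenever $r\ge 3$. For $r\le 2$, the ratio of the bases is bounded by a constant and is raised to at most the fourth power. Your explicit description of the balanced dynamic program is correct and more precise than the paper's one-line sketch: count a vertex at its introduce node, and subtract the per-part counts of bag vertices at join nodes.
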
  
\begin{proof} 
We first apply the corresponding kernelization algorithm. For \textsc{ECGP}, this produces an equivalent instance $(G',r,\gamma)$ with $\mathcal{O}(r\gamma^2)$ vertices, whereas for \textsc{BECGP} it produces an equivalent instance with $\mathcal{O}(r\gamma^2+r^2)$ vertices. 
The time required for kernelization is polynomial in $n$. 
In both kernelization algorithms, if the input graph contains a matching of size at least $r\gamma$, the instance is decided directly. We may therefore assume that $G'$ contains no matching of size $r\gamma$. Let $M$ be a maximal matching of $G'$, and let $C$ be the set of endpoints of the edges of $M$. Then $|M|<r\gamma$, and hence $|C|=2|M|<2r\gamma$. Since $M$ is maximal, $C$ is a vertex cover of $G'$. Consequently, $\tw(G')\le |C|<2r\gamma$.

For \textsc{ECGP}, the dynamic programming algorithm of Theorem~\ref{thm:tw-dp-general} solves an instance with $N$ vertices and a nice tree decomposition of width $k$ in time $\mathcal{O}\left( N\cdot r^{k+1}\cdot (\gamma+1)^{2r} \right)$.  Since $k<2r\gamma$ and both quantities are integers, we have $k+1\le 2r\gamma$. Substituting $N=\mathcal{O}(r\gamma^2)$ therefore gives a running time of  \[n^{\mathcal{O}(1)} + \mathcal{O}\left( r\gamma^2\cdot r^{2r\gamma}\cdot (\gamma+1)^{2r} \right).\]  

For \textsc{BECGP}, let $B:=N/r$ denote the required size of every part. We extend each dynamic programming state by a vector $\beta=(\beta_1,\ldots,\beta_r)\in\{0,\ldots,B\}^r$, where $\beta_i$ records the number of vertices assigned to part $i$ in the processed subgraph. 
At the root, we accept only states satisfying $\beta_i=B$ for every $i\in[r]$. The resulting dynamic programming algorithm runs in time  $\mathcal{O}\left( N\cdot r^{k+1}\cdot (\gamma+1)^{2r}\cdot (B+1)^{2r} \right)$.  The factor $(B+1)^{2r}$ accounts for combining pairs of size vectors at join nodes. For the reduced \textsc{BECGP} instance, we have $N=\mathcal{O}(r\gamma^2+r^2)$ and $B\le N$. Together with $k+1\le 2r\gamma$, this gives the running time \[ n^{\mathcal{O}(1)} + \mathcal{O}\left( (r\gamma^2+r^2)\cdot r^{2r\gamma}\cdot (\gamma+1)^{2r} \cdot (u^2+r+1)^{2r} \right). \] \end{proof}

\section{Hardness Results}

In this section, we establish hardness results for \textsc{ECGP} and \textsc{BECGP}. We use parameterized reductions from known problems, including \textsc{Tight Unary Bin Packing}, as described below.

\begin{center}
\fbox{%
\parbox{0.97\linewidth}{%
\textsc{Tight Unary Bin Packing}

\medskip

\textbf{Input:} Positive integers $a_1,\ldots,a_n$, $B$, and $r$, where all integers are given in unary and
$\sum_{i=1}^{n} a_i = rB$.

\textbf{Question:} Can the items be partitioned into $r$ bins such that the total weight of the items assigned to each bin is exactly $B$?
}}
\end{center}

Jansen et al. \cite{10.1007/978-3-642-13731-0_25} proved that the \textsc{Tight Unary Bin Packing} is W[1]-hard when parameterized by number of bins, that is, $r$. 
For a graph class $\mathcal{H}$, let $\mathsf{vdd}_{\mathcal{H}}(G)$ denote the minimum number of vertices whose removal transforms $G$ into a graph belonging to $\mathcal{H}$.
We next present a general reduction framework that yields W[1]-hardness for \textsc{ECGP} under a broad class of graph families.

\begin{theorem}
\label{thm:general-hardness-class}
Let $\mathcal{H}$ be a graph class such that:
\begin{enumerate}
    \item $\mathcal{H}$ is closed under disjoint union, and
    \item for every integer $q \ge 1$, one can construct in polynomial time a connected graph $H_q \in \mathcal{H}$ with exactly $q$ edges.
\end{enumerate}
Then \textsc{ECGP} is W[1]-hard when parameterized by $r+\mathrm{vdd}_{\mathcal{H}}$, even when $\mathrm{vdd}_{\mathcal{H}}=0$.
\end{theorem}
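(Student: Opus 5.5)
We reduce from \textsc{Tight Unary Bin Packing}, which is W[1]-hard parameterized by the number of bins $r$~\cite{10.1007/978-3-642-13731-0_25}. Given items $a_1,\ldots,a_n$, capacity $B$ and $r$ bins with $\sum_i a_i = rB$, we construct $G$ as the disjoint union of the connected graphs $H_{a_1},\ldots,H_{a_n}$ from assumption~2. Each $H_{a_i}$ lies in $\mathcal{H}$ and has exactly $a_i$ edges. By assumption~1, $G \in \mathcal{H}$, so $\mathrm{vdd}_{\mathcal{H}}(G)=0$. We output the \textsc{ECGP} instance $(G,r,\gamma)$ with $\gamma := B$. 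Since the numbers are given in unary, each $H_{a_i}$ has polynomial size and the construction runs in polynomial time. The new parameter $r + \mathrm{vdd}_{\mathcal{H}} = r$ depends only on the old parameter, so this is a parameterized reduction.

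\emph{Forward direction.} Take a packing $I_1,\ldots,I_r$ with $\sum_{i\in I_j} a_i = B$ for every $j$. Set $V_j := \bigcup_{i\in I_j} V(H_{a_i})$. Then $G[V_j]$ contains all edges of the copies in bin $j$, so $|E(G[V_j])| = B = \gamma$.

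\emph{Backward direction.} This is the main step. Let $V_1,\ldots,V_r$ be a feasible partition. Because the copies $H_{a_i}$ are pairwise disjoint with no edges between them, every edge of $G$ lies inside exactly one copy. For each part we have $|E(G[V_j])| = \sum_i |E(G[V_j\cap V(H_{a_i})])|$. Summing over $j$, each edge is counted at most once, so
\[
rB = r\gamma \le \sum_{j=1}^r |E(G[V_j])| \le |E(G)| = \sum_i a_i = rB .
\]
Hence every inequality is tight. Each part has exactly $B$ edges, and every edge of $G$ is induced by some part, i.e., no edge is cut.

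Now fix a copy $H_{a_i}$. Suppose its vertices were split among two or more parts. Since $H_{a_i}$ is connected, some edge of $H_{a_i}$ would join two different parts and would be cut, a contradiction. So each copy lies entirely inside a single part, giving a well-defined assignment of item $i$ to a bin $j$. Bin $j$ then receives total weight $|E(G[V_j])| = B$, which is a valid tight packing.

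The only subtle points are the tightness argument, which shows no edge is cut, and the use of connectivity; the latter is exactly why assumption~2 requires $H_q$ to be connected. The corner case $a_i\ge 1$ is covered because the items are positive, so $H_{a_i}$ exists.
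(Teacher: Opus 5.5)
Your proposal is correct and follows the paper's proof essentially step for step. It uses the same reduction from \textsc{Tight Unary Bin Packing} with $\gamma := B$ and the same edge-counting tightness argument showing that no edge is cut. It also uses connectivity in the same way to force each gadget $H_{a_i}$ into a single part.
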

\begin{proof}
We give a parameterized reduction from the tight version of \textsc{Unary Bin Packing}, parameterized by the number $r$ of bins. An instance consists of positive integers $a_1,\dots,a_n$, a bin capacity $B$, and an integer $r$, and asks whether the items can be partitioned into $r$ bins such that the total size in every bin is at most $B$. We may assume without loss of generality that the instance is tight, that is, $\sum_{i=1}^n a_i = rB$.

Given such an instance, we construct an instance $(G,r,\gamma)$ of \textsc{ECGP} as follows. Set $\gamma := B$. For every item $a_i$, construct the connected graph $H_{a_i} \in \mathcal{H}$ having exactly $a_i$ edges, and let $G$ be the disjoint union of the graphs $H_{a_1},\dots,H_{a_n}$.

We claim that the \textsc{Unary Bin Packing} instance is a YES-instance if and only if $(G,r,\gamma)$ is a YES-instance.

\medskip
\noindent
\emph{Forward direction.}
Suppose that the items can be packed into $r$ bins $X_1,\dots,X_r$ such that for every $j \in [r]$, $\sum_{i \in X_j} a_i \le B$. Since the instance is tight, we in fact have $\sum_{i \in X_j} a_i = B$ for every $j \in [r]$.

For each bin $X_j$, let $V_j$ be the union of the vertex sets of the graphs $H_{a_i}$ with $i \in X_j$. Then $V_1,\dots,V_r$ is a partition of $V(G)$. Moreover, $G[V_j]$ contains exactly $\sum_{i \in X_j} a_i = B = \gamma$ edges. Hence each part induces at least $\gamma$ edges, and therefore $(G,r,\gamma)$ is a YES-instance.

\medskip
\noindent
\emph{Reverse direction.}
Suppose that $(G,r,\gamma)$ is a YES-instance, and let $V_1,\dots,V_r$ be a partition of $V(G)$ such that each $G[V_j]$ induces at least $\gamma=B$ edges.

Since $G$ is the disjoint union of the connected graphs $H_{a_1},\dots,H_{a_n}$, every edge of $G$ belongs to exactly one such graph. Furthermore,
$|E(G)|=\sum_{i=1}^n a_i = rB = r\gamma$.
Hence the total number of edges available in $G$ is exactly the minimum total number of edges required by the $r$ parts. It follows that every part must induce exactly $B$ edges, and no edge can be lost.

Now consider one connected gadget $H_{a_i}$. If its vertex set is split between two different parts, then since $H_{a_i}$ is connected, there exists an edge of $H_{a_i}$ with endpoints in different parts. Such an edge is not counted in any induced subgraph $G[V_j]$, contradicting the fact that no edge can be lost. Therefore each graph $H_{a_i}$ is entirely contained in a single part.
Consequently, each part corresponds to a collection of whole gadgets, and the number of edges in that part is exactly the sum of the corresponding item sizes. Since every part induces exactly $B$ edges, these collections define a packing of the items into $r$ bins of capacity $B$. Thus the \textsc{Unary Bin Packing} instance is a YES-instance.

\medskip

This proves correctness of the reduction. Since $\mathcal{H}$ is closed under disjoint union and each $H_{a_i}$ belongs to $\mathcal{H}$, the graph $G$ itself belongs to $\mathcal{H}$. Hence $\mathrm{vdd}_{\mathcal{H}}(G)=0$. Moreover, the reduction runs in polynomial time and maps the parameter $r$ to $r+\mathrm{vdd}_{\mathcal{H}}(G)=r$. Therefore this is a parameterized reduction. Since tight \textsc{Unary Bin Packing} is W[1]-hard parameterized by $r$, it follows that \textsc{ECGP} is W[1]-hard when parameterized by $r+\mathrm{vdd}_{\mathcal{H}}$, even when $\mathrm{vdd}_{\mathcal{H}}=0$.
\end{proof}

\noindent
Applying the theorem to disjoint unions of stars and paths respectively yields the following.

\begin{corollary}
\label{cor:vds-hard}
\textsc{ECGP} is W[1]-hard when parameterized by $r+\mathsf{fes}+\mathsf{vds}+\mathsf{mw}$, even when $\mathsf{fes}=0$, $\mathsf{vds}=0$, and $\mathsf{mw}=2$.
\end{corollary}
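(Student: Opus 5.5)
We will instantiate Theorem~\ref{thm:general-hardness-class} with $\mathcal{H}$ the class of all disjoint unions of stars. This class is closed under disjoint union. For every $q\ge 1$ the star $K_{1,q}$ is connected, has exactly $q$ edges, and can be built in polynomial time; since the integers of \textsc{Tight Unary Bin Packing} are given in unary, this is polynomial in the input size. The theorem then yields W[1]-hardness parameterized by $r+\vds$ even when $\vds=0$. Concretely, the produced graph is $G=\bigcup_{i=1}^n K_{1,a_i}$, $\gamma=B$, with $r$ unchanged. It remains to show that on this particular $G$ the other two parameters are also constant, namely $\fes(G)=0$ and $\mw(G)\le 2$. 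Then $r+\fes+\vds+\mw\le r+2$ and the reduction remains a parameterized reduction.

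For $\fes$: every star is a tree, so $G$ is a forest and the empty edge set is a feedback edge set, giving $\fes(G)=0$.

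For modular width, we use the standard definition: the width of an algebraic expression built from single vertices by disjoint union, complete join, and substitution into a template (prime) graph, measured by the maximum number of operands of a substitution. Equivalently, it is the maximum size of a prime quotient in the modular decomposition, where parallel and series nodes contribute $2$. We decompose $G$ as follows. $G$ is the disjoint union (parallel node) of the stars. Each star $K_{1,q}$ is the join (series node) of the single center vertex with the independent set of $q$ leaves. That independent set is a parallel composition of single vertices. Hence $G$ is a cograph whose modular decomposition uses only series and parallel nodes, and so $\mw(G)\le 2$.

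The only delicate point is this modular width step. We must make sure the paper's convention counts binary union and join as width $2$ rather than counting the full arity of a parallel node, which would give width $n$. If the convention counts arity of parallel nodes, we instead note that $n$-ary unions can be binarized, and with that the stated value $\mw=2$ matches. All other steps are immediate from Theorem~\ref{thm:general-hardness-class}.
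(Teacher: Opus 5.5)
Your proposal is correct and matches the paper. The paper obtains this corollary by instantiating Theorem~\ref{thm:general-hardness-class} with $\mathcal{H}$ the class of disjoint unions of stars, using $K_{1,q}$ as the gadget. Your explicit checks that the constructed graph is a forest ($\fes=0$) and a cograph ($\mw\le 2$, under the standard convention where union and join count as width $2$) supply details the paper leaves implicit.
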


\begin{corollary}
\label{cor:vdp-hard}
\textsc{ECGP} is W[1]-hard when parameterized by $r+\mathsf{fes}+\mathsf{vdp}+\Delta$, even when $\mathsf{fes}=0$, $\Delta=2$, and $\mathsf{vdp}=0$.
\end{corollary}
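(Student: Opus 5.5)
This follows directly from Theorem~\ref{thm:general-hardness-class}, applied with $\mathcal{H}$ the class of disjoint unions of paths. I will check its two hypotheses and then read off the remaining parameters from the instance the reduction produces.

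First, disjoint unions of paths are trivially closed under disjoint union. Second, for every $q\ge 1$ the path $P_{q+1}$ on $q+1$ vertices is connected, has exactly $q$ edges, and can be written down in time polynomial in $q$. Since the items $a_i$ of \textsc{Tight Unary Bin Packing} are given in unary, the whole graph $G=\bigcup_i P_{a_i+1}$ is constructed in polynomial time. Theorem~\ref{thm:general-hardness-class} therefore gives W[1]-hardness of \textsc{ECGP} parameterized by $r+\vdp$, even when $\vdp=0$.

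It remains to bound the other two parameters on the constructed graph $G$. $G$ is a forest, so the empty edge set is a feedback edge set and $\fes(G)=0$. Every vertex of a path has degree at most $2$, so $\Delta(G)\le 2$; equality holds as soon as some $a_i\ge 2$. If every $a_i=1$, the Bin Packing instance is trivially solvable, so we may assume some $a_i\ge 2$, or simply read the statement as $\Delta\le 2$. Since all three structural parameters are constants on the produced instances, the reduction maps the parameter $r$ to $r+\fes+\vdp+\Delta=r+O(1)$. This is still a parameterized reduction, and the corollary follows.

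I do not expect a real obstacle here. The only points needing care are that the path length is polynomial in the unary input size, and the degenerate case of maximum degree below $2$, both handled above.
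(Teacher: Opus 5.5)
Your proposal is correct and takes the same approach as the paper, which proves the corollary by instantiating Theorem~\ref{thm:general-hardness-class} with $\mathcal{H}$ the class of disjoint unions of paths. Your explicit checks that the constructed forest has $\fes=0$ and maximum degree $2$, and your handling of the degenerate all-ones instance, fill in details the paper leaves implicit.
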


\noindent
We now prove W[1]-hardness for \textsc{ECGP} parameterized by $r+\mathsf{cvd}$. Since cliques have a fixed number of edges, namely $\binom{b}{2}$, they cannot directly represent arbitrary item sizes. To overcome this, we use the following decomposition, which allows us to express any integer as a clique contribution plus a small correction term.

\begin{lemma}\label{lem:number decomposition}
\label{lem:triangular-decomposition}
For every integer $a \ge 1$, there exist unique integers $b \ge 1$ and $x$ such that
$a=\binom{b}{2}+x$ and $0 \le x < b$.
\end{lemma}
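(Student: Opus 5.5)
We take $b$ to be the largest positive integer with $\binom{b}{2}\le a$ and set $x:=a-\binom{b}{2}$; existence follows from this choice, and uniqueness from the fact that the intervals $\bigl[\binom{b}{2},\binom{b}{2}+b\bigr)$ are pairwise disjoint.

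\emph{Existence.} The sequence $\binom{b}{2}$, $b=1,2,\dots$, starts at $0$, is strictly increasing and unbounded. Since $a\ge 1$, the set $\{b\ge 1:\binom{b}{2}\le a\}$ contains $b=1$ and is finite, so its maximum $b$ exists. Then $x=a-\binom{b}{2}\ge 0$. By maximality, $\binom{b+1}{2}>a$. Using Pascal's identity $\binom{b+1}{2}=\binom{b}{2}+b$, this gives $x=a-\binom{b}{2}<b$.

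\emph{Uniqueness.} Define $I_b:=\bigl\{\binom{b}{2},\ldots,\binom{b}{2}+b-1\bigr\}$ for $b\ge 1$. The representation $a=\binom{b}{2}+x$ with $0\le x<b$ holds exactly when $a\in I_b$, and then $x$ is determined by $b$. By Pascal's identity, the last element of $I_b$ is $\binom{b+1}{2}-1$, one less than the first element of $I_{b+1}$. So the sets $I_1,I_2,\dots$ are consecutive, pairwise disjoint blocks of integers, and together they cover $\Z_{\ge 0}$. Hence each $a\ge 1$ lies in exactly one $I_b$, which fixes $b$ and therefore $x$.

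The argument uses nothing beyond Pascal's identity. The one detail to watch is the edge case $a=1$: here $b=2$ and $x=0$, since $I_1=\{0\}$ does not contain $1$. It also follows that $b\ge 2$ for every $a\ge 1$, which may be useful later, since the clique gadget built from this decomposition then has at least one edge.
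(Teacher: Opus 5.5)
Your proof is correct and follows essentially the same route as the paper: choose the unique $b$ with $\binom{b}{2}\le a<\binom{b+1}{2}$, use $\binom{b+1}{2}-\binom{b}{2}=b$ to get $0\le x<b$, and obtain uniqueness because the intervals $\bigl[\binom{b}{2},\binom{b+1}{2}\bigr)$ are pairwise disjoint. You make that disjointness more explicit than the paper does, by describing the intervals as consecutive blocks covering $\mathbb{Z}_{\ge 0}$, and your extra observation that $b\ge 2$ whenever $a\ge 1$ is also correct.
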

\begin{proof}
Let $T_b:=\binom{b}{2}$. Since $(T_b)_{b \ge 1}$ is strictly increasing and unbounded, there exists a unique integer $b \ge 1$ such that
$T_b \le a < T_{b+1}$.
Set $x:=a-T_b$. Then $x \ge 0$, and
$x < T_{b+1}-T_b = b$.
Hence $a=\binom{b}{2}+x$ with $0 \le x < b$.

For uniqueness, suppose
$a=\binom{b}{2}+x=\binom{b'}{2}+x'$
with $0 \le x < b$ and $0 \le x' < b'$. Then
$\binom{b}{2} \le a < \binom{b+1}{2}$ and
$\binom{b'}{2} \le a < \binom{b'+1}{2}$,
which implies $b=b'$, and therefore $x=x'$.
\end{proof}

\begin{theorem}
\label{thm:k-cvd-hard}
\textsc{ECGP} is W[1]-hard when parameterized by $r+\mathsf{cvd}$.
\end{theorem}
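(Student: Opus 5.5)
The plan is a parameterized reduction from \textsc{Tight Unary Bin Packing} parameterized by $r$, which is W[1]-hard by Jansen et al.\ \cite{10.1007/978-3-642-13731-0_25}. It mimics Theorem~\ref{thm:general-hardness-class}, but uses a cluster graph plus a modulator whose size depends only on $r$. The difficulty is that a clique on $b$ vertices has $\binom{b}{2}$ edges, so cliques alone cannot encode arbitrary item sizes. Lemma~\ref{lem:triangular-decomposition} handles this: write each item as $a_i=\binom{b_i}{2}+x_i$ with $0\le x_i<b_i$.

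\emph{Construction.} For each item $i$, create a clique $K^{(i)}$ on $b_i$ vertices and mark $x_i$ of its vertices as \emph{connector} vertices. Add a modulator $U=\{u_1,\dots,u_r\}$, one vertex per bin, and join every connector vertex to every vertex of $U$. Set $\gamma:=B$, possibly after shifting all item sizes and $B$ by padding terms as described below. Then $G-U$ is a disjoint union of cliques, so $\mathsf{cvd}(G)\le r$, and the new parameter is $r+\mathsf{cvd}\le 2r$.

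\emph{Forward direction.} Given a packing $X_1,\dots,X_r$, let part $j$ consist of $u_j$ together with all cliques $K^{(i)}$ with $i\in X_j$. Each such item contributes $\binom{b_i}{2}+x_i=a_i$ induced edges: the clique edges plus the edges from its connectors to $u_j$. Since $U$ is independent, part $j$ induces exactly $B$ edges.

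\emph{Reverse direction.} I would use a tightness count as in Theorem~\ref{thm:general-hardness-class}. Let $t_j:=|V_j\cap U|$. An item whose clique lies entirely in part $j$ contributes exactly $\binom{b_i}{2}+t_j x_i$ edges, and splitting a clique strictly loses edges. The total requirement is $rB=\sum_i a_i$. If every part has $t_j=1$, the count is exactly tight. Then every clique is whole, and part $j$ induces $\sum_{i\in V_j}a_i\ge B$; since the sum over all parts is $rB$, this is an equality, which gives the packing.

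\emph{Main obstacle.} The remaining case is a part containing two or more vertices of $U$, which could profit by counting connector edges more than once while another part receives no $u_j$. I would rule this out with padding. First, scale all $a_i$ and $B$ by a large factor $M$ and add a constant so that every $x_i$ is comparable to $b_i$ and no item has $x_i=0$; the unary encoding keeps this polynomial. Second, attach to each $u_j$ a private clique $Q_j$ of size $D$ placed completely adjacent to $u_j$, and raise $\gamma$ by $\binom{D+1}{2}$. With $D$ huge relative to $\sum_i a_i$, a part can reach $\gamma$ only by containing essentially a whole $Q_j\cup\{u_j\}$. Since there are exactly $r$ such gadgets and $r$ parts, each part receives exactly one gadget and hence exactly one $u_j$, and the tight count above finishes the argument. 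I expect the delicate step to be choosing $D$ and $M$ so that every alternative is strictly worse, including splitting a $Q_j$ between two parts, merging two gadgets into one part, or placing a $u_j$ away from its own $Q_j$. If this padding argument fails, I would fall back to replacing the private cliques by a gadget in which the vertices of $U$ are pairwise adjacent with weights tuned to penalize co-location.
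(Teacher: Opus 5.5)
Your reduction is essentially the paper's: the triangular decomposition $a_i=\binom{b_i}{2}+x_i$, $r$ bin vertices joined to the $x_i$ connector vertices of each item clique, a heavy private gadget at each bin vertex, and a tight count showing that exactly $(r-1)\sum_i x_i$ edges are lost. The paper's gadget is different and simpler. It attaches $\alpha=1+\sum_i\binom{b_i}{2}$ pendant leaves to each bin vertex and sets $\gamma=\alpha+B$. A part without a bin vertex then has isolated leaves and at most $\alpha-1<\gamma$ edges, so pigeonhole immediately gives exactly one bin vertex per part, with no analysis of split gadgets.

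With your clique gadget $Q_j$, the step you call delicate is left open, and the local argument you sketch does not close it. A part consisting of all of $Q_1$ (without $u_1$) together with roughly $\sqrt{2(B+D)}$ vertices of $Q_2$ already reaches $\gamma$ and contains no vertex of $U$.

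What closes it is the global slack you already use at the end. With $\gamma=B+\binom{D+1}{2}$ you have $|E(G)|=r\gamma+(r-1)\sum_i x_i$. So every feasible partition loses at most $L_0:=(r-1)\sum_i x_i\le (r-1)rB$ edges. Separating the $(D+1)$-clique $Q_j\cup\{u_j\}$, however, loses at least $D$ edges. Taking, e.g., $D:=r^2B$ therefore keeps every gadget whole. A part with no gadget then has at most $\sum_i\binom{b_i}{2}\le rB<\gamma$ edges. Hence each part contains exactly one gadget and exactly one $u_j$, and your tight count finishes the proof.

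Finally, drop the first padding step. Scaling by $M$ is harmless but useless, and $x_i=0$ causes no difficulty. Adding a constant to every item does not preserve the bin packing answer, since different bins may hold different numbers of items.
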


\begin{proof}
We give a parameterized reduction from the  \textsc{Tight Unary Bin Packing}, parameterized by the number $r$ of bins. 
By Lemma~\ref{lem:triangular-decomposition}, for every $i \in [n]$ there exist unique integers $b_i \ge 1$ and $x_i$ such that
$a_i=\binom{b_i}{2}+x_i$ and $0 \le x_i < b_i$.

\medskip
\noindent
\emph{Construction.}
Create $r$ vertices $z_1,\dots,z_r$, called \emph{bin vertices}. For every $j \in [r]$, attach exactly $\alpha$ pendant leaves to $z_j$, where
$\alpha := 1+\sum_{i=1}^n \binom{b_i}{2}$.
For every item $a_i$, create a clique $C_i$ on $b_i$ vertices. Choose an arbitrary set $D_i \subseteq V(C_i)$ of size $x_i$, and make every vertex of $D_i$ adjacent to all bin vertices $z_1,\dots,z_r$.
Set the number of parts to be $r$, and set the edge threshold to $\gamma:=\alpha+B$.
We claim that the \textsc{Unary Bin Packing} instance is a YES-instance if and only if $(G,r,\gamma)$ is a YES-instance.

\medskip
\noindent
\emph{Forward direction.}
Suppose that the items can be packed into $r$ bins $X_1,\dots,X_r$ such that $\sum_{i \in X_j} a_i \le B$ for every $j \in [r]$. Since the instance is tight, we have $\sum_{i \in X_j} a_i = B$ for every $j \in [r]$.

For every $j \in [r]$, let $V_j$ consist of:
\begin{itemize}
    \item the bin vertex $z_j$,
    \item all $\alpha$ pendant leaves adjacent to $z_j$, and
    \item all vertices of the cliques $C_i$ with $i \in X_j$.
\end{itemize}
Then $V_1,\dots,V_r$ is a partition of $V(G)$, and
$|E(G[V_j])|
=
\alpha + \sum_{i \in X_j}\left(\binom{b_i}{2}+x_i\right)
=
\alpha + \sum_{i \in X_j} a_i
=
\alpha+B
=
\gamma$
for every $j \in [r]$. Hence $(G,r,\gamma)$ is a YES-instance.

\medskip
\noindent
\emph{Reverse direction.}
Suppose that $(G,r,\gamma)$ is a YES-instance, and let $V_1,\dots,V_r$ be a partition of $V(G)$ such that each $G[V_j]$ induces at least $\gamma=\alpha+B$ edges.

We first show that every part contains exactly one bin vertex. Any set containing no bin vertex induces edges only inside the cliques $C_1,\dots,C_n$, and therefore induces at most $\sum_{i=1}^n \binom{b_i}{2} = \alpha-1$ edges. Since $\alpha-1<\gamma$, every part must contain at least one bin vertex. As there are exactly $r$ parts and exactly $r$ bin vertices, every part contains exactly one bin vertex.
Next, compute the total number of edges of $G$:

\begin{align*}
|E(G)|
&= r\alpha + \sum_{i=1}^n \left(\binom{b_i}{2}+rx_i\right)
 = r\alpha + \sum_{i=1}^n a_i + (r-1)\sum_{i=1}^n x_i 
 = r\gamma + (r-1)\sum_{i=1}^n x_i.
\end{align*}

Thus the graph contains exactly $(r-1)\sum_{i=1}^n x_i$ edges more than the minimum total number $r\gamma$ required by the $r$ parts.

For every $i \in [n]$ and every vertex $v \in D_i$, the vertex $v$ is adjacent to all $r$ bin vertices, while every part contains exactly one bin vertex. Therefore at most one of these $r$ edges can be internal to the part containing $v$, and at least $r-1$ of them are lost. Summing over all distinguished vertices, every feasible partition loses at least
$L_0 := (r-1)\sum_{i=1}^n x_i$
edges.

Since $|E(G)|=r\gamma+L_0$ and every feasible partition must realize at least $ru$ internal edges, no feasible partition can lose more than $L_0$ edges. Hence every feasible partition loses exactly $L_0$ edges, and all lost edges are precisely the unavoidable clique--bin edges incident with vertices in $D_1,\dots,D_n$.
We now show that no clique can be split.

\begin{claim}\label{claim:one bin vertex per cluster}
For every $j \in [r]$, all pendant leaves adjacent to $z_j$ lie in the same part as $z_j$.
\end{claim}
\begin{claimproof}
Let $v$ be a pendant leaf adjacent to $z_j$. Suppose $v$ lies in a part different from the one containing $z_j$. Then $v$ has no neighbors inside its part.

Moving $v$ to the part containing $z_j$ strictly increases the number of edges inside that part and does not decrease the number of edges in any other part. Hence we may assume that $v$ lies in the same part as $z_j$.
\end{claimproof}

\begin{claim}\label{Claim: cliques do not split}
For every $i \in [n]$, the clique $C_i$ is entirely contained in one part.
\end{claim}
\begin{claimproof}
Recall that every feasible partition loses exactly 
$L_0 = (r-1)\sum_{i=1}^n x_i$ edges, and these correspond precisely to the unavoidable edges between vertices of $D_1,\dots,D_n$ and bin vertices.

Consider a clique $C_i$. If $C_i$ is entirely assigned to the part containing some bin vertex $z_j$, then each vertex of $D_i$ contributes exactly $r-1$ lost edges (to the other bin vertices), and no edges inside $C_i$ are lost. Thus the total contribution of $C_i$ to the number of lost edges is exactly $(r-1)x_i$, which matches its share in $L_0$.

Suppose now that $C_i$ is split across at least two parts. Then, since $C_i$ is a clique, there exists at least one edge of $C_i$ whose endpoints lie in different parts. Such an edge is not counted in any part and is therefore lost. This loss is in addition to the unavoidable $(r-1)x_i$ edges contributed by the vertices in $D_i$.

Hence the total number of lost edges would be strictly greater than $L_0$, contradicting the fact that every feasible partition loses exactly $L_0$ edges.
Therefore, $C_i$ must be entirely contained in one part.
\end{claimproof}

By Claims~\ref{claim:one bin vertex per cluster} and \ref{Claim: cliques do not split}, every part consists of exactly one bin vertex, all its pendant leaves, and some whole item cliques. For every $j \in [r]$, let $I_j$ be the set of indices $i$ such that the clique $C_i$ lies in the same part as $z_j$. Then
\[
|E(G[V_j])|
=
\alpha + \sum_{i \in I_j}\left(\binom{b_i}{2}+x_i\right)
=
\alpha + \sum_{i \in I_j} a_i.
\]
Since the total number of induced edges over all parts is exactly $r\gamma$, every part must induce exactly $\gamma=\alpha+B$ edges. Therefore
$\sum_{i \in I_j} a_i = B
\qquad\text{for every } j \in [r]$.
Thus $I_1,\dots,I_r$ define a valid packing of the items into $r$ bins of capacity $B$.
This proves correctness of the reduction.

Finally, after deleting the $r$ bin vertices $z_1,\dots,z_r$, the remaining graph is a disjoint union of cliques, namely the cliques $C_i$ together with isolated vertices coming from the pendant leaves. Hence $\mathsf{cvd}(G)\le r$.
Thus $\mathsf{cvd}(G)=r$, and the parameter value of the constructed instance is exactly $2r$. The reduction is parameter preserving, so \textsc{ECGP} is W[1]-hard when parameterized by $r+\mathsf{cvd}$.
\end{proof}

\begin{theorem}\label{thm:uccp-to-buccp}
Let $\Pi$ be a graph parameter such that adding isolated vertices does not increase $\Pi$. Then there is a parameterized reduction from \textsc{ECGP} parameterized by $r+\Pi$ to \textsc{BECGP} parameterized by $r+\Pi$.  
\end{theorem}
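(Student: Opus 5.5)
Given an instance $(G,r,\gamma)$ of \textsc{ECGP} with $n:=|V(G)|$, we construct $G'$ by adding $(r-1)n$ isolated vertices to $G$. The output is the \textsc{BECGP} instance $(G',r,\gamma)$. Then $|V(G')|=rn$, so $r$ divides $|V(G')|$ and every part must have size exactly $n$. The parameter $r$ is unchanged. By the hypothesis on $\Pi$, we have $\Pi(G')\le\Pi(G)$, and the construction is clearly polynomial. So the reduction is parameterized provided it preserves the answer. (If $r>n$, the \textsc{ECGP} instance with $\gamma\ge 1$ is trivially NO, since some part would be empty; we can output a fixed NO-instance. The case $\gamma\le 0$ is trivially YES for both problems whenever $r\le n$.)

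\emph{Forward direction.} Let $V_1,\dots,V_r$ be a feasible \textsc{ECGP} partition of $G$. Each part is nonempty when $\gamma\ge1$, and $|V_i|\le n$ holds trivially. Since $\sum_i (n-|V_i|)=rn-n=(r-1)n$, we can pad each $V_i$ with exactly $n-|V_i|$ of the new isolated vertices, using every new vertex exactly once. The padded parts have size $n$. Adding isolated vertices does not change induced edge counts, so every part still induces at least $\gamma$ edges.

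\emph{Reverse direction.} Let $V'_1,\dots,V'_r$ be a feasible balanced partition of $G'$ and set $V_i:=V'_i\cap V(G)$. The removed vertices are isolated, so $E(G'[V'_i])=E(G[V_i])$, which gives $|E(G[V_i])|\ge\gamma$. The sets $V_1,\dots,V_r$ partition $V(G)$. When $\gamma\ge1$ each $V_i$ contains an edge and is therefore nonempty, so this is a valid partition into exactly $r$ parts.

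The only delicate point is this handling of degenerate cases, namely empty parts and the requirement that $r\le n$. The rest is routine. I expect to invoke the result with $\Pi$ ranging over the structural parameters used in the hardness section, since all of them are non-increasing under adding isolated vertices.
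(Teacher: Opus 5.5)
Your proposal is correct and matches the paper's proof: add $(r-1)n$ isolated vertices so that every part must have size $n$. In the forward direction you pad each \textsc{ECGP} part with isolated vertices, and in the reverse direction you intersect each part with $V(G)$.

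The degenerate-case discussion is extra care that the paper skips. Its definition of a partition does not require nonempty parts, so the plain construction already works there. If you do require nonempty parts, the case $r>n$ with $\gamma\le 0$ must also be sent to a fixed NO-instance, and your case split does not say this explicitly.
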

\begin{proof}
Let $(G,r,\gamma)$ be an instance of \textsc{ECGP}, and let $n:=|V(G)|$. Construct a graph $G'$ from $G$ by adding exactly $(r-1)n$ isolated vertices. Then $|V(G')|=rn$, so in the corresponding instance of \textsc{BECGP} every part must have size exactly $n$.

We claim that $(G,r,\gamma)$ is a YES-instance of \textsc{ECGP} if and only if $(G',r,\gamma)$ is a YES-instance of \textsc{BECGP}.

\medskip
\noindent
\emph{Forward direction.}
Suppose $(G,r,\gamma)$ is a YES-instance. Then there is a partition of $V(G)$ into $r$ parts $C_1,\dots,C_r$ such that $|E(G[C_j])| \ge \gamma$ for every $j \in [r]$. For each $j$, add enough isolated vertices of $G'$ to $C_j$ so that its size becomes exactly $n$. Since the total number of added isolated vertices is
$r n - \sum_{j=1}^r |C_j| = rn - n = (r-1)n$,
this is possible. As isolated vertices contribute no edges, each part still induces at least $\gamma$ edges. Hence $(G',r,\gamma)$ is a YES-instance of \textsc{BECGP}.

\medskip
\noindent
\emph{Reverse direction.}
Suppose $(G',r,\gamma)$ is a YES-instance of \textsc{BECGP}. Then $V(G')$ can be partitioned into $r$ parts $D_1,\dots,D_r$, each of size exactly $n$, such that $|E(G'[D_j])| \ge \gamma$ for every $j \in [r]$. Let $C_j := D_j \cap V(G)$ for each $j \in [r]$. Since all added vertices are isolated, they contribute no edges, and therefore $|E(G[C_j])| = |E(G'[D_j])| \ge \gamma$ for every $j \in [r]$. Moreover, the sets $C_1,\dots,C_r$ form a partition of $V(G)$. Hence $(G,r,\gamma)$ is a YES-instance of \textsc{ECGP}.

\medskip

The reduction runs in polynomial time and preserves the values of $r$ and $\gamma$. Since adding isolated vertices does not increase $\Pi$, the parameter $r+\Pi$ is preserved. Therefore this is a parameterized reduction.
\end{proof}

\begin{corollary}
\label{cor:buccp-hardness}
\textsc{BECGP} is W[1]-hard under the following parameterizations:
\begin{itemize}
    \item $r+\mathsf{cvd}$,
    \item $r+\mathsf{fes}+\mathsf{vds}+\mathsf{mw}$, even when $\mathsf{fes}=0$, $\mathsf{vds}=0$, and $\mathsf{mw}=2$,
    \item $r+\mathsf{fes}+\mathsf{vdp}+\Delta$, even when $\mathsf{fes}=0, \Delta=2$ and $\mathsf{vdp}=0$.
\end{itemize}
\end{corollary}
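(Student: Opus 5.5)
The plan is to apply Theorem~\ref{thm:uccp-to-buccp} once for each bullet, composing it with the matching hardness result for \textsc{ECGP}: Theorem~\ref{thm:k-cvd-hard} for the first bullet, Corollary~\ref{cor:vds-hard} for the second, and Corollary~\ref{cor:vdp-hard} for the third. Since parameterized reductions compose, \textsc{BECGP} is W[1]-hard for a parameter $r+\Pi$ as soon as two things hold. First, \textsc{ECGP} must be W[1]-hard for $r+\Pi$. Second, adding isolated vertices must not increase $\Pi$. Everything therefore comes down to checking that second property for each parameter involved, and checking that the claimed parameter values survive the padding with $(r-1)n$ isolated vertices.

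For $\mathsf{cvd}$ the check is immediate: an isolated vertex is a $K_1$, so a cluster deletion set for $G$ remains one for $G'$. For $\mathsf{fes}$, isolated vertices create no cycles, so $\mathsf{fes}=0$ is kept. For $\mathsf{vds}$ and $\mathsf{vdp}$, an isolated vertex is both a trivial star and a trivial path, so both values stay $0$. The maximum degree $\Delta$ is unchanged, so $\Delta=2$ still holds.

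Modular width needs a short argument. By the standard definition, $\mw(G'+K_1)\le\max\{\mw(G'),2\}$, because a disjoint union is a parallel module with two children. Iterating this, padding with isolated vertices gives a graph whose modular width is $\max\{\mw(G),2\}$, since the whole padded graph is one parallel composition. So $\mathsf{mw}=2$ is preserved.

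Because the statement concerns the sum $r+\mathsf{fes}+\mathsf{vds}+\mathsf{mw}$, I will apply Theorem~\ref{thm:uccp-to-buccp} with $\Pi$ equal to the corresponding sum of parameters. This sum is itself non-increasing under adding isolated vertices, because each summand is. The construction leaves $r$ unchanged, so the whole combined parameter is preserved. The point I expect to require the most care is this modular-width bound and stating it precisely; the remaining checks are routine.
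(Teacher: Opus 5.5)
Your proposal is correct and follows the paper's proof exactly: compose Theorem~\ref{thm:uccp-to-buccp} with Theorem~\ref{thm:k-cvd-hard}, Corollary~\ref{cor:vds-hard} and Corollary~\ref{cor:vdp-hard}, using that padding with isolated vertices does not increase $\mathsf{cvd}$, $\mathsf{fes}$, $\mathsf{vds}$, $\mathsf{vdp}$ or $\mathsf{mw}$. Your checks of $\Delta$ (which the paper omits) and of $\mathsf{mw}\le\max\{\mathsf{mw}(G),2\}$ are more careful than the paper, with two small fixes: write ``at most'' rather than ``is'' for that bound, and note that the summed parameter is not claimed to be non-increasing on all graphs; what you need, and have shown, is that it does not increase on the specific hardness instances, where $\mathsf{mw}=2$ already.
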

\begin{proof}
Each statement follows from the corresponding hardness result for \textsc{ECGP} (Theorem~\ref{thm:k-cvd-hard}, Corollary~\ref{cor:vds-hard}, and Corollary~\ref{cor:vdp-hard}) together with Theorem~\ref{thm:uccp-to-buccp}. 
In each case, the reduction preserves the parameters, since adding isolated vertices does not increase $\mathsf{cvd}$, $\mathsf{fes}$, $\mathsf{vds}$, $\mathsf{vdp}$, or $\mathsf{mw}$.
\end{proof}

\begin{theorem} \label{thm:cw-hard-gamma-three} 
Both \textsc{ECGP} and \textsc{BECGP} are W[1]-hard parameterized by clique-width, even when $\gamma=3$. 
\end{theorem}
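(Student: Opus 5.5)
The plan is a parameterized reduction from \textsc{Tight Unary Bin Packing}, parameterized by the number $k$ of bins. I will build an \textsc{ECGP} instance with $\gamma=3$ and a graph of clique-width $\mathcal{O}(k)$. The number of parts $r$ is polynomial in the input size, since the statement does not bound $r$. Once the \textsc{ECGP} case is done, the \textsc{BECGP} case follows from Theorem~\ref{thm:uccp-to-buccp}. Adding isolated vertices does not increase clique-width (one extra label suffices, and no label is even needed if we use an existing one) and leaves $\gamma$ unchanged. So the work is entirely in the \textsc{ECGP} reduction.

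\textbf{Construction.} For every bin $j\in[k]$ I create a set $Z_j$ of $B$ \emph{slot vertices}, all built with label $j$. For every item $i$ I create a gadget $Q_i$ containing $a_i$ \emph{item vertices}. The gadget is a path-like chain, and consecutive item vertices share auxiliary vertices. Each item vertex is completely joined to every slot set $Z_1,\dots,Z_k$ through a bounded number of ``port'' vertices. These joins are exactly the label-$i$/label-$j$ join operations of clique-width. Each gadget $Q_i$ is a small-width graph built with $\mathcal{O}(1)$ private labels, then relabelled and joined to the $k$ bin labels. This keeps the width at $\mathcal{O}(k)$. The intended solution has one part per (item vertex, slot) pair, of the form ``item vertex $+$ slot vertex $+$ local gadget vertices''. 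Each such part induces exactly $3$ edges. The remaining gadget vertices form triangles or $P_4$'s with $3$ edges each. I choose $r$ so that the total number of edges equals $3r$ plus a precomputed number $L_0$ of unavoidable losses.

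\textbf{Correctness via edge accounting.} This follows the scheme of Theorem~\ref{thm:k-cvd-hard}. First I compute $|E(G)|=3r+L_0$, where $L_0$ counts the edges that every partition must lose. For example, each port vertex is adjacent to all $k$ bins but can share a part with slots of at most one bin, so it loses at least $k-1$ bin-joins. Then every feasible partition loses exactly $L_0$ edges, and every part induces exactly $3$ edges. From this I derive structural claims in order:
\begin{itemize}
\item every part contains exactly one slot vertex (this uses $|Z|=\sum_j B=kB=\sum_i a_i$ together with the counting);
\item no chain edge inside $Q_i$ is lost;
\item hence all item vertices of $Q_i$ attach to slots of a \emph{single} bin $j$, which is the consistency step.
\end{itemize}
With consistency in hand, the sets $I_j=\{i : Q_i \text{ attached to } Z_j\}$ satisfy $\sum_{i\in I_j}a_i\le |Z_j|=B$. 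Tightness then forces equality, giving a packing. The forward direction is the direct construction described above.

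\textbf{Main obstacle.} The hard step is the consistency gadget. With $\gamma=3$, parts are tiny, so nothing local prevents consecutive item vertices of $Q_i$ from going to different bins. My first attempt is to make the chain alternate between item vertices and ``link'' vertices. Each link is adjacent to both of its neighbouring item vertices and is joined to all slot sets. I then calibrate the edge counts so that switching bins between consecutive item vertices forces one additional lost edge beyond $L_0$. That extra loss contradicts the tight edge budget. If such a local calibration turns out to be impossible with $\gamma=3$, my fallback is to reduce instead from a problem already known to be W[1]-hard by clique-width with small parts. Partitioning into fixed small structures, for instance an equitable or capacitated variant, is a natural candidate, and I would encode its parts directly as $3$-edge parts.
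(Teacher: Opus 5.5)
\textbf{There is a genuine gap.} The proposal never specifies a reduction. The gadget $Q_i$, the port and link vertices, the number of parts $r$ and the loss $L_0$ are all left undefined. The consistency step, which you yourself flag as the main obstacle, is never resolved.

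More seriously, the construction as sketched cannot encode bins at all. You join every item (or port) vertex uniformly to all of $Z_1,\dots,Z_k$ and give the slot sets no other structure. So all $kB$ slot vertices are pairwise twins, and the labels $1,\dots,k$ exist only in the clique-width expression, not in $G$. Hence nothing prevents a port vertex from sharing a part with slots of different bins. Likewise, no property of a partition can express that bin $j$ receives items of total size at most $B$.

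Your structural claims are therefore unsupported. The first one even contradicts your own intended solution, whose leftover triangles and $P_4$'s contain no slot vertex. The edge accounting of Theorem~\ref{thm:k-cvd-hard} works only because each bin there is a single vertex $z_j$, anchored by $\alpha$ pendant leaves, with $\gamma=\alpha+B$ large. With $\gamma=3$ no such anchoring is available, which is exactly why your consistency step stalls.

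\textbf{The missing idea} is that for $\gamma=3$ no gadget is needed. The paper reduces from $K_3$-Clique Partition (partitioning $V$ into triangles), which is already W[1]-hard parameterized by clique-width. It keeps the graph unchanged and sets $r:=|V|/3$ and $\gamma:=3$. A part inducing at least $3$ edges needs at least $3$ vertices, so $|V|=3r$ forces every part to have exactly $3$ vertices. A $3$-vertex set inducing $3$ edges is a triangle. Since every part then automatically has size $|V|/r$, this settles \textsc{ECGP} and \textsc{BECGP} at once. Clique-width is trivially preserved, and no appeal to Theorem~\ref{thm:uccp-to-buccp} is needed. Your fallback gestures in this direction, but without naming the source problem and giving this counting argument it does not constitute a proof.
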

\begin{proof} We reduce from \textsc{$K_3$-Clique Partition}, which is W[1]-hard parameterized by clique-width~\cite{bojikian_et_al:LIPIcs.ESA.2026.115}. 
Given a graph $G=(V,E)$ with $|V|$ divisible by $3$, construct the instance $(G,r,\gamma)$, where $r:=|V|/3$ and $\gamma:=3$. 
If $G$ admits a partition into triangles, then the same partition is feasible for both \textsc{ECGP} and \textsc{BECGP}, 
since every triangle induces exactly three edges.

Conversely, let $V_1,\ldots,V_r$ be a feasible \textsc{ECGP} partition. Since every part induces at least three edges, it contains at least three vertices. As $|V|=3r$, every part contains exactly three vertices. A graph on three vertices induces at least three edges if and only if it is a triangle. Hence $V_1,\ldots,V_r$ form a $K_3$-clique partition. The same argument immediately applies to \textsc{BECGP}, where every part has size $|V|/r=3$. 
The graph is unchanged, so its clique-width is preserved. Therefore, both problems are W[1]-hard parameterized by clique-width, even for $\gamma=3$. \end{proof}

\section{Hardness on Signed Graphs}

\begin{theorem}\label{thm:NP-hard signed ECGP}
The \textsc{SECGP} is NP-hard even on disjoint union of two cliques, even when $r=3$ and $\gamma=0$.
\end{theorem}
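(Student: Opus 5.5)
The plan is a polynomial reduction from a classical NP-hard problem to \textsc{SECGP} with $r=3$ and $\gamma=0$. The key observation is that a \emph{signed} clique can encode an arbitrary graph. Given a graph $H$ on $N$ vertices, we take a clique $K_1$ on $V(H)$. We make an edge positive if it is an edge of $H$ and negative otherwise. For $S\subseteq V(H)$, the value of $S$ is then
\[
\mathrm{val}(S)=|E^+(S)|-|E^-(S)| = 2|E(H[S])|-\binom{|S|}{2}.
\]
So a nonnegative value means that $H[S]$ has edge density at least $1/2$.

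The second clique $K_2$ is a calibration gadget. It has $t$ vertices and, as I currently plan it, all of its edges are negative. Since there are no edges between $K_1$ and $K_2$, a part $V_j$ containing $j_2$ vertices of $K_2$ has value
\[
\mathrm{val}(V_j\cap V(K_1))-\binom{j_2}{2}.
\]
By convexity of $\binom{\cdot}{2}$, the total penalty over the three parts is minimised when $K_2$ is split as evenly as possible. If $t$ is chosen so that $\binom{\lceil t/3\rceil}{2}$ equals a prescribed threshold $D$, then the $K_2$-vertices in each part impose a deficit of at least about $D$. This deficit must be paid for by the $K_1$-vertices of that part. The question therefore becomes whether $V(H)$ can be split into three sets, each with surplus $2|E(H[S])|-\binom{|S|}{2}$ large enough to cover the deficit of its share of $K_2$. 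The number of $K_2$-vertices per part need not be exactly $t/3$, so this accounting must range over all splits $j_1+j_2+j_3=t$.

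For the source problem I would take a structured NP-hard problem, such as \textsc{Clique} on regular graphs or a dense partition problem. The graph $H$ would be padded with a large positive block (a clique in $H$), so that two of the three parts are easily satisfiable from that block. The third part can then reach surplus $\ge D$ only if it contains a $k$-clique of the original graph. In the reverse direction, I would argue by convexity and a counting argument. The total value of the whole instance is fixed: it equals $\sum$ surpluses minus the total penalty. This forces the parts to have the intended shape, namely the padding block split in the intended way, $K_2$ split evenly, and the third part consisting of exactly $k$ original vertices inducing $\binom{k}{2}$ positive edges. The forward direction is a direct construction.

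The main obstacle is the calibration. I must choose the padding size, $t$, and $D$ so that every cheaper-looking alternative is excluded. Such alternatives include mixing padding vertices with original vertices, uneven splits of $K_2$, and parts with many $K_1$-vertices of moderate density. The tool here is exact counting: pick the parameters so that $\sum_j \mathrm{val}(V_j)$ is exactly $0$ in the intended solution. Then any deviation strictly loses value, in the same way as the lost-edge argument in Theorem~\ref{thm:k-cvd-hard}. If the all-negative $K_2$ cannot be tuned to work, my fallback is to give $K_2$ a mixed sign pattern. The simplest alternative is to drop the second clique's role entirely and place the whole construction inside $K_1$ (with $K_2$ a trivial component). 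Both of these keep the graph a disjoint union of two cliques.
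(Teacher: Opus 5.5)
\textbf{There is a genuine gap.} The reduction is never pinned down: the source problem, the padding size, $t$, $D$ and the signs between the padding block and the original vertices are all left open. Worse, the principle you rely on for the reverse direction does not hold in the signed setting.

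In Theorem~\ref{thm:k-cvd-hard} every edge is positive. So the total value of a partition is $|E(G)|$ minus the number of cut edges, and a tight budget forbids every avoidable cut. Here the total value is $|E^+|-|E^-|$, minus the cut positive edges, \emph{plus} the cut negative edges. A deviation that separates more negative pairs therefore raises the total. Knowing that your intended solution has total value $0$ says nothing about other partitions, which may well have larger total value.

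The \textsc{Clique}-based target also does not match your value function. The quantity $2|E(H[S])|-\binom{|S|}{2}$ does not detect $k$-cliques. For example, $K_{k+1}$ minus two disjoint edges contains no $K_k$, yet its value is $\binom{k+1}{2}-4\ge\binom{k}{2}$ for $k\ge 4$. In \textsc{SECGP} nothing bounds the size of a part, so such sets cannot be ruled out.

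Finally, the forward direction is not ``direct''. The non-clique vertices of a sparse source graph have very negative value together, so they must be absorbed by the padding parts. You have not shown how the padding parts can absorb them while still being unable to absorb the third part's share of $K_2$.

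\textbf{The missing idea.} Make the intended parts the sets of \emph{maximum} value for their size, namely cliques of your positive graph. Then calibrate the negative clique by matching sizes part by part, not by an even split. Concretely, apply your encoding to the complement of a \textsc{3-Coloring} instance $G$ on $n$ vertices, so edges of $G$ are negative and non-edges positive. Let the negative clique have exactly $n$ vertices; this is the paper's construction. A part with $x_i$ vertices of $V(G)$ inducing $e_i$ edges of $G$, and $n_i$ vertices of the negative clique, has value $\binom{x_i}{2}-2e_i-\binom{n_i}{2}$.
\begin{itemize}
\item Forward direction: pair each color class with equally many negative-clique vertices.
\item Reverse direction: combine $\binom{x_i}{2}\ge\binom{n_i}{2}$ with $\sum_i x_i=\sum_i n_i=n$ to force $x_i=n_i$ and $e_i=0$.
\end{itemize}

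\textbf{A caution if you write this up.} The inequality $\binom{x}{2}\ge\binom{n_i}{2}$ gives $x\ge n_i$ only when $n_i\ge 2$. The paper's proof glosses over this, and it matters. Take $G=K_4\cup 4K_1$, with the $K_4$ on $\{a,b,c,h\}$, and the negative clique on $k_1,\dots,k_8$. The parts $\{k_1\}$, $(V(G)\setminus\{h\})\cup\{k_2,\dots,k_7\}$ and $\{h,k_8\}$ have values $0$, $18-3-15=0$ and $0$, although $G$ is not $3$-colorable.

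One repair is to first replace every vertex of $G$ by an independent set of size $m\ge 9n+1$, which preserves $3$-colorability.
\begin{itemize}
\item Parts with $n_i\le 1$ then leave at most two units of slack in total, so every part induces at most $nm$ edges of the blown-up graph.
\item Color each vertex of $G$ by a part receiving at least $m/3$ of its copies.
\item If two adjacent vertices received the same color, that part would contain at least $m^2/9>nm$ such edges, a contradiction.
\end{itemize}
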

\begin{proof}
We give a polynomial-time reduction from \textsc{3-Coloring}.
Let $G=(V,E)$ be an instance of \textsc{3-Coloring} with $|V|=n$. We construct a signed graph $H$ as follows.

\begin{itemize}
    \item Take a copy of $G$ on vertex set $V$, and color every edge of $G$ red.
    \item Add all missing edges on $V$, and color them green. Thus $V$ induces a clique in $H$.
    \item Add a new clique $K$ on $n$ vertices, and color all its edges red.
\end{itemize}

Thus $H$ is the disjoint union of two cliques. We set $r:=3$ and $\gamma:=0$.
We claim that $G$ is $3$-colorable if and only if $(H,r,\gamma)$ is a YES-instance.

\medskip
\noindent
\emph{Forward direction.}
Suppose that $G$ is $3$-colorable. Let $V_1,V_2,V_3$ be a partition of $V(G)$ into three independent sets, and let $n_i:=|V_i|$ for each $i \in [3]$, so $n_1+n_2+n_3=n$.
Partition the red clique $K$ into three parts $K_1,K_2,K_3$ such that $|K_i|=n_i$ for every $i \in [3]$, and define $D_i := V_i \cup K_i$ for every $i \in [3]$.

Since $V_i$ is an independent set in $G$, every pair of vertices of $V_i$ is joined by a green edge in $H$, so $V_i$ contributes exactly $\binom{n_i}{2}$ green edges. The set $K_i$ contributes exactly $\binom{n_i}{2}$ red edges, and there are no edges between $V$ and $K$. Hence the edge-count of $D_i$ is $\binom{n_i}{2} - \binom{n_i}{2} = 0$, so each part is feasible.

\medskip
\noindent
\emph{Reverse direction.}
Suppose that $(H,r,\gamma)$ is a YES-instance, and let $D_1,D_2,D_3$ be a feasible partition.
First consider the red clique $K$. Let $n_i := |K \cap D_i|$ for each $i \in [3]$. Then $n_1+n_2+n_3=n$.

\begin{claim}
For each $i \in [3]$, the part $D_i$ contains exactly $n_i$ vertices from $V$, that is, $|V \cap D_i| = n_i$.
\end{claim}

\begin{claimproof}
Let $x_i := |V \cap D_i|$. Since the vertices of $V$ induce a clique in $H$, the number of green edges inside $V \cap D_i$ is at most $\binom{x_i}{2}$. The number of red edges inside $K \cap D_i$ is exactly $\binom{n_i}{2}$, and there are no edges between $V$ and $K$.
Since $D_i$ has edge-count at least $0$, we must have $\binom{x_i}{2} \ge \binom{n_i}{2}$, and hence $x_i \ge n_i$.
Summing over all $i$, we get $\sum_i x_i = n = \sum_i n_i$, so it follows that $x_i = n_i$ for all $i \in [3]$.
\end{claimproof}

\noindent Thus each part $D_i$ contains exactly $n_i$ vertices from $V$ and exactly $n_i$ vertices from $K$.
We now show that each set $V \cap D_i$ induces no edge of $G$.

\begin{claim}
For each $i \in [3]$, the set $V \cap D_i$ is an independent set in $G$.
\end{claim}

\begin{claimproof}
Let $e_i$ be the number of edges of $G$ induced by $V \cap D_i$. Then the number of green edges inside $V \cap D_i$ is $\binom{n_i}{2} - e_i$, while the number of red edges inside $D_i$ is $\binom{n_i}{2} + e_i$ (coming from $K$ and from $G$).
Hence the edge-count of $D_i$ is $(\binom{n_i}{2} - e_i) - (\binom{n_i}{2} + e_i) = -2e_i$, which must be at least $0$. Therefore $e_i=0$.
\end{claimproof}

Thus each $V \cap D_i$ is an independent set, and $(V \cap D_1, V \cap D_2, V \cap D_3)$ is a proper $3$-coloring of $G$.
This proves that $G$ is $3$-colorable if and only if $(H,r,\gamma)$ is a YES-instance. Since \textsc{3-Coloring} is a well-known NP-hard problem, the result follows.
\end{proof}

If we replace the source problem \textsc{$3$-Coloring} with the \textsc{Equitable $3$-Coloring} in the reduction then we get NP-hardness for \textsc{SBECGP}.

\begin{corollary}\label{cor:NP-hard signed BECGP}
The \textsc{SBECGP} is NP-hard even when $r=3$ and $\gamma=0$ when the input graph is a disjoint union of two cliques.
\end{corollary}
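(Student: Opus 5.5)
The plan is to reuse the construction in the proof of Theorem~\ref{thm:NP-hard signed ECGP} unchanged, but reduce from \textsc{Equitable 3-Coloring}. This problem is NP-hard; we may assume $3 \mid n$, and it asks for a partition of $V$ into three independent sets of size exactly $n/3$ each. From $G$ we build the signed graph $H$ as before. The vertex set $V$ induces a clique whose edges are red on $E(G)$ and green otherwise, and we add a disjoint red clique $K$ on $n$ vertices. We set $r:=3$ and $\gamma:=0$. Then $|V(H)|=2n$ is divisible by $3$, every part of a balanced solution must have size $2n/3$, and $H$ is a disjoint union of two cliques.

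\emph{Forward direction.} Take an equitable coloring $V_1,V_2,V_3$ with $|V_i|=n/3$ and split $K$ into $K_1,K_2,K_3$ of size $n/3$ each. Setting $D_i:=V_i\cup K_i$ gives parts of size $2n/3$. Each part has $\binom{n/3}{2}$ green edges and $\binom{n/3}{2}$ red edges, so its edge-count is $0 \ge \gamma$. This is exactly the computation already done in the proof of Theorem~\ref{thm:NP-hard signed ECGP}.

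\emph{Reverse direction.} A feasible balanced partition is in particular a feasible unbalanced one, so the two claims in the proof of Theorem~\ref{thm:NP-hard signed ECGP} apply verbatim. First, $|V\cap D_i|=|K\cap D_i|=:n_i$; second, each $V\cap D_i$ is independent in $G$. The new ingredient is the balance constraint $|D_i|=2n_i=2n/3$, which forces $n_i=n/3$. Hence $(V\cap D_i)_{i\in[3]}$ is an equitable $3$-coloring of $G$.

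There is no real obstacle in this argument. The only point to check is that the first claim, which deduces $x_i\ge n_i$ from $\binom{x_i}{2}\ge\binom{n_i}{2}$, still holds in the balanced setting. It does, since that claim never used the part sizes. The degenerate cases $n_i\le 1$ are harmless, because balance forces $n_i=n/3$ anyway. It also helps to be precise about the source problem: equitable $3$-coloring with $3\mid n$ is NP-hard because it generalizes \textsc{3-Coloring} via padding with isolated vertices.
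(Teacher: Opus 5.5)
Your construction and overall plan coincide with the paper's, which just says to rerun the reduction of Theorem~\ref{thm:NP-hard signed ECGP} from \textsc{Equitable 3-Coloring}. The problem is the one step you declare routine.

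The first claim of that proof ($|V\cap D_i|=|K\cap D_i|$) is false for unbalanced partitions. So you cannot import it ``verbatim'' on the grounds that it never uses part sizes; that omission is exactly its defect. The inference $\binom{x_i}{2}\ge\binom{n_i}{2}\Rightarrow x_i\ge n_i$ fails for $(n_i,x_i)=(1,0)$, and this case actually occurs. Take $G=K_4$ plus four isolated vertices. Put all of $V$ and six vertices of $K$ into $D_1$, and make the other two vertices of $K$ singleton parts. Then $D_1$ has $22$ green and $6+15=21$ red edges, so all three parts have edge-count at least $0$, yet $|V\cap D_2|=0\neq 1=|K\cap D_2|$. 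Since this $G$ is not $3$-colorable, the example even shows that the unbalanced reduction of Theorem~\ref{thm:NP-hard signed ECGP} is unsound. Your remark that the degenerate case is harmless ``because balance forces $n_i=n/3$'' is circular. Balance only gives $x_i+n_i=2n/3$, and $n_i=n/3$ comes from $x_i=n_i$, which is the very claim in question.

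The fix is to use balance inside the claim. Let $e_i$ be the number of edges of $G$ inside $V\cap D_i$, so feasibility of $D_i$ reads $\binom{x_i}{2}-2e_i-\binom{n_i}{2}\ge 0$.
\begin{itemize}
\item Suppose $x_i<n_i$. Then $x_i+n_i=2n/3$ gives $n_i>n/3\ge 1$ (for $n\ge 3$), hence $n_i\ge 2$. This yields $\binom{x_i}{2}\le\binom{n_i-1}{2}<\binom{n_i}{2}$, contradicting feasibility. Equivalently, the only bad case $(n_i,x_i)=(1,0)$ would give $|D_i|=1<2n/3$.
\item Hence $x_i\ge n_i$, so $x_i\ge n/3$ for every $i$. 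Since $\sum_i x_i=n$, this forces $x_i=n_i=n/3$.
\item Feasibility then gives $e_i=0$, so $(V\cap D_i)_{i\in[3]}$ is an equitable $3$-coloring.
\end{itemize}
With this argument replacing the appeal to the unbalanced claim, your proof is complete. The example above shows that the balance constraint is genuinely needed at this step, not only afterwards.
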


\noindent Next, we show that the \textsc{SBECGP} remains hard even under strong structural restrictions.

\begin{theorem}
\label{thm:signed-buccp-tw}
The \textsc{SBECGP} is W[1]-hard when parameterized by $\mathsf{tw}+r+\gamma$, even when $\gamma=0$.
\end{theorem}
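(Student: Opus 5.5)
We will give a parameterized reduction from \textsc{Tight Unary Bin Packing}, which is W[1]-hard parameterized by the number $r$ of bins (Jansen et al.). Given $a_1,\ldots,a_n$, $B$, and $r$ with $\sum_i a_i=rB$, the goal is a signed graph $H$ of constant treewidth, with $r$ parts and $\gamma=0$, where the balance requirement forces bin sizes and the sign constraint forces bin \emph{weights} to be exactly $B$. The approach is to encode each item so that the number of vertices it contributes to a part and the edge-count it contributes are tied to $a_i$ in a linear way.

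\textbf{Construction.} For each item $i$, create a star $S_i$ with center $c_i$ and $a_i$ leaves, all edges \emph{positive}. This contributes $a_i$ positive edges and $a_i+1$ vertices. Also create $r$ bin vertices $z_1,\ldots,z_r$. Attach to each $z_j$ exactly $B$ pendant leaves joined by \emph{negative} edges. To make part sizes rigid, pad with isolated vertices so that $n/r$ equals the required size $B+1+\sum_{i\in X_j}(a_i+1)$. Since $\sum_{i\in X_j}(a_i+1)$ depends on the number of items in the bin, we first normalize the instance: add a large constant $M$ to every item and set $B:=B+kM$, where $k=n/r$ items go to each bin. Forcing every bin to contain exactly $k$ items is standard and keeps the problem W[1]-hard in $r$. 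Every component is then a star, so $\tw\le 1$, and the parameter is $r+1$.

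\textbf{Correctness.} Forward: put $z_j$, its negative pendants, and the stars of bin $j$ into part $j$. The edge-count is then $\sum a_i-B=0$, and the sizes match. Backward: the total positive edge count is $rB$ and the total negative count is $rB$, so every part has edge-count exactly $0$, no positive edge is cut, and hence each star lies wholly in one part. Negative edges can be cut, which only helps a part. This is the crux. I would fix it by counting: if a negative pendant edge is cut, the part receiving the isolated pendant loses nothing, but the positive total must still cover the negative total in every part.

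\textbf{Main obstacle.} The difficulty is the backward direction, namely preventing the solution from cutting negative edges to gain slack. I would try first to make the negative edges uncuttable, by replacing each $z_j$ with its negative pendants by a negative clique or by a structure whose split costs more than it saves, while keeping treewidth bounded. Alternatively, I would use the balance constraint: the star sizes determine how many vertices each part has, and the size equations then pin down which negative gadgets sit together. Concretely, I would make each $z_j$-gadget a negative star with huge size $L$ and use vertex counts modulo $L$ to force each part to contain exactly one whole gadget. If cutting still cannot be prevented, I would instead reduce from the \textsc{ECGP} hardness of Theorem~\ref{thm:general-hardness-class} via Theorem~\ref{thm:uccp-to-buccp}, with $\gamma$ shifted to $0$ by a per-part negative gadget of exactly $B$ negative edges.
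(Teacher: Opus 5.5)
There is a genuine gap, and it is exactly the one you flag: in the signed setting, cutting a negative edge costs nothing. It only raises the edge-count of the parts involved. As a result your construction is always a YES-instance. For $r\ge 2$, put each bin vertex $z_j$ in part $j$ and all of its $B$ negative pendants in part $j+1 \pmod r$. Then fill the parts up to $n/r$ with the remaining vertices arbitrarily. No negative edge is internal to any part, so every part has edge-count $\ge 0=\gamma$ however the item stars are split.

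So the step ``total positive $=rB=$ total negative, hence every part has edge-count exactly $0$'' fails, and the backward direction collapses. None of your repairs closes this:
\begin{itemize}
\item No structure can make a split ``cost more than it saves'', because splitting negative edges never costs anything.
\item Balance only constrains part sizes. Individual leaves of a negative star can be moved freely, so counting vertices modulo $L$ cannot force a whole gadget into one part.
\item The fallback via Theorem~\ref{thm:uccp-to-buccp} with a per-part negative gadget of $B$ edges has the same defect.
\item Isolated padding also removes the rigidity of the size constraint, which any balance-based argument needs.
\end{itemize}

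The missing idea is a negative gadget whose negative edges cannot all be separated by any partition into $r$ parts. The paper uses $Br$ disjoint red copies of $K_{r+1}$. By pigeonhole, every $r$-partition leaves at least one red edge inside a part from each copy. Hence the parts contain at least $Br$ red edges in total, while the green edges total only $\sum_i a_i=Br$. With $\gamma=0$ this forces equality everywhere:
\begin{itemize}
\item $g_j=q_j$ for every part;
\item no green edge is cut, so connected item gadgets stay whole;
\item each $K_{r+1}$ splits as $(2,1,\ldots,1)$, so part $j$ contains exactly $Br+q_j$ red-gadget vertices.
\end{itemize}
The item gadgets are green cycles, so each has as many vertices as edges; this makes your $+M$ normalization unnecessary. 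There is no padding, so the balance condition $g_j+Br+q_j=B(r+2)$ yields $g_j=B$.

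The price is treewidth $r$ instead of $1$, which is harmless because $r$ is part of the parameter. Any gadget in which every partition into $r$ parts (ignoring sizes) has an internal negative edge must have chromatic number greater than $r$, hence treewidth at least $r$. So this pigeonhole mechanism cannot reach treewidth $1$, but the statement does not require it to.
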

\begin{proof}
We give a parameterized reduction from the tight version of \textsc{Unary Bin Packing}, parameterized by the number $r$ of bins. Thus the input consists of positive integers $a_1,\dots,a_n$, a bin capacity $B$, and an integer $r$, such that $\sum_{i=1}^n a_i = Br$. As usual, we may assume that $a_i < B$ for every $i \in [n]$.

We construct a signed graph $G'$ as follows.

\begin{itemize}
    \item For each item $a_i$, add a cycle $C_i$ on $a_i$ vertices, and color all edges of $C_i$ green.
    \item Add exactly $Br$ pairwise vertex-disjoint copies of the clique $K_{r+1}$, and color all their edges red.
\end{itemize}

We ask whether $V(G')$ can be partitioned into exactly $r$ parts of equal size, each having edge-count at least $\gamma:=0$.
Let us first determine the common part size. The total number of vertices of $G'$ is
\[
\sum_{i=1}^n a_i + Br(r+1) = Br + Br(r+1) = Br(r+2),
\]
and therefore every part must have size exactly
$\frac{|V(G')|}{r} = B(r+2)$.
We claim that the tight \textsc{Unary Bin Packing} instance is a YES-instance if and only if the constructed signed \textsc{BECGP} instance is a YES-instance.

\medskip
\noindent
\emph{Forward direction.}
Suppose that the items can be packed into $r$ bins $X_1,\dots,X_r$ such that $\sum_{i \in X_j} a_i = B$ for every $j \in [r]$.
For each $j \in [r]$, place in part $D_j$ all vertices of the cycles $C_i$ with $i \in X_j$. Since each cycle $C_i$ has exactly $a_i$ green edges, the total number of green edges in $D_j$ is exactly
$\sum_{i \in X_j} a_i = B$.

Now distribute the $Br$ red cliques as follows. For each part $D_j$, designate exactly $B$ red cliques to contribute one red edge to $D_j$. In each such clique, place two vertices in $D_j$ and one vertex in every other part. Then that clique contributes exactly one red edge, and this edge lies in $D_j$.

Thus every part $D_j$ receives exactly $B$ red edges. Moreover, from each of the $Br$ red cliques, part $D_j$ receives at least one vertex, and from the $B$ cliques assigned to $D_j$ it receives one additional vertex. Hence the number of red-gadget vertices in $D_j$ is $Br + B$.
The number of green-gadget vertices in $D_j$ is
$\sum_{i \in X_j} a_i = B$.
Therefore
$|D_j| = (Br+B) + B = B(r+2)$,
so the partition is balanced. Finally, the edge-count of $D_j$ is
$B - B = 0$.
Hence the constructed signed \textsc{BECGP} instance is a YES-instance.

\medskip
\noindent
\emph{Reverse direction.}
Suppose that the constructed signed \textsc{BECGP} instance is a YES-instance, and let $D_1,\dots,D_r$ be a balanced feasible partition of $V(G')$.

For each $j \in [r]$, let $g_j$ denote the number of green edges induced by $D_j$, and let $q_j$ denote the number of red edges induced by $D_j$. Since the edge-count threshold is $0$, we have
\[
g_j - q_j \ge 0
\qquad\text{for every } j \in [r].
\]

We now establish several claims.

\begin{claim}
\label{cl:totals}
We have $\sum_{j=1}^r g_j = Br$, $\sum_{j=1}^r q_j = Br$, and $g_j=q_j$ for every $j \in [r]$.
\end{claim}

\begin{claimproof}
The total number of green edges in the whole graph is $\sum_{i=1}^n a_i = Br$,
hence
$\sum_{j=1}^r g_j \le Br$.
On the other hand, each red clique is a copy of $K_{r+1}$, and therefore is not $r$-colorable. Hence in any partition into $r$ parts, each such clique contributes at least one red edge to some part. Since there are exactly $Br$ red cliques, $\sum_{j=1}^r q_j \ge Br$.
Now $\sum_{j=1}^r (g_j-q_j) \ge 0$
because each term is nonnegative. Therefore
\[
0 \le \sum_{j=1}^r (g_j-q_j)
   = \sum_{j=1}^r g_j - \sum_{j=1}^r q_j
   \le Br - Br = 0.
\]
Hence equality holds throughout. In particular,
$\sum_{j=1}^r g_j = Br$,
$\sum_{j=1}^r q_j = Br$,
$g_j-q_j=0$ for every  $j \in [r]$.
Thus $g_j=q_j$ for every $j \in [r]$.
\end{claimproof}

\begin{claim}
\label{cl:cycles-whole}
Every green cycle $C_i$ is entirely contained in a single part.
\end{claim}

\begin{claimproof}
Suppose that some cycle $C_i$ intersects at least two parts. Then at least one edge of $C_i$ has endpoints in different parts, and therefore this green edge is not counted in any $g_j$. Hence
\[
\sum_{j=1}^r g_j < \sum_{i=1}^n a_i = Br,
\]
contradicting Claim~\ref{cl:totals}. Therefore every cycle $C_i$ is entirely contained in one cluster.
\end{claimproof}

\begin{claim}
\label{cl:red-clique-shape}
Every red clique is distributed among the $r$ parts in such a way that exactly one part receives two vertices of the clique, and every other part receives exactly one vertex.
\end{claim}

\begin{claimproof}
Fix one red clique $Q \cong K_{r+1}$. Since $Q$ contributes at least one red edge to the partition and Claim~\ref{cl:totals} implies that the total number of red edges contributed by all red cliques is exactly $Br$, it follows that each red clique contributes exactly one red edge.

Let $s_1,\dots,s_r$ be the numbers of vertices of $Q$ placed in the $r$ parts. Then
\[
s_1+\cdots+s_r = r+1,
\qquad s_j \ge 0.
\]
The number of red edges contributed by $Q$ is
$\sum_{j=1}^r \binom{s_j}{2}$.
Since this sum is exactly $1$, there is exactly one index $j$ with $s_j=2$, and for every other index we have $s_j \in \{0,1\}$. As the total number of vertices is $r+1$, it follows that in fact one part gets two vertices and every other part gets exactly one vertex.
\end{claimproof}

\begin{claim}
\label{cl:red-vertices-count}
For every part $D_j$, the number of red-gadget vertices contained in $D_j$ is exactly $Br + q_j$.
\end{claim}

\begin{claimproof}
There are $Br$ red cliques. By Claim~\ref{cl:red-clique-shape}, from each red clique part $D_j$ receives at least one vertex. Hence $D_j$ receives $Br$ baseline red-gadget vertices.

In addition, whenever a red clique contributes its unique red edge to $D_j$, part $D_j$ receives one extra vertex from that clique. Since $q_j$ is exactly the number of red cliques whose unique red edge lies in $D_j$, the number of extra red-gadget vertices received by $D_j$ is exactly $q_j$.
Therefore the total number of red-gadget vertices in $D_j$ is $Br+q_j$.
\end{claimproof}

We now finish the proof.

By Claim~\ref{cl:cycles-whole}, the green edges in a part $D_j$ come exactly from whole cycles placed inside $D_j$. Since a cycle $C_i$ has as many vertices as green edges, the number of green-gadget vertices in $D_j$ is exactly $g_j$.
By Claim~\ref{cl:red-vertices-count}, the number of red-gadget vertices in $D_j$ is exactly $Br+q_j$.
Hence the total size of $D_j$ is
$|D_j| = g_j + (Br+q_j$).
Since the partition is balanced, every part has size exactly $B(r+2)$, and therefore
$g_j + Br + q_j = B(r+2)$.
Using Claim~\ref{cl:totals}, namely $g_j=q_j$, we obtain
$Br + 2g_j = B(r+2)$,
and thus $g_j = B$
for every  $j \in [r]$.

Since each green cycle is intact and contributes exactly its item size in green edges, it follows that the cycles assigned to part $D_j$ correspond to items of total size exactly $B$. Therefore the items are partitioned into $r$ bins of capacity exactly $B$, and the original tight \textsc{Unary Bin Packing} instance is a YES-instance.
This proves correctness of the reduction.

Finally, the constructed graph is a disjoint union of cycles and copies of $K_{r+1}$. Hence its treewidth is
$\max\{2,r\} = r$.

The reduction preserves the parameter $r$, and sets $\gamma=0$. Therefore this is a parameterized reduction showing that \textsc{SBECGP} is W[1]-hard when parameterized by $\mathsf{tw}+r+\gamma$, even when $\gamma=0$.
\end{proof}

\begin{corollary}
\label{cor:signed-buccp-td}
The \textsc{SBECGP} is W[1]-hard when parameterized by $\mathsf{td}+r$, even when $=0$.
\end{corollary}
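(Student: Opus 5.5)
The corollary should follow from the construction in the proof of Theorem~\ref{thm:signed-buccp-tw}, so I will reuse that reduction unchanged and only bound the treedepth of the constructed graph. Starting from a tight \textsc{Unary Bin Packing} instance with $r$ bins, the reduction builds the signed graph $G'$: one green cycle $C_i$ on $a_i$ vertices for each item, together with $Br$ disjoint red copies of $K_{r+1}$. It sets $\gamma=0$. The correctness of this reduction has already been proved, so nothing needs to be redone there.

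The difficulty is the treedepth of the green cycles. The treedepth of a disjoint union is the maximum over its components, and $\td(K_{r+1})=r+1$. A cycle on $a_i$ vertices, however, has treedepth $\Theta(\log a_i)$, and since the $a_i$ are encoded in unary this is not bounded by any function of $r$. So the first step is to change the item gadgets. I will replace each green cycle $C_i$ by a green star $S_i$ with $a_i$ edges, which has $a_i+1$ vertices and treedepth $2$.

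This change breaks the vertex count. Each gadget now has one more vertex than it has edges, so in the final size computation part $D_j$ receives $g_j+t_j$ green-gadget vertices, where $t_j$ is the number of stars assigned to $D_j$, rather than $g_j$. There are two ways to repair this.

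The first is to re-balance the red side, so that each red gadget contributes exactly one red edge. This can be done with a red $K_{r+1}$ to which pendant red structure is attached, so that vertex counts again track edge counts.

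The second, simpler option is to first pad the Bin Packing instance so that every bin holds the same number of items. This is a standard W[1]-hard variant: shift every $a_i$ by a large constant $M$ and set $B'=B+kM$, which forces exactly $k$ items per bin. After padding, the green vertices in $D_j$ number $g_j+k$. The total vertex count becomes $Br+nk'+\dots$, and the balance equation then reads $g_j+k+Br+q_j=\text{const}$. Together with $g_j=q_j$ this forces $g_j=B'$. The claim that each star lies wholly inside one part still goes through, because a split star loses a green edge, contradicting Claim~\ref{cl:totals}.

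The main obstacle is verifying that this padded equal-cardinality variant remains W[1]-hard parameterized by $r$ with unary encoding, and getting the constants in the balance equation exactly right. I would try the shift-by-$M$ argument first, taking $M>\sum_i a_i$ so that any bin of total weight $B+kM$ must contain exactly $k$ items.

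Once these points are settled, $G'$ is a disjoint union of stars and $K_{r+1}$'s. Hence $\td(G')\le r+1$, $\gamma=0$, and the parameter becomes $\td+r\le 2r+1$, which gives the corollary.
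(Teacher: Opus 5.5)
Your diagnosis is correct: cycles have treedepth $1+\lceil\log_2 a_i\rceil$, which is unbounded in $r$. Your star-based route can also be made to work, but the paper takes a much shorter path.

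\emph{The paper's route.} The paper observes that the proof of Theorem~\ref{thm:signed-buccp-tw} uses only two properties of $C_i$: it is connected, and $|V(C_i)|=|E(C_i)|=a_i$. It therefore replaces $C_i$ by a triangle with $a_i-3$ pendant leaves attached to one vertex. This graph has the same two properties and treedepth at most $3$. Every claim then goes through verbatim, including the size identity $|D_j|=g_j+Br+q_j$, and $\td\le r+1$.

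\emph{Comparison.} Your stars have one surplus vertex per item, and that single choice is what forces you into equal-cardinality padding. The only thing stars buy you is that they exist for every $a_i\ge 1$. Unicyclic gadgets, like the cycles in the original theorem, need $a_i\ge 3$; the paper leaves this implicit, and it is fixed by scaling all sizes and $B$ by $3$.

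If you keep the star route, two steps of your sketch need repair.

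\emph{Dummy items.} Shifting every size by $M$ does not by itself preserve YES-instances. A packing of the original instance may put different numbers of items into different bins. You must also add dummy items, e.g.\ $(r-1)n$ items of original size $0$ (size $M$ after shifting), and take $k:=n$, $B':=B+nM$, $M>rB$.

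\emph{Deriving the star count per part.} In the reverse direction you may not assume that each part $D_j$ contains exactly $k$ stars. From balance and $g_j=q_j$ you only get $2g_j+t_j=2B'+k$. Write $g_j=t_jM+s_j$, where $s_j\le rB$ is the total unshifted size of the items whose stars lie in $D_j$. The equation becomes $t_j(2M+1)+2s_j=k(2M+1)+2B$. Since $0\le 2s_j,\,2B<2M+1$, uniqueness of division by $2M+1$ yields $t_j=k$ and $s_j=B$.

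With these two fixes the argument is complete, and your alternative repair on the red side is unnecessary.
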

\begin{proof}
In the proof of Theorem~\ref{thm:signed-buccp-tw}, the only properties of the cycle $C_i$ used for an item of size $a_i$ are that it is connected and satisfies $|V(C_i)|=|E(C_i)|=a_i$. Therefore, the same proof remains valid if each cycle $C_i$ is replaced by any connected graph $H_{a_i}$ with $|V(H_{a_i})|=|E(H_{a_i})|=a_i$.
For every integer $s \ge 3$, let $H_s$ be the graph obtained from a triangle by attaching $s-3$ leaves to one of its vertices. Then $H_s$ is connected, $|V(H_s)|=|E(H_s)|=s$, and $\mathsf{td}(H_s)\le 3$.

In the proof of Theorem~\ref{thm:signed-buccp-tw}, we may replace every cycle $C_i$ by $H_{a_i}$. The resulting graph is a disjoint union of copies of $K_{r+1}$ and graphs of treedepth at most $3$. Hence the whole graph has treedepth at most $\max\{r+1,3\}=r+1$.
Thus the same reduction proves W[1]-hardness parameterized by $\mathsf{td}+r+\gamma$, even when $\gamma=0$.
\end{proof}

\section{Conclusion}

In this paper, we introduced and studied the computational and parameterized complexity of the \textsc{Edge-Constrained Graph Partitioning Problem} and its balanced and signed variants.
These problems provide a natural framework for coalition formation in which the effectiveness of each group is measured locally by the number of interactions induced by its members. 
Unlike many classical clustering and coalition-formation models that optimize a global objective, our formulations require every individual part to satisfy a prescribed minimum utility threshold.

We first established that both \textsc{ECGP} and \textsc{BECGP} remain NP-hard under several restricted settings. We then obtained polynomial kernels for both problems when parameterized by $r+u$, using the Expansion Lemma as the main ingredient. We also developed fixed-parameter algorithms for several structural parameterizations, including vertex deletion distance to a clique and vertex integrity. 
For \textsc{ECGP}, we obtained additional fixed-parameter algorithms for maximum leaf number, cluster vertex deletion number plus $u$, vertex deletion distance to stars plus $u$, vertex deletion distance to paths plus $u$, and treewidth plus $r+u$. 

In contrast, we proved that the problems remain $W[1]$-hard under several other structural parameterizations, including combinations involving the number of parts and deletion distances to clusters, stars, and paths. For signed graphs, we showed that the problems remain computationally hard even under strong restrictions on the input graph and for fixed values of the natural parameters.
In particular, the signed variants are NP-hard on disjoint unions of two cliques, and the balanced signed variant is $W[1]$-hard when parameterized by treedepth plus $r+u$, even when the utility threshold is zero. These results demonstrate that allowing both positive and negative interactions substantially changes the complexity of the problem. 

Several questions remain open. In particular, the parameterized complexity of \textsc{ECGP} and \textsc{BECGP} with respect to neighborhood diversity deserves further investigation. The main obstacle is that counting the edges induced by a part naturally gives rise to quadratic terms involving the numbers of vertices selected from different neighborhood types. 
Although we overcome a related difficulty for vertex deletion distance to a clique by guessing the relevant quantities from bounded ranges, this approach does not extend directly to multiple neighborhood types. 
It would also be interesting to determine whether both the problems admit FPT algorithm when parameterized by $\tw+u$. In fact, the problem is open even when $\tw(G)=1$, that is, when input graph is forest.\\

\noindent \textbf{AI Declaration.} During the preparation of this manuscript, Microsoft Copilot and Google Gemini were used to assist with language editing, the presentation of the results, and the construction of figures. All AI-assisted content was subsequently reviewed and verified by the authors, who assume full responsibility for the content of the manuscript.

\bibliography{lipics-v2021-sample-article}
\end{document}